\begin{document}
	%
	\title{The impact of bundling licensed and unlicensed wireless service}
	
	\author{\IEEEauthorblockN{Xu Wang and Randall A. Berry}
	\IEEEauthorblockA{EECS Department, Northwestern University, Evanston, IL 60208\\
			Email: xuwang2019@u.northwestern.edu, rberry@eecs.northwestern.edu}\thanks{This research was supported in part by NSF grants AST-1343381, AST-1547328 and CNS-1701921}
	}


	\maketitle

	\begin{abstract}
		Unlicensed spectrum has been viewed as a way to increase competition in wireless access and promote innovation in new technologies and business models. However, several recent papers have shown that  the openness of such spectrum can also lead to it becoming over congested when used by competing wireless service providers (SPs). This in turn can result in the SPs making no profit and may deter them from entering the market.   However, this prior work assumes that unlicensed access is a separate service from any service offered using licensed spectrum. Here, we instead consider the more common case were service providers bundle both licensed and unlicensed spectrum as a single service and offer this with a single price. We analyze a model for such a market and show that in this case SPs are able to gain higher profit than the case without bundling. It is also possible to get higher social welfare with bundling. Moreover, we explore the case where SPs are allowed to manage the customers' average percentage of time they receive service on unlicensed spectrum and characterize the social welfare gap between the profit maximizing and social welfare maximizing setting.
		
		
	\end{abstract}
	

	%
	\IEEEpeerreviewmaketitle

	\newtheorem{definition}{Definition}
	\newtheorem{proposition}{Proposition}
	\newtheorem{corollary}{Corollary}
	\newtheorem{theorem}{Theorem}
	\newtheorem{lemma}{Lemma}
	\section{Introduction}
	The dramatic increase in the population of smart phones and tablets has resulted in a huge increase in the traffic load on existing wireless networks. Adding new unlicensed spectrum to the market has been viewed as a straightforward approach to alleviate the heavy congestion on current licensed bands, as well as provide a way of increasing competition in the market for wireless services. Recent policies in this direction in the U.S. include the unlicensed use of the television white spaces\cite{fcc} and the Generalized Authorized Access (GAA) tier in the 3.5 GHz band\cite{pcast}.

	A key difference of unlicensed spectrum from licensed spectrum is that there is no expensive license cost for the unlicensed band so that more Service Providers (SPs)  are able to enter the market as long as they follow certain technical rules. The increasing competition may benefit the customers and overall social welfare. However, the open access to unlicensed spectrum may lead the band to be over crowded if too many SPs enter the market. Such issues have been studied in \cite{maille2009price,nguyen2011impact,nguyen2015free} by applying a model for price competition with congestible resources \cite{acemoglu2007competition}. Similar price competition models have also been applied in dynamic spectrum sharing when the service providers sell their idle spectrum on the secondary market\cite{kavurmacioglu2012competition} and to study tiered spectrum sharing \cite{liu2014competitionopen}.
	
	In this paper, we consider a scenario similar to that in \cite{nguyen2011impact,nguyen2015free}, where incumbent and entrant SPs compete for customers by announcing prices for their service. The customers select SPs based on the sum of the price they pay for service and a congestion cost that is incurred for using the given band of spectrum. In \cite{nguyen2011impact,nguyen2015free}, the SPs compete by announcing one price for service in an unlicensed band and a different price for service in any licensed band which the SP may own. A main result is that the price of the unlicensed band is always competed down to $0$, meaning that no profit is made for unlicensed service. It is further shown that this may result in a loss of social welfare compared to the case where the unlicensed spectrum is not available.  Additionally, the threat of overcrowding and no-profits may deter SP's from investing and offering service in the first place as studied in\cite{zhou2012investment}. 
	
	Instead of competing on licensed and unlicensed spectrum separately as in \cite{nguyen2011impact,nguyen2015free}, we consider a different model where the incumbent SPs {\it bundle} their licensed and unlicensed service and only announce a single price for this combined service. Customers that accept the price  are served on both licensed and unlicensed bands, where the specific band used may vary over time depending on the customer locations and network control decisions. For such a bundled service, we model customers as being sensitive to the average congestion they experience across the two spectrum bands. 
	
	Bundling different goods has been widely used in many areas in order to benefit both the profit of sellers and the welfare of customers. For example, sporting and cultural organizations usually offer season tickets, and  Microsoft and Sony sell their consoles bundled with different games. In recent years, the bundling strategy has also been adopted by some of the wireless SPs. One example is the data plans from AT\&T which include unlimited usage on the entire national Wi-Fi hot spot network \cite{Att}. Also, T-Mobile is launching LTE-U in spring 2017 \cite{Tmobile} to serve subscribed customers with both licensed and unlicensed spectrum. The use of bundling in markets has been studied in the economic literature including \cite{adams1976commodity,stremersch2002strategic}. Our model is different from these in that the SPs are bundling congestible resources instead of commodities that might be unpopular in the  market. The SPs have to consider the congestion caused by increasing the number of customers.

	We use an $\alpha$ to denote the average percentage of time that customers are using the unlicensed spectrum when choosing a bundled service. Two scenarios are considered in this paper. In the first part, we consider $\alpha$ to be exogenously determined by user behavior. For example, this may depend on user mobility patterns. In this scenario, we first consider the case where an incumbent (with licensed spectrum) is competing against a single entrant (without licensed spectrum). We show that if $\alpha$ is small enough, bundling will increase the incumbent's profit and, more surprisingly, also the profit of the entrant.  In particular, compared to the case in \cite{nguyen2011impact,nguyen2015free},  bundling can provide both SPs the incentive to enter the unlicensed market. Moreover, we show that for certain values of $\alpha$, bundling can also lead to an improvement in social welfare compared to both the unbundled case and the case where the unlicensed spectrum is instead  licensed to the entrant. Variations of this setting with multiple entrants and multiple incumbents are also considered.

	The second scenario we consider is one in which an SP offering bundled service is able to choose their own $\alpha$ by changing the band in which customers are served over time. This can model, for example, the case where LTE-U is used for the unlicesned band and LTE for the licensed band and the common network infrastructure assigns users dynamically to one band or the other. In this case, we show that there may be a difference in the choice of $\alpha$ which maximizes the SP's profit and that which maximizes social welfare. We characterize this gap in the limit of a large number of symmetric incumbents.

	The rest of the paper is organized as follows. Our model is described in Section \ref{sec:model}.  We first treat $\alpha$ as a fixed parameter depending on customer behavior in Section \ref{sec:fix_alpha} and compare it with the model in  \cite{nguyen2011impact,nguyen2015free}. In Section \ref{sec:varying_alpha}, we view $\alpha$ as a controllable variable that the SPs can choose to maximize either its own profit or the social welfare. Numerical results are shown in Section \ref{sec:numerical_results}. Finally, we conclude in Section \ref{sec:conclusion}. Several proofs are omitted due to space considerations and can be found in the appendix.

	\section{System model}
	\label{sec:model}
	We consider a market with $M$ incumbent SPs and $N$ entrant SPs. The sets of incumbent and entrant SPs  are denoted by  $\mathbb{I}$ and $\mathbb{E}$, respectively. Each incumbent $i\in\mathbb{I}$ is assumed to  possess its own licensed band of spectrum with bandwidth $B_i$, while entrants have no licensed spectrum. There is also a single unlicensed band with bandwidth $W$ that can be used by both the incumbent and entrant SPs.  
	
	The SPs are assumed to compete for a common pool of infinitesimal customers by setting prices for their services. The price announced by SP $i$ is denoted by $p_i$. The SPs then serve all customers that accept their price.  The profit of SP $i$ is then $x_ip_i$ where $x_i$ is the customer mass that accept price $p_i$. In the case of bundling, the customers that are choosing the incumbent SP are allowed to use both the licensed and unlicensed spectrum by paying the accepted price. 
	
	Each SP's service  is characterized by a congestion cost. The congestion that the customers experience in a band is denoted by $g(X,Y)$, which is assumed to be increasing in the total customer mass $X$ on the band and decreasing in the bandwidth $Y$. Here, we assume a specific form $g(\frac{X}{Y})$, where $g(\cdot)$ is  a convex increasing function with $g(0)=0$. We assume customers who choose the bundle service  use the unlicensed band with an average percentage of time $\alpha\in[0,1]$.  Since the incumbent SPs are bundling their service, the congestion that the customers experience when choosing the incumbent SP $i\in\mathbb{I} $ is then the expected congestion on both bands, i.e., $(1-\alpha)g(\frac{(1-\alpha)x_i}{B_i})+\alpha g\left(\frac{\sum\limits_{j\in\mathbb{I}}\alpha x_j  +\sum\limits_{k\in\mathbb{E}}x_k  }{W}\right)$. The congestion experienced by customers who choose the entrant SPs is $g\left(\frac{\sum\limits_{j\in\mathbb{I}}\alpha x_j  +\sum\limits_{k\in\mathbb{E}}x_k  }{W}\right)$. Note as in \cite{nguyen2011impact}\cite{nguyen2015free}, the congestion experienced in the unlicensed band depends on the total traffic assign to that band by any SP, modeling the shared nature of this band. 
	
	The customers select a single SP from whom  to receive service. Given the announced prices by all the SPs, the customers will choose the SP with the lowest  {\em delivered price}, which is the sum of the announced price and the congestion associated with the SP. Hence, for an incumbent SP $i\in\mathbb{I}$, the delivered price $d_i(p_i,{\bf x})$ is denoted by $p_i+(1-\alpha)g(\frac{(1-\alpha)x_i}{B_i})+\alpha g\left(\frac{\sum\limits_{j\in\mathbb{I}}\alpha x_j +\sum\limits_{k\in\mathbb{E}}x_k  }{W}\right)$ , where $\mathbf{x}$ denotes the vector of $x_i$'s across the SPs. For an entrant SP $i\in\mathbb{E}$, its delivered price is given by $d_i(p_i,{\bf x}) =  p_i + g\left(\frac{\sum\limits_{j\in\mathbb{I}}\alpha x_j +\sum\limits_{k\in\mathbb{E}}x_k  }{W}\right)$. 
	
	We assume that customers are characterized by an inverse demand function $P(q)$, which means up to a mass of $q$ customers are willing to pay the delivered price $P(q)$. We assume the inverse demand function $P(q)$ is concave decreasing. Each customer is infinitesimal so that a single customer has a negligible effect on the congestion in any band. Therefore, given the announced price by the SPs, the demand of service for each SP $i$ is assumed to satisfy the Wardrop equilibrium conditions \cite{wardrop1952some}. In our model, the conditions for the SPs are
	\begin{eqnarray}
	\label{eqn:wardrop1}
	&&d_i(p_i,{\bf x}) = P\left(\sum\limits_{j\in\mathbb{I}\cup\mathbb{E}} x_j\right),{\rm for}  \;  x_i>0,\nonumber\\
	&&d_i(p_i,{\bf x}) \ge P\left(\sum\limits_{j\in\mathbb{I}\cup\mathbb{E}} x_j\right),{\rm for} \;\forall i. 
	\end{eqnarray}

	The conditions in (\ref{eqn:wardrop1}) imply that at the Wardrop equilibrium, all the SPs serving a positive amount of customers will end up with the same delivered price, which is given by the inverse demand function. A (Nash) equilibrium of the  game is one in which the customers are in a Wardrop equilibrium and no SP can improve their profit by changing their announced price. 
	
	At an equilibrium, the customer surplus is defined as the difference between the delivered price each customer pays and the amount it is willing to pay, integrated over all the customers, i.e.,  
	\begin{equation}
	\label{eqn:csdefinition}
	CS = \int_{0}^{Q}P(q)-P(Q)dq,
	\end{equation}
	where $Q = \sum\limits_j x_j$. The social welfare of the market is the sum of consumer welfare and the SPs' profits:
	\begin{equation}
	\label{eqn:sedefinition}
	SW = CS+ \sum\limits_j p_jx_j.
	\end{equation}
	
	Next we briefly introduce the models that we want to compare with the bundling case.
	\subsection{Price competition with unbundled unlicensed access}
	\label{sec:unlicensed}
	The first model we want to contrast with is price competition with unlicensed access modeled in \cite{nguyen2011impact,nguyen2015free}. In this model, the incumbent SPs will announce two prices for service on the license band and  unlicensed band separately. Customers will still choose the service which offers the lowest delivered price. 
	
	Precisely, for an incumbent SP $i\in\mathbb{I}$, the problem is to maximize its own profit $p^l_ix^l_i+p^u_ix^u_i$, where  the superscripts $l$ and $u$ denote the licensed and unlicensed band, respectively. For an entrant SP $i\in\mathbb{E}$, the problem is to maximize $p^u_ix^u_i$. Again the announced price and the resulting customer mass on each band should meet the Wardrop equilibrium conditions. For the licensed band of each SP $i$ these are 
	\begin{small}
		\begin{eqnarray}
		& p^l_i+g\left(\frac{x^l_i}{B_i}\right)= P\left(\sum\limits_{j\in\mathbb{I}} x^l_j+\sum\limits_{j\in\mathbb{I}\cup\mathbb{E}} x^u_j\right),{\rm for} \; i\in\mathbb{I}, x^l_i>0\nonumber\\
		\label{eqn:unlicense_wardrop1}
		& p^l_i+g\left(\frac{x^l_i}{B_i}\right)\ge P\left(\sum\limits_{j\in\mathbb{I}} x^l_j+\sum\limits_{j\in\mathbb{I}\cup\mathbb{E}} x^u_j\right),{\rm for} \; i\in\mathbb{I}.
		\end{eqnarray}
	\end{small}
	For the unlicensed band, the Wardrop equilibrium conditions are 
	\begin{small}
		\begin{eqnarray}
		& p^u_i +g\left(\frac{ \sum\limits_{j\in\mathbb{I}\cup\mathbb{E}} x^u_j}{W}\right) = P\left(\sum\limits_{j\in\mathbb{I}} x^l_j+\sum\limits_{j\in\mathbb{I}\cup\mathbb{E}} x^u_j\right),\nonumber\\
		&{\rm for} \; i\in\mathbb{I}\cup\mathbb{E},x^u_i>0, \nonumber\\
		\label{eqn:unlicense_wardrop2}
		& p^u_i +g\left(\frac{ \sum\limits_{j\in\mathbb{I}\cup\mathbb{E}} x^u_j}{W}\right) \ge P\left(\sum\limits_{j\in\mathbb{I}} x^l_j+\sum\limits_{j\in\mathbb{I}\cup\mathbb{E}} x^u_j\right),{\rm for} \; i\in\mathbb{I}\cup\mathbb{E}.
		\end{eqnarray}
	\end{small}

	
	We summarize the Nash equilibrium for the unlicensed access model in \cite{nguyen2011impact} in the following lemma.
	\begin{lemma}
		\label{lemma:price}
		In the unbundled unlicensed access case, if there are at least two SPs in the market, all the SPs serving a positive mass of customers will have $p^u_i = 0$ on the unlicensed band. 
	\end{lemma}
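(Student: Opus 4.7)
The plan is a Bertrand-style undercutting argument on the unlicensed band. By the Wardrop conditions (\ref{eqn:unlicense_wardrop2}), any two SPs $i,j$ with $x^u_i, x^u_j > 0$ experience the identical unlicensed congestion $g(\sum_\ell x^u_\ell / W)$ and see the identical clearing delivered price $P(Q)$, and so must have announced identical prices $p^u_i = p^u_j =: p^u$. Suppose for contradiction that $p^u > 0$ at some Nash equilibrium.

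I would then pick an SP $k$ with $x^u_k < X^u := \sum_\ell x^u_\ell$. Such a $k$ exists because either more than one SP serves the unlicensed band and any of them qualifies, or a unique SP is the sole unlicensed provider, in which case the hypothesis of at least two SPs in the market furnishes an SP $k$ with $x^u_k = 0 < X^u$. I would have SP $k$ deviate to the announced unlicensed price $p^u - \epsilon$ for small $\epsilon > 0$, leaving $p^l_k$ unchanged. This strictly undercuts every competitor's unlicensed delivered price, and the inequality part of (\ref{eqn:unlicense_wardrop2}) then forces $\hat{x}^u_j = 0$ for all $j \neq k$ at the post-deviation equilibrium, making $k$ the sole unlicensed provider.

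Finally, I would let $\epsilon \to 0^+$ in the reduced Wardrop system $P(\hat{Q}) = (p^u - \epsilon) + g(\hat{x}^u_k/W)$ together with $P(\hat{Q}) = p^l_j + g(\hat{x}^l_j / B_j)$ for each SP $j$ still serving licensed traffic, obtaining $\hat{x}^u_k \to X^u$, $\hat{x}^l_j \to x^l_j$ for every $j$, and $\hat{Q} \to Q$. SP $k$'s profit then satisfies $\hat{\Pi}_k - \Pi_k \to p^u(X^u - x^u_k) > 0$, exhibiting a strictly profitable unilateral deviation and contradicting the equilibrium hypothesis.

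The main obstacle is this last limit step: a priori, capturing the whole unlicensed market might discontinuously drain $k$'s own licensed mass and so cause a first-order loss offsetting the unlicensed gain. To discharge this I would use continuity and strict monotonicity of $P$ and $g$ to recast the reduced Wardrop system as a single scalar fixed-point equation in $\hat{Q}$, and apply a standard implicit-function argument to conclude continuous dependence of $(\hat{x}^u_k, \hat{x}^l_j, \hat{Q})$ on $\epsilon$ at $\epsilon = 0$, which delivers the limits above.
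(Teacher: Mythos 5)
Your argument is essentially correct, but there is nothing in the paper to compare it against: Lemma \ref{lemma:price} is stated only as a summary of a result imported from \cite{nguyen2011impact}, and no proof is given in the appendix. Judged on its own, your Bertrand-style undercutting argument is the standard route and the steps hold up. The Wardrop conditions (\ref{eqn:unlicense_wardrop2}) do force a common announced price $p^u$ among SPs with positive unlicensed mass, and (a point worth stating explicitly, since your deviation needs it) the inequality part of (\ref{eqn:unlicense_wardrop2}) also forces every \emph{inactive} SP to post an unlicensed price of at least $p^u$, so the deviator at $p^u-\epsilon$ genuinely undercuts all rivals. Your choice of a deviator $k$ with $x^u_k<X^u$ correctly uses the two-SP hypothesis, and the one delicate step --- that capturing the whole unlicensed market does not drain $k$'s licensed traffic by a first-order amount --- is properly discharged by continuity. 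For that step you do not even need an implicit-function argument or differentiability of $P$ and $g$: the paper itself (proof of Theorem \ref{thm:equilibrium_existence}) records that the Wardrop allocation is the unique minimizer of a strictly convex potential and hence depends continuously on the price vector ${\bf p}$ by the maximum theorem, which is exactly what your limit $\epsilon\to 0^+$ requires, since the unlicensed gain tends to $p^u(X^u-x^u_k)>0$ while the licensed-side loss tends to $0$. One cosmetic caveat: what you actually prove is that no equilibrium can have $X^u>0$ together with a positive price charged by an active unlicensed provider; you say nothing about the (irrelevant) unlicensed price posted by an SP serving no unlicensed customers. That is the content the paper actually uses downstream, so the proof does the intended job.
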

	The lemma shows that under this competition model,  the entrant SP  always gets no  profit.
	
	\subsection{Exclusive use}
	\label{sec:exclusive}
	An alternative to making the spectrum with bandwidth $W$ unlicensed and still increase competition would be to licensed this band to a new entrant. \footnote{Of course this band could also be licensed to an incumbent, but this would not increase competition in the market and so we focus on the case where it is licensed to an entrant.}  In this case, the entrant SP is able to use the spectrum exclusively and incumbent SPs are only allowed to use their own proprietary band. The objective of each SP is still to maximize profit  while the Wardrop equilibrium conditions are satisfied. To be precise, the Wardrop equilibrium conditions on the licensed band are the same as that in (\ref{eqn:unlicense_wardrop1}). The conditions for the entrant SP $i$ on the new band become
	\begin{eqnarray}
	& p_i+g\left(\frac{x_i}{W}\right)= P\left(\sum\limits_{j\in\mathbb{I}\cup\mathbb{E}} x_j\right), {\rm if}\;\; x_i>0\nonumber\\
	\label{eqn:exclusive_wardrop1}
	& p_i+g\left(\frac{x_i}{W}\right)\ge P\left(\sum\limits_{j\in\mathbb{I}\cup\mathbb{E}} x_j\right),{\rm otherwise} .
	\end{eqnarray}
	Note that here we ignore the superscript because the spectrum can only be used by the entrant SP. 
	
	\section{Different cases with fixed $\alpha$}
	\label{sec:fix_alpha}
	\subsection{Monopoly case}
	We first consider the case with only one incumbent SP and no entrant SPs. The game then reduces to the following optimization problem for the incumbent SP:
	\begin{eqnarray}
	\label{eqn:monobundle}
	\max_{p_1}&&p_1x_1\\
	{\rm s.t.} && d_1(p_1,x_1) = P(x_1),\nonumber\\
	&&  p_1\ge0.\, \nonumber
	\end{eqnarray}
	where  $d_1(p_1,x_1)$ is the delivered price.
	
	We want to compare the bundling case with the unbundled and exclusive use cases described in previous section. When there is only one incumbent in the market, the unbundled and exclusive use cases coincide with each other and result in the following optimization problem:
	\begin{eqnarray}
	\label{eqn:monoprice}
	\max_{p^l_1,p^u_1}&&p^l_1x^l_1+p^u_1x^u_1\\
	{\rm s.t.} && p^l_1+g\left(\frac{x^l_1}{B}\right)= P(x^l_1+x^u_1),\nonumber\\
	&& p^u_1 +g\left(\frac{ x^u_1}{W}\right) = P(x^l_1+x^u_1),\nonumber\\
	&&  p^l_1\ge0, p^u_1\ge0. \nonumber
	\end{eqnarray}
	\begin{theorem}
		\label{thm:mono}
		The monopoly SP cannot gain more profit by bundling the licensed and unlicensed service than by offering service  on the two bands separately.
	\end{theorem}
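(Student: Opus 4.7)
\emph{Proof proposal.} My plan is to show that every bundled strategy can be mirrored by an unbundled strategy with at least the same profit, so the unbundled optimum dominates. The key observation is that the two problems share a common profit expression in terms of the per-band mass. Eliminating the prices via the Wardrop equalities in (\ref{eqn:monobundle}) and (\ref{eqn:monoprice}) gives
\[
\Pi_b(x_1) = x_1 P(x_1) - (1-\alpha)x_1\, g\!\left(\tfrac{(1-\alpha)x_1}{B}\right) - \alpha x_1\, g\!\left(\tfrac{\alpha x_1}{W}\right),
\]
\[
\Pi_u(x_1^l, x_1^u) = (x_1^l+x_1^u) P(x_1^l+x_1^u) - x_1^l g\!\left(\tfrac{x_1^l}{B}\right) - x_1^u g\!\left(\tfrac{x_1^u}{W}\right),
\]
and the substitution $x_1^l = (1-\alpha)x_1$, $x_1^u = \alpha x_1$ matches the two expressions identically.

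This identity immediately yields $\Pi_b^* \le \sup_{x_1^l,\, x_1^u \ge 0} \Pi_u(x_1^l, x_1^u)$, where the supremum is taken \emph{without} the per-band price non-negativity constraints. To turn this into a bound on the actual unbundled optimum $\Pi_u^*$, I would then show that dropping those constraints does not help the SP, so the unconstrained supremum equals $\Pi_u^*$.

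The hard part is precisely this last step, which I would handle by an exchange argument. Suppose at some candidate unbundled point $(x_1^l, x_1^u)$ the implied unlicensed price $p_1^u = P(x_1^l+x_1^u) - g(x_1^u/W)$ is negative with $x_1^u > 0$. A direct expansion shows that replacing $x_1^u$ by $0$ changes $\Pi_u$ by $x_1^l[P(x_1^l)-P(x_1^l+x_1^u)] + x_1^u |p_1^u|$, which is strictly positive since $P$ is decreasing and $|p_1^u|>0$. The licensed side is symmetric, so any maximizer of the unconstrained problem must already have both implied prices non-negative and is hence feasible for the true unbundled problem. Combining this with the substitution identity yields $\Pi_b^* \le \Pi_u^*$, which is the theorem.
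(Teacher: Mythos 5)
Your argument is correct, but it takes a genuinely different route from the paper. The paper first reduces the unbundled monopoly problem (\ref{eqn:monoprice}) to a single-band problem with bandwidth $B+W$ (by showing $p^l_1=p^u_1$, hence equalized congestion), then proves that the bundled congestion curve $c_1(x)=(1-\alpha)g\bigl(\tfrac{(1-\alpha)x}{B}\bigr)+\alpha g\bigl(\tfrac{\alpha x}{W}\bigr)$ pointwise dominates $g\bigl(\tfrac{x}{B+W}\bigr)$ via convexity of $c_1$ in $\alpha$ with minimizer $\alpha^*=\tfrac{W}{B+W}$, and concludes geometrically (smaller rectangle under a higher congestion curve). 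You instead observe that, after eliminating prices, the bundled problem is exactly the unbundled objective restricted to the ray $x^l:x^u=(1-\alpha):\alpha$, so ``more degrees of freedom cannot hurt.'' This is more elementary (no convexity computation, no reduction to the $B+W$ band) and, unlike the paper, you explicitly confront the per-band price non-negativity constraints: your exchange step correctly shows that a band with negative implied price and positive mass can be shut down for a strict profit gain of $x^l_1\bigl[P(x^l_1)-P(x^l_1+x^u_1)\bigr]+x^u_1\lvert p^u_1\rvert>0$, so the relaxation is tight. Two minor points you should make explicit: (i) attainment of the unconstrained supremum (immediate here since $P$ is decreasing and $g$ increasing, so profit is eventually negative and the search is over a compact set); and (ii) setting $x^u_1=0$ uses the inequality form of the Wardrop condition rather than the equality written in (\ref{eqn:monoprice}), i.e., the SP prices one band out of the market. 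Neither affects the validity of the argument. As a side benefit, the paper's route additionally identifies the indifference point $\alpha^*=\tfrac{W}{B+W}$, which yours does not produce directly.
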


	
	Intuitively, in a monopoly market, there are no competitors for the incumbent SP on the unlicensed band, so the incumbent SP can get as much as profit as possible on both bands. As a result there are no incentives for the SP to bundle the services. The proof of this theorem  shows that when there is only one incumbent SP in the market, there exists an optimal $\alpha^*$ such that when $\alpha = \alpha^*$, the SP is indifferent between bundling and unlicensed access. When $\alpha\ne \alpha^*$, the incumbent SP will generally get a lower profit by bundling as is shown in Fig. \ref{fig:monopoly}, where the solid curve reflects the profit in the bundling case and the dashed line indicates the unbundled case. In any settings, the profit of the SP in the bundling case cannot exceed the profit in the unlicensed access case. With different licensed and unlicensed bandwidth ratios $(B/W)$, $\alpha^*$ varies. In fact, it can be proved that $\alpha^* = \frac{W}{B+W}$.
	
	\begin{figure}[htbp]
		\centering
		\includegraphics[scale=0.35]{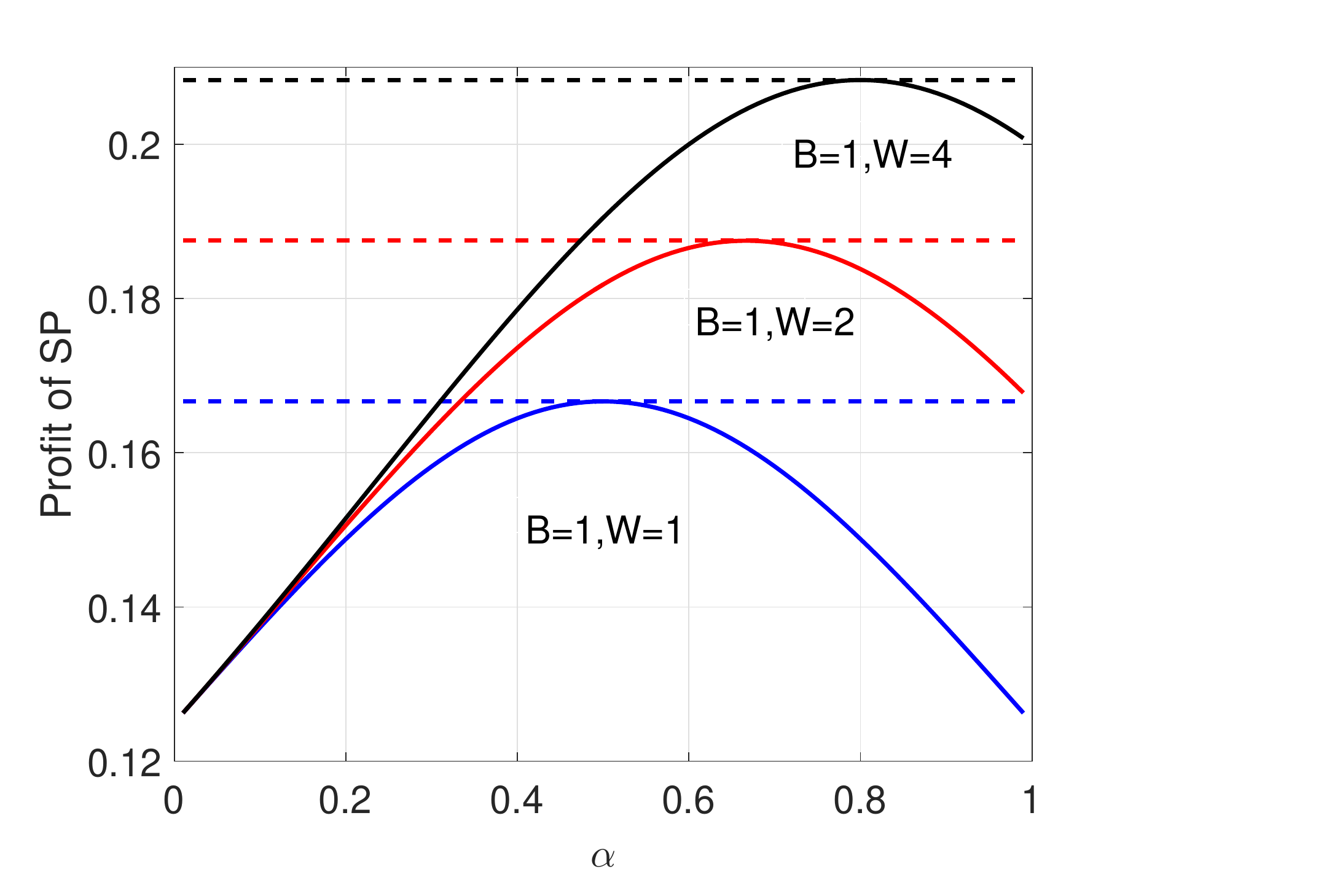}
		\caption{The profit of the monopoly SP with different $\alpha$  (dashed lines show the unbundled case, solid lines the bundled case).}
		\label{fig:monopoly}
	\end{figure}

	\subsection{One incumbent SP \& one entrant SP }
	Next we turn to the case of one incumbent and one entrant SP. Different from the monopoly case, when there is an entrant SP without any proprietary spectrum, the incumbent SP is able to make higher profit by bundling the licensed and unlicensed service. Without loss of generality, we assume that SP 1 is the incumbent and SP 2 is the entrant. 
	
	\subsubsection{Comparison with unbundled case}
	First, we compare bundling to the case without bundling.  Based on Lemma \ref{lemma:price}, the announced price on the unlicensed band is zero in the unbundled case and as a result the incumbent SP is solving the following optimization problem:
	\begin{eqnarray}
	\label{eqn:priceoptimization}
	\max_{p_1} && p_1 x^l\\
	{\rm s.t.} && p_1 +g\left(\frac{x^l}{B}\right) = P\left(x^l+x^u\right),\nonumber\\
	&& g\left(\frac{x^u}{W}\right) = P\left(x^l+x^u\right), \nonumber\\
	&& p_1\ge 0 .\nonumber
	\end{eqnarray}
	
	The following result will aid in comparing this with the bundled case.
	\begin{theorem}
		\label{thm:incumbent_entrant1}
		There always exists an $\alpha_0\in(0,1)$, such that when $\alpha = \alpha_0$, bundling is equivalent to the unbundled case in the sense of profit and welfare. 
	\end{theorem}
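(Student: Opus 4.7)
The plan is to explicitly construct a bundled Nash equilibrium at a specific $\alpha_0\in(0,1)$ that reproduces the unbundled equilibrium outcome. First I would characterize the unbundled benchmark: by Lemma~\ref{lemma:price} the entrant's unlicensed price is driven to $0$, so its profit vanishes and the incumbent's problem reduces to (\ref{eqn:priceoptimization}) with equilibrium masses $(x^l, x^u)$ satisfying $g(x^u/W) = P(x^l+x^u)$ and profit $\pi^{*} = [g(x^u/W) - g(x^l/B)]\,x^l$. Under the standing assumptions both $x^l,x^u>0$, so $\alpha_0 := x^u/(x^l+x^u)$ lies in $(0,1)$.

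Next, at $\alpha=\alpha_0$ I would propose the candidate profile in which the entrant posts $p_2=0$ and is served $x_2=0$, while the incumbent posts $p_1 = (1-\alpha_0)[g(x^u/W) - g(x^l/B)]$ and is served $x_1 = x^l+x^u$ customers. By construction the licensed load $(1-\alpha_0)x_1$ equals $x^l$ and the unlicensed load $\alpha_0 x_1+x_2$ equals $x^u$, so the two congestions coincide with their unbundled values. Then $d_1 = d_2 = g(x^u/W) = P(x^l+x^u)$ meets the Wardrop conditions, the incumbent's profit is $p_1 x_1 = [g(x^u/W) - g(x^l/B)]\,x^l = \pi^{*}$, the entrant's profit is $0$, and the total mass is $Q^{*} = x^l+x^u$. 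Hence both SPs' profits, consumer surplus, and social welfare coincide with their unbundled values.

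The main obstacle is verifying that this profile is a Nash equilibrium. The entrant side is easy: at $p_1=p_1^{*}$, any deviation $p_2\ge 0$ leaves $(Q^{*},0)$ as a Wardrop solution because $p_2 + g(\alpha_0 Q^{*}/W) = p_2 + P(Q^{*}) \ge P(Q^{*})$, so $x_2=0$ and the entrant earns zero regardless, making $p_2=0$ weakly optimal. The more delicate step is the incumbent's best response. Given $p_2=0$, I would introduce $L:=(1-\alpha_0)x_1$ and observe that the entrant's Wardrop equation $g((\alpha_0 x_1+x_2)/W) = P(x_1+x_2)$, combined with the identity $\alpha_0 x_1+x_2 = Q - L$, becomes $g((Q-L)/W) = P(Q)$ --- exactly the relation between licensed and unlicensed loads in the unbundled problem. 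Using $g((Q-L)/W) = P(Q)$, the bundled incumbent's profit simplifies to
\[
p_1 x_1 \;=\; (1-\alpha_0)\bigl[P(Q) - g(L/B)\bigr]\,x_1 \;=\; \bigl[P(Q) - g(L/B)\bigr]\,L,
\]
which is identical, as a function of $L$, to the unbundled incumbent's objective in (\ref{eqn:priceoptimization}). Hence the bundled optimum is attained at $L = x^l$, i.e.\ $x_1 = Q^{*}$, closing the verification and yielding equivalence of profits and welfare.
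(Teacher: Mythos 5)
Your overall strategy is the same as the paper's: both proofs set $\alpha_0$ equal to the unlicensed share of total demand in the unbundled equilibrium, propose the profile $\hat p_1=(1-\alpha_0)p_1^{*}$, $\hat p_2=0$, and verify that it is a Nash equilibrium reproducing the unbundled profits and welfare. Your entrant-side check is in fact cleaner than the paper's: the paper rules out a profitable $\tilde p_2>0$ by a contradiction argument juggling the two Wardrop constraints, whereas you observe that at the candidate point the unlicensed congestion alone already equals $P(Q^{*})$, so $(Q^{*},0)$ remains a Wardrop allocation for every $p_2\ge 0$ and the entrant earns zero no matter what it posts. You should state explicitly that this relies on uniqueness of the Wardrop equilibrium for fixed prices, which the paper justifies by a potential-function argument only later, in the proof of Theorem~\ref{thm:equilibrium_existence}.

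The gap is in the incumbent-side verification. You use the entrant's Wardrop condition as the equation $g((\alpha_0 x_1+x_2)/W)=P(x_1+x_2)$ to convert the bundled objective into the unbundled one, $[P(Q)-g(L/B)]\,L$. But by (\ref{eqn:wardrop1}) this holds with equality only when $x_2>0$. When the incumbent deviates \emph{downward} from $\hat p_1$, $x_1$ grows, the incumbent's own traffic saturates the unlicensed band, the entrant is crowded out ($x_2=0$), and the condition becomes the strict inequality $g(\alpha_0 x_1/W)>P(x_1)$. In that regime $Q=x_1=L/(1-\alpha_0)$ is \emph{not} the $Q$ solving $g((Q-L)/W)=P(Q)$, so the claimed identity of objectives fails precisely on the set of deviations you most need to exclude. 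The conclusion survives, but requires an extra estimate: when $x_2=0$ one has $p_1=P(x_1)-(1-\alpha_0)g(L/B)-\alpha_0 g(\alpha_0 x_1/W)\le(1-\alpha_0)\bigl[P(x_1)-g(L/B)\bigr]$, and since $g(\alpha_0 x_1/W)\ge P(x_1)$ forces $x_1\ge Q_0(L)$, where $Q_0(L)$ solves $g((Q-L)/W)=P(Q)$, it follows that $p_1x_1\le\bigl[P(Q_0(L))-g(L/B)\bigr]L$; that is, the bundled profit is still dominated by the unbundled objective at the same licensed load $L$. Adding this case closes the argument.
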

	
	Theorem \ref{thm:incumbent_entrant1} shows that bundling is a kind of generalization of the unbundled case. It can achieve exactly the same profit for the SPs and welfare for the market with some value of $\alpha$. In that case, the incumbent SP actually lowers its announced price (compared to the case where $W$ is not present) and customers can use both the licensed and unlicensed band. No customer would like to choose the entrant even if it offers a price of $0$, because the customers who are choosing the bundled service have already generated significant congestion on the unlicensed band. However when $\alpha$ is different from the critical value it is possible for entrant SP to gain positive profit and for the incumbent to increase its profit. 
	
	\begin{theorem}
		\label{thm:incumbent_entrant2}
		Let $\alpha_0$ be the value at which bundling achieves the same profit and welfare as the unbundled case. When $\alpha<\alpha_0$, the profit of incumbent SP in the bundling case is higher than that in the unbundled case and the entrant SP can gain positive profit. When $\alpha>\alpha_0$, the profit of the incumbent SP in the bundling case is no greater than that in the unlicensed case and the entrant SP gets zero profit.
	\end{theorem}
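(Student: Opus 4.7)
The plan is to pivot around the anchor $\alpha_0$ from Theorem \ref{thm:incumbent_entrant1}, at which the bundled equilibrium duplicates the unbundled one and in particular $p_2 = 0$. I would first combine the two Wardrop conditions for the bundled case with both SPs active to obtain the identity
\begin{equation*}
p_1 x_1 = (1-\alpha) x_1 \bigl[P(Q) - g((1-\alpha) x_1/B)\bigr] + \alpha x_1 p_2,
\end{equation*}
which already exposes the driving mechanism: a positive entrant price contributes a ``bonus'' $\alpha x_1 p_2$ to the incumbent's profit. Writing the Wardrop equations together with the two SPs' first-order conditions as a system parameterized by $\alpha$, I would invoke the implicit function theorem near $\alpha_0$ to determine the one-sided derivatives of $p_2$ and of the incumbent's equilibrium profit with respect to $\alpha$.

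For $\alpha < \alpha_0$, the first step is to show $p_2 > 0$. At $\alpha_0$ we have $p_2 = 0$ by Theorem \ref{thm:incumbent_entrant1} combined with Lemma \ref{lemma:price}. Differentiating the entrant's first-order condition at this anchor shows that its unconstrained optimal price moves in the positive direction as $\alpha$ decreases, because reducing $\alpha$ lowers the incumbent's contribution $\alpha x_1$ to the unlicensed congestion and leaves slack the entrant can monetize. Given $p_2 > 0$, the profit identity above yields $p_1 x_1 > (1-\alpha) x_1 [P(Q) - g((1-\alpha) x_1 / B)]$. Combining this with an envelope-theorem argument for the incumbent's optimal-profit function at $\alpha_0$ (which handles the fact that the first term is itself evaluated at an $\alpha$-dependent equilibrium) yields $p_1 x_1 > p_1^l x_1^l$ locally, and I would extend the inequality throughout $[0,\alpha_0)$ by continuity and uniqueness of the bundled equilibrium.

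For $\alpha > \alpha_0$, the same derivative analysis shows the entrant's unconstrained best response would be $p_2 < 0$, so the nonnegativity constraint binds and $p_2 = 0$, giving zero profit for the entrant. With $p_2 = 0$ the incumbent's profit reduces to $(1-\alpha) x_1 [P(Q) - g((1-\alpha) x_1 / B)]$. Under the substitution $y = (1-\alpha) x_1$ and $x^u = \alpha x_1 + x_2$, this has the same objective and unlicensed Wardrop constraint $g(x^u/W) = P(y + x^u)$ as the unbundled incumbent's problem in the $p_1^u = p_2^u = 0$ regime of Lemma \ref{lemma:price}, but with the additional restriction $x^u \ge \alpha y /(1-\alpha)$ coming from $x_2 \ge 0$. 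The bundled problem is therefore a constrained version of the unbundled one, so its optimum cannot exceed $p_1^l x_1^l$. The main obstacle I anticipate is the global extension in the $\alpha < \alpha_0$ regime: comparative statics at $\alpha_0$ yield the correct direction only near the anchor, and upgrading to a strict inequality on the whole interval $[0, \alpha_0)$ requires either monotonicity of the equilibrium map in $\alpha$ or a direct revealed-preference argument that bypasses the anchor.
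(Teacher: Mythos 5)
Your profit identity $p_1 x_1 = (1-\alpha)x_1\bigl[P(Q) - g((1-\alpha)x_1/B)\bigr] + \alpha x_1 p_2$ is correct and cleanly isolates the mechanism, and your reduction of the $\alpha>\alpha_0$ case to a constrained version of the unbundled problem (the extra restriction coming from $x_2\ge 0$) is exactly the paper's step, which appends the constraint $x^l/x^u=(1-\alpha)/\alpha$ to the problem in (\ref{eqn:priceoptimization}). The genuine gap is the one you half-identify yourself: everything you derive from the implicit function theorem at the anchor $\alpha_0$ is local, while the theorem is a global statement on all of $[0,\alpha_0)$ and $(\alpha_0,1]$. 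This afflicts not only the extension of the incumbent's profit inequality below $\alpha_0$ (which you flag), but equally your claim that ``the same derivative analysis shows the entrant's unconstrained best response would be $p_2<0$'' for $\alpha>\alpha_0$ --- the sign of a one-sided derivative at $\alpha_0$ says nothing about $\alpha$ far above $\alpha_0$. Continuity plus uniqueness of the equilibrium does not close this: the inequality could in principle reverse at an interior $\alpha$ without passing through a second equality point, unless you separately prove that profit equality forces $\alpha=\alpha_0$, which is essentially the theorem again.

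The paper closes both halves with global, non-infinitesimal arguments that you should adopt in place of the local steps. For $\alpha<\alpha_0$ it is a revealed-preference comparison: the incumbent's best-response profit against $p_2=0$ equals the unbundled profit (here $\alpha<\alpha_0$ is used to guarantee the entrant serves positive mass at price zero, so your ratio constraint is slack), and if that profit exceeded the equilibrium profit against $p_2=p_2^*\ge 0$, the incumbent could, when facing the higher entrant price, raise its own announced price so as to keep its served mass unchanged by (\ref{constraint1}) and strictly beat its supposed best response to $p_2^*$ --- a contradiction valid at every such $\alpha$, not just near $\alpha_0$. For $\alpha>\alpha_0$, the paper rules out $\tilde p_2>0$, $\tilde x_2>0$ at any equilibrium by comparing it directly with the $\alpha_0$ equilibrium mass $x_1^{(0)}$: assuming $\tilde x_1+\tilde x_2\ge x_1^{(0)}$ forces both $\alpha\tilde x_1+\tilde x_2<\alpha_0 x_1^{(0)}$ and $(1-\alpha)\tilde x_1<(1-\alpha_0)x_1^{(0)}$, whose sum contradicts the assumption (and symmetrically for the reverse case). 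Substituting these two arguments for your IFT/envelope steps would make the proof complete; the identity and the constrained-problem comparison you already have can be kept as is.
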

	
	This shows that when $\alpha$ is small, both the incumbent and entrant SPs are serving customers with a positive announced price. The unlicensed band can offload some congestion on the licensed band so that the incumbent SP can attract more customers and make more profit. In this case, the unlicensed band is not over-crowded so that the entrant SP can still announce a positive price to gain positive profit. However when $\alpha$ is large, the entrant SP is forced to announce a zero price, because the customers of the incumbent SP have already caused significant congestion on the unlicensed band. That means the incumbent cannot gain profit on unlicensed band, but it still has to serve a certain amount of customers on it. That may have a negative effect on his profit. 
	
	\subsubsection{Comparison with exclusive use case}
	From Lemma \ref{lemma:price} it is clear that  price competition for unbundled unlicensed  access will benefit the customers instead of the SPs, since the competition on the unlicensed band will drive the price to $0$. Another way to use the additional spectrum is the exclusive use case. As described in previous section, exclusive use means that the entrant SP can use the unlicensed spectrum exclusively. It is natural to ask if bundling can gain more profit for the incumbent SP than the exclusive use case. First we give a lemma to compare the profit of incumbent in the exclusive case and unlicensed case.
	\begin{lemma}
		\label{lemma:exclusive_unlicense}
		In the case with one incumbent and one entrant SP, the incumbent SP is able to gain more profit with exclusive access than with unbundled unlicensed  access.
	\end{lemma}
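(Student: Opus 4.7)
The plan is to show that the incumbent can always match its unbundled-unlicensed profit by mimicking $p_1^{ul}$ in the exclusive-use game, and then to exhibit a strict gain because the entrant's equilibrium price becomes strictly positive under exclusive use. Let $p_1^{ul}$ denote the incumbent's equilibrium price in the unlicensed model (\ref{eqn:priceoptimization}), with associated licensed mass $x_1^{ul}$. The key observation is that if the entrant in the exclusive-use game sets $p_2=0$, the entrant's Wardrop condition (\ref{eqn:exclusive_wardrop1}) reduces to $g(x_2/W)=P(x_1+x_2)$, which is \emph{exactly} the unlicensed-band Wardrop condition (\ref{eqn:unlicense_wardrop2}) at zero unlicensed price. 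Hence the incumbent's best-response problem against $p_2=0$ in the exclusive game coincides with (\ref{eqn:priceoptimization}), and charging $p_1^{ul}$ reproduces the unlicensed profit $p_1^{ul}x_1^{ul}$.

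Next I would prove a comparative-statics lemma: holding $p_1>0$ fixed, the incumbent's demand $x_1(p_1,p_2)$ determined by the two Wardrop equations is strictly increasing in $p_2$. Totally differentiating $p_1+g(x_1/B)=P(x_1+x_2)$ and $p_2+g(x_2/W)=P(x_1+x_2)$ in $p_2$, and writing $a=g'(x_1/B)/B$, $b=g'(x_2/W)/W$, $c=-P'(x_1+x_2)$, all positive by monotonicity of $g$ and $P$, the first equation yields $dx_1=-\frac{c}{a+c}\,dx_2$. Substituting into the second gives $dx_2/dp_2=-\bigl(b+\tfrac{ac}{a+c}\bigr)^{-1}<0$ and therefore $\partial x_1/\partial p_2=\frac{c}{(a+c)\bigl(b+\tfrac{ac}{a+c}\bigr)}>0$. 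Intuitively, raising the entrant's price strictly shifts traffic onto the incumbent's licensed band.

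Finally I would show that in any Nash equilibrium $(p_1^{exc},p_2^{exc})$ of the exclusive-use game the entrant's price is strictly positive. At $p_2=0$ the entrant's Wardrop equation gives $g(x_2/W)=p_1^{exc}+g(x_1/B)>0$, so $x_2>0$; hence the derivative of the entrant's profit satisfies $d(p_2x_2)/dp_2|_{p_2=0}=x_2>0$, and a small positive price strictly dominates $p_2=0$. Combining the three steps gives
\[
\pi_1^{exc}\;\ge\;p_1^{ul}\,x_1(p_1^{ul},p_2^{exc})\;>\;p_1^{ul}\,x_1(p_1^{ul},0)\;=\;p_1^{ul}\,x_1^{ul},
\]
where the first inequality is incumbent best-response optimality at the exclusive equilibrium and the strict inequality follows from the preceding two steps. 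The main obstacle is the comparative-statics calculation: the signing of the $2\times 2$ linear system is clean once both Wardrop conditions bind with an interior solution, but one must separately rule out boundary equilibria in which some $x_i$ vanishes (either by invoking the Wardrop inequality form of (\ref{eqn:exclusive_wardrop1}) or by noting that the unlicensed benchmark of (\ref{eqn:priceoptimization}) already has $x_1^{ul},x_2^{ul}>0$, so the perturbation remains in the interior regime).
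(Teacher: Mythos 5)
Your proof is correct, but it follows a genuinely different route from the paper's. The paper disposes of this lemma in two lines: the exclusive-use game is exactly the bundled game with $\alpha=0$, and since Theorem \ref{thm:incumbent_entrant1} gives $\alpha_0\in(0,1)$, Theorem \ref{thm:incumbent_entrant2} applied at $\alpha=0<\alpha_0$ immediately yields the claim. Your argument is essentially a self-contained re-derivation of the $\alpha=0$ instance of that theorem: the observation that the incumbent's best response to $p_2=0$ in the exclusive game coincides with problem (\ref{eqn:priceoptimization}) is the same device the paper uses inside its proof of Theorem \ref{thm:incumbent_entrant2}, but where the paper argues by contradiction you substitute an explicit comparative-statics computation ($\partial x_1/\partial p_2>0$) together with a proof that the entrant's exclusive-use equilibrium price is strictly positive. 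What your route buys is a strict, quantitative inequality obtained without any of the bundling machinery (the paper's proof of Theorem \ref{thm:incumbent_entrant2} only concludes $\tilde p_1\tilde x_1\le p_1^*x_1^*$, so strictness is not literally established there); what it costs is length and the boundary-case bookkeeping you flag at the end. Two minor care points: signing $a$ and $b$ strictly positive needs slightly more than ``convex increasing'' (e.g.\ $g(x)=x^2$ has $g'(0)=0$), though your conclusion survives whenever $b>0$; and the identity $g(x_2/W)=p_1^{exc}+g(x_1/B)$ in your third step presupposes $x_1>0$ and $x_2>0$, which you should justify from $p_1^{exc}>0$ (the incumbent can always secure positive profit on its proprietary band, so it never prices at zero in equilibrium).
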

	\begin{proof}
		Note that the exclusive use case is equivalent to the bundling case with $\alpha = 0$. Hence by Theorem \ref{thm:incumbent_entrant2}, the incumbent can always gain more profit in the exclusive use case than the unbundled case. 	
	\end{proof}
	Surprisingly, this lemma show that without bundling, rather than making the spectrum unlicensed, the incumbent would prefer that it is exclusively licensed to the entrant.  However, things may become different if bundling is adopted by the incumbent.  To give more insights, we next consider a setting where both the inverse demand and the congestion function have the linear forms  $P(x)=1-x$ and $g(x) = x$.
	
	\begin{theorem}
		\label{thm:exclusiveprofit}
		For linear case, if $B<\frac{4(1+W)}{3W}$, there must exists some $\alpha \in (0,1)$ such that the profit of the incumbent SP is higher than that in the exclusive use case.
	\end{theorem}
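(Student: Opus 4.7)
The plan is to show that in the linear case the incumbent's equilibrium profit $\pi_1(\alpha)$ under bundling is a continuous function of $\alpha$ satisfying $\pi_1(0)$ equal to the exclusive-use profit and $\pi_1'(0)>0$ precisely when $B<\frac{4(1+W)}{3W}$; by continuity, there then exists $\alpha\in(0,1)$ with $\pi_1(\alpha)>\pi_1(0)$.

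First I would write the Wardrop conditions for the bundled duopoly with $P(q)=1-q$ and $g(u)=u$. Setting $A(\alpha)=1+(1-\alpha)^2/B+\alpha^2/W$, $C(\alpha)=1+\alpha/W$, and $D=(W+1)/W$, they reduce to the linear system $Ax_1+Cx_2=1-p_1$, $Cx_1+Dx_2=1-p_2$, so that $x_1,x_2$ are affine in $(p_1,p_2)$ with positive determinant $\Delta(\alpha)=AD-C^2$. Substituting into the profits $p_ix_i$ and imposing both SPs' first-order conditions gives a closed-form Nash equilibrium, in particular $p_1^*(\alpha)=(2AD-AC-C^2)/(4AD-C^2)$, with analogous explicit expressions for $p_2^*,x_1^*,x_2^*$; all are continuous in $\alpha$. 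At $\alpha=0$ the incumbent's bundled congestion reduces to $x_1/B$ and the entrant's remains $x_2/W$, which is exactly the exclusive-use model of Section \ref{sec:exclusive}. Hence $\pi_1(0)$ is the exclusive-use profit and the theorem reduces to proving $\pi_1'(0)>0$ under the stated bandwidth condition.

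To compute $\pi_1'(0)$ I would exploit the envelope theorem at the incumbent's best response: because $\partial\pi_1/\partial p_1=0$ at $(p_1^*,p_2^*,\alpha)$,
\begin{equation*}
\pi_1'(\alpha)=\frac{\partial\pi_1}{\partial p_2}\cdot\frac{dp_2^*}{d\alpha}+\frac{\partial\pi_1}{\partial\alpha},
\end{equation*}
where the partials are at fixed $(p_1,p_2)$. Using $D'\equiv 0$, $A'(0)=-2/B$, $C'(0)=1/W$, together with the identities $\Delta(0)=(B+W+1)/(BW)$ and $\Delta'(0)/\Delta(0)=-2$, each piece becomes a rational function of $B,W$ with common positive denominator $(4A(0)D-C(0)^2)^2$. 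After expansion and collection, the numerator of $\pi_1'(0)$ should factor as a strictly positive quantity in $B,W$ multiplied by $4(1+W)-3BW$, giving the desired sign condition $B<\frac{4(1+W)}{3W}$.

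The main obstacle is precisely this factorization: the numerator of $\pi_1'(0)$ appears initially as a polynomial of high degree in $B,W$, and collapsing it to a single linear factor $4(1+W)-3BW$ requires careful use of both the envelope identity and the incumbent's FOC in the form $x_1^*=p_1^*D/\Delta$ to eliminate cross terms. Once this factorization is in hand, continuity of $\pi_1(\alpha)$ and the strict positivity of $\pi_1'(0)$ immediately yield an $\alpha\in(0,1)$ with $\pi_1(\alpha)>\pi_1(0)$, completing the proof.
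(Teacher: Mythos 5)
Your proposal is correct and follows essentially the same route as the paper: identify the $\alpha=0$ bundled equilibrium with the exclusive-use equilibrium, compute the derivative of the incumbent's equilibrium profit at $\alpha=0$ from the explicit linear-case Nash equilibrium, and show its sign is governed by the factor $4+4W-3BW$. The paper simply differentiates $p_1^*(\alpha)$ and $x_1^*(\alpha)$ directly rather than invoking the envelope theorem, and its computation confirms the factorization you anticipate, namely $\lim_{\alpha\to 0}\partial(p_1x_1)/\partial\alpha = \frac{B(1+W)(2+B+W)\left[2(1+B+W)(4+4W-3BW)\right]}{(1+B+W)(4+4B+4W+3BW)^{3}}$.
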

	
	This theorem shows that the incumbent SP  can only gain more profit in the bundling case if the licensed resources it possesses are limited. Another interpretation of Theorem \ref{thm:exclusiveprofit} is that when the licensed band is small enough, i.e., $B<\frac{4}{3}$, for all possible value of $W$, there exists some $\alpha$ such that  bundling can gain more profit for incumbent than the exclusive use case. However if $B>\frac{4}{3}$, then we require a small amount of unlicensed bandwidth, i.e., $W<\frac{4}{3B-4}$ to guarantee the existence of an $\alpha$ such that bundling is better for the incumbent SP.  
	
	Lemma \ref{lemma:exclusive_unlicense} shows that incumbent SP can gain more profit with exclusive access than that in the unbundled case, so there is no incentive for the incumbent to share the unlicensed spectrum with the entrant. However, Theorem \ref{thm:exclusiveprofit} shows that  if the incumbent SP uses the bundling strategy, it is able to get even more profit than the exclusive use case if $\alpha$ is in some appropriate range.  Fig. \ref{fig:exclusive_profit}(a) shows how the profit of the incumbent changes with $\alpha$ in linear case with $B=1$ and $W=1$. It can be seen that for some $\alpha$ bundling gives a better profit for the incumbent. However when $B$ is large, as is shown in Fig. \ref{fig:exclusive_profit}(b) with $B=3$ and $W=1$, the profit of the exclusive use case is always higher. In this case, the incumbent SP has sufficient licensed spectrum so to not be incentivized to use the unlicensed band.
	
	\begin{figure}[htbp]
		\centering
		\subfigure[$B=1,W=1$]{
			\label{fig:exclusive_profit:a} 
			\includegraphics[width=1.68in]{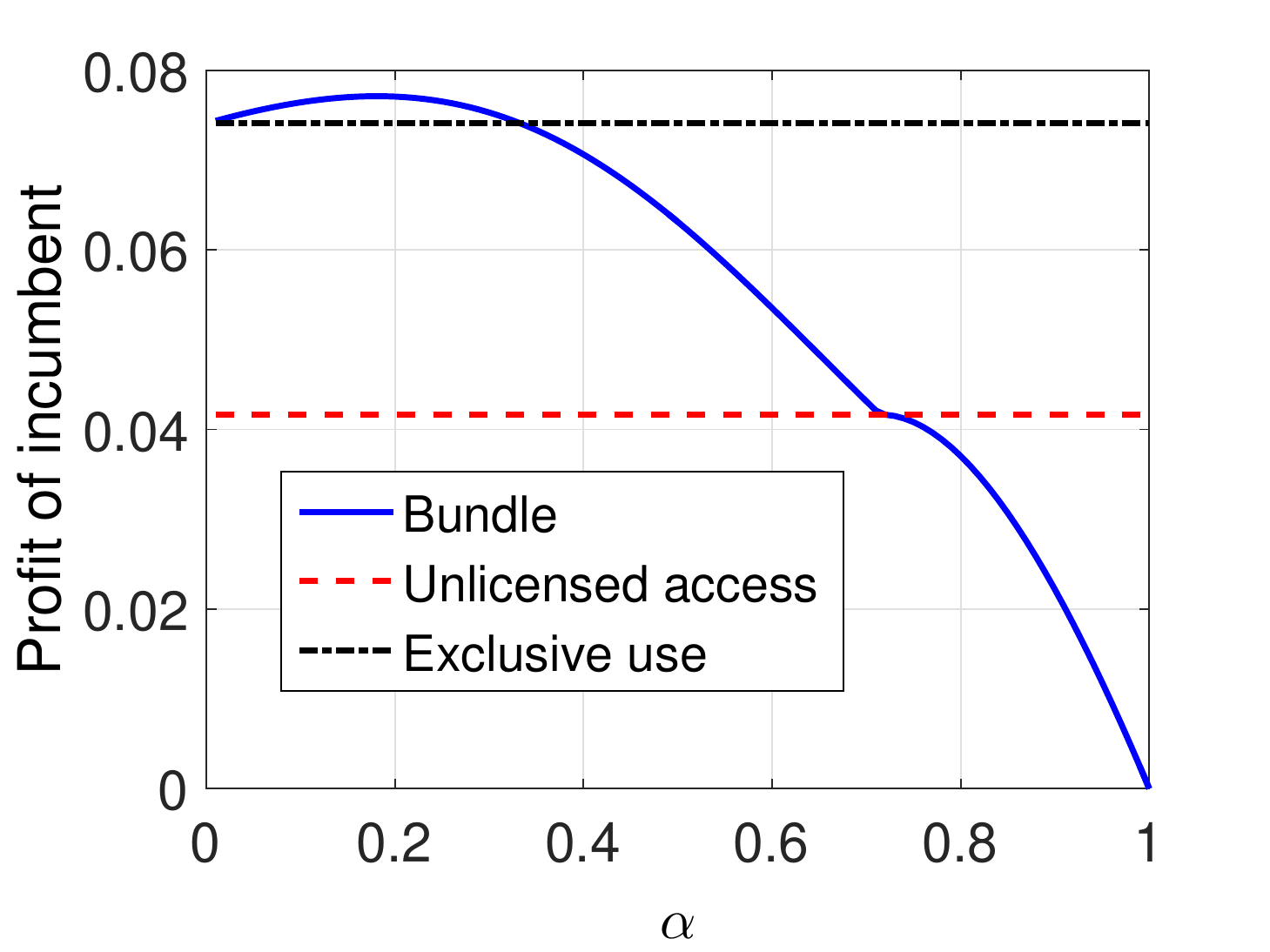}}
		\subfigure[$B=3,W=1$]{
			\label{fig:exclusive_profit:b} 
			\includegraphics[width=1.68in]{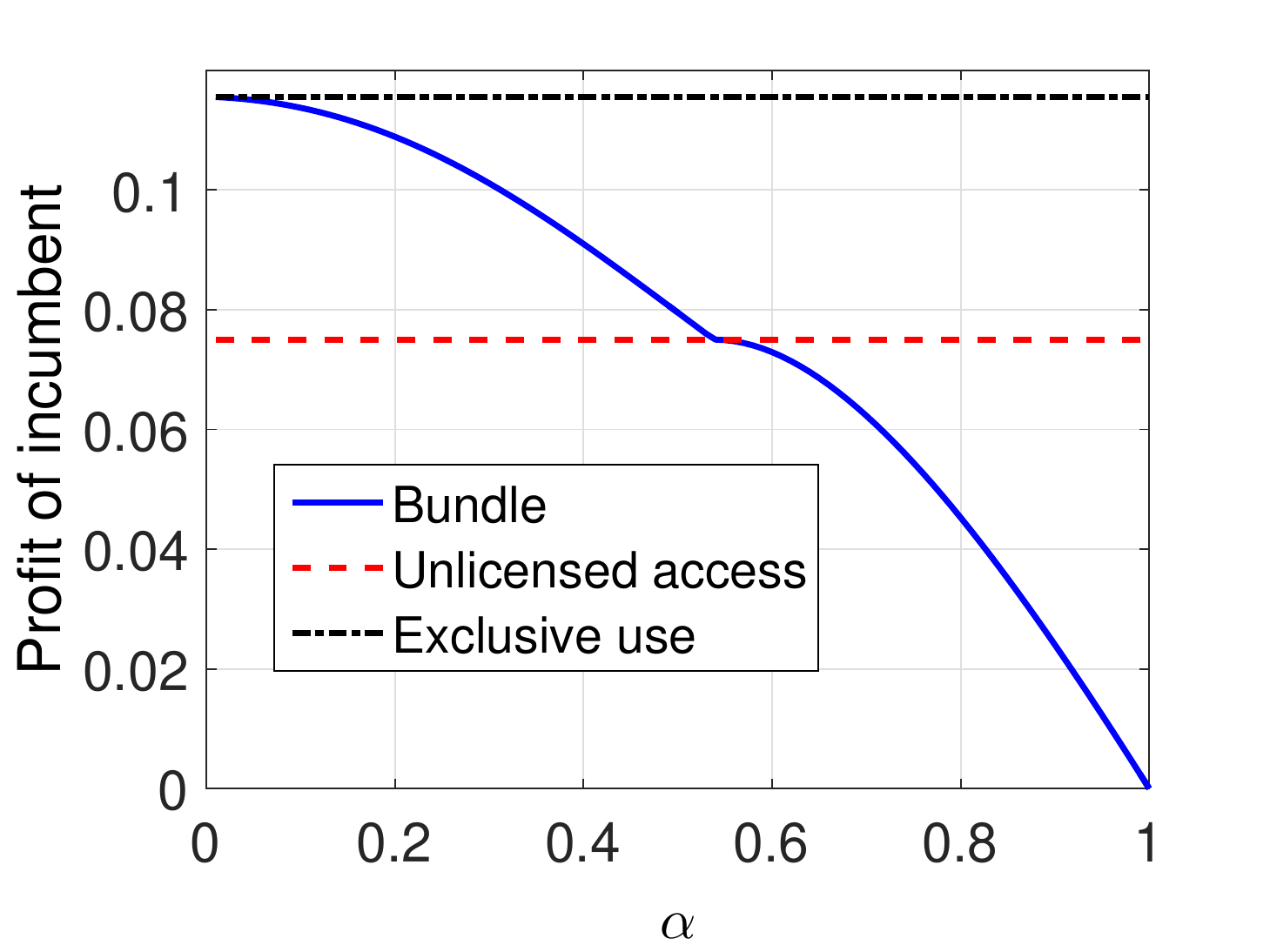}}
		\caption{Profit comparison with unbundled and exclusive use cases.}
		\label{fig:exclusive_profit} 
	\end{figure}
	%
	
	Next we show that the social welfare may also be improved over the exclusive use case if bundling is adopted.
	\begin{theorem}
		\label{thm:exclusiveSW}
		For any value of $B$ and $W$, there exists some $\alpha>0$ such that both of the customer welfare and social welfare of bundling is better than that of the exclusive use case.
	\end{theorem}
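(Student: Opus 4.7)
The plan is to identify the exclusive use equilibrium with the bundling equilibrium at $\alpha = 0$, exactly as observed in the proof of Lemma~\ref{lemma:exclusive_unlicense}. Once that identification is in place, it suffices to show that for some $\alpha > 0$ (and in fact for every sufficiently small $\alpha > 0$) the bundling equilibrium yields a strictly larger $CS$ and $SW$ than at $\alpha = 0$. The cleanest route is a perturbation argument: verify that the equilibrium outcome is $C^1$ in $\alpha$ near $0$, then compute the one-sided derivatives of $CS$ and $SW$ at $\alpha = 0$ and show both are strictly positive.

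For $CS$, the formula in (\ref{eqn:csdefinition}) yields $\frac{dCS}{d\alpha} = -Q(\alpha)\, P'(Q(\alpha))\, Q'(\alpha)$, so with $P$ strictly decreasing it suffices to establish $Q'(0) > 0$. The idea is to combine the two Wardrop conditions with each SP's best-response first-order condition (maximizing $p_i x_i$ while taking the other's price as given), obtaining four implicit equations in $(p_1, p_2, x_1, x_2)$ parameterized by $\alpha$. At $\alpha = 0$ this system decouples into two independent monopoly pricing problems, one per band, reproducing the exclusive use equilibrium. The implicit function theorem then gives $C^1$ dependence on $\alpha$, and differentiating the incumbent's delivered-price expression shows that a small increase in $\alpha$ shifts part of its congestion load from $B$ onto the shared band at a rate that strictly reduces its delivered price at the original $x_1$, forcing an upward adjustment of $x_1$; this in turn lowers $P(Q)$ and forces $Q$ to rise.

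For $SW$, I would exploit the equilibrium identity $SW(\alpha) = \int_0^{Q(\alpha)} P(q)\,dq - \sum_i x_i(\alpha)\, c_i(\alpha)$, obtained by substituting $p_i = P(Q) - c_i$ (where $c_i$ is SP $i$'s congestion cost) into (\ref{eqn:sedefinition}). Differentiating at $\alpha = 0$ gives $SW'(0) = P(Q(0))\, Q'(0) - \frac{d}{d\alpha}\bigl[x_1 c_1 + x_2 c_2\bigr]_{\alpha = 0}$. The first term is positive by the previous step. For the congestion term, a first-order Taylor expansion of $(1-\alpha)g((1-\alpha)x_1/B) + \alpha\, g((\alpha x_1 + x_2)/W)$ around $\alpha = 0$ exposes the dominant $O(\alpha)$ effect as a reallocation of bandwidth use; combining this with the monopoly FOC that characterizes $g'(x_1/B)/B$ at $\alpha = 0$ shows that the total congestion contribution decreases (or at worst grows more slowly than the gross consumer value), yielding $SW'(0) > 0$.

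The main obstacle is carrying out this perturbation through the Nash equilibrium: because both SPs' prices respond to $\alpha$, one cannot simply hold prices fixed but must propagate the perturbation through both best-response conditions simultaneously. Concretely, I would reduce to a $2 \times 2$ linear system in $(x_1'(0), x_2'(0))$ by differentiating the two best-response FOCs, then use convexity of $g$ and concavity of $P$ to control the sign of the determinant, and finally verify that $Q'(0) = x_1'(0) + x_2'(0) > 0$ for arbitrary $B$ and $W$. Once both $Q'(0) > 0$ and $SW'(0) > 0$ are established, continuity delivers the theorem for every sufficiently small $\alpha > 0$.
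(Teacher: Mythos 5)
Your high-level strategy is the same as the paper's: identify exclusive use with bundling at $\alpha=0$ (as in Lemma~\ref{lemma:exclusive_unlicense}) and show that the one-sided derivatives of the total served mass and of $SW$ at $\alpha=0$ are strictly positive, then invoke continuity. The paper, however, carries this out only in the linear model $P(x)=1-x$, $g(x)=x$ introduced just before Theorem~\ref{thm:exclusiveprofit}: it substitutes the closed-form duopoly equilibrium from that proof, differentiates $x_1+x_2$ and $SW$ explicitly in $\alpha$, and reads off the signs. Your proposal aims for general convex $g$ and concave $P$, and the points where it would have to do real work are exactly the points where it only asserts the conclusion.

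Three concrete gaps. First, at $\alpha=0$ the system does \emph{not} decouple into two independent monopoly problems: the two Wardrop conditions share the common term $P(x_1+x_2)$, so the benchmark is the exclusive-use \emph{duopoly} equilibrium and the implicit-function-theorem perturbation must be run on the fully coupled best-response system. Second, your mechanism for $Q'(0)>0$ --- that a small $\alpha$ ``strictly reduces [the incumbent's] delivered price at the original $x_1$'' --- is false in general. Holding the allocation and prices fixed, the derivative of the incumbent's congestion at $\alpha=0$ is $g(x_2^*/W)-g(x_1^*/B)-(x_1^*/B)\,g'(x_1^*/B)=p_1^*-p_2^*-(x_1^*/B)\,g'(x_1^*/B)$ by the equalized delivered prices, and already in the linear model this is strictly positive when $B$ is large relative to $W$ (it tends to $1/(4+3W)>0$ as $B\to\infty$). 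So the sign of $Q'(0)$ must come from the equilibrium price adjustments, which you never compute; the proposed $2\times 2$ linear system is the right object but its solution is not exhibited or signed. Third, the $SW$ step (``the total congestion contribution decreases, or at worst grows more slowly\ldots'') is likewise asserted without computation; even in the linear case the paper needs an explicit and rather delicate expression to sign $\lim_{\alpha\to 0}\partial SW/\partial\alpha$. Your identity $SW=\int_0^Q P(q)\,dq-\sum_i x_i c_i$ is correct and is a good starting point, but as written the proposal establishes neither $Q'(0)>0$ nor $SW'(0)>0$. Restricting to the linear model and finishing the two explicit derivative computations would close the argument.
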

	
	Different from the profit of incumbent, it is always possible for customer surplus and social welfare to be improved if the incumbent chooses to bundle licensed and unlicensed service for any $B$ and $W$. Two examples are shown in Fig. \ref{fig:exclusive_sw}. In both cases we can find some $\alpha$ such that social welfare increases. That is because when $\alpha$ is in some specific range, the customers who choose the incumbent SP are able to use the unlicensed service, which introduces competition on the unlicensed band and as a result improves the social welfare. However when $\alpha$ is too large, the licensed band becomes underutilized, and customer and social welfare may suffer. 
	\begin{figure}[htbp]
		\centering
		\subfigure[$B=1,W=1$]{
			\label{fig:exclusive_sw:a} 
			\includegraphics[width=1.68in]{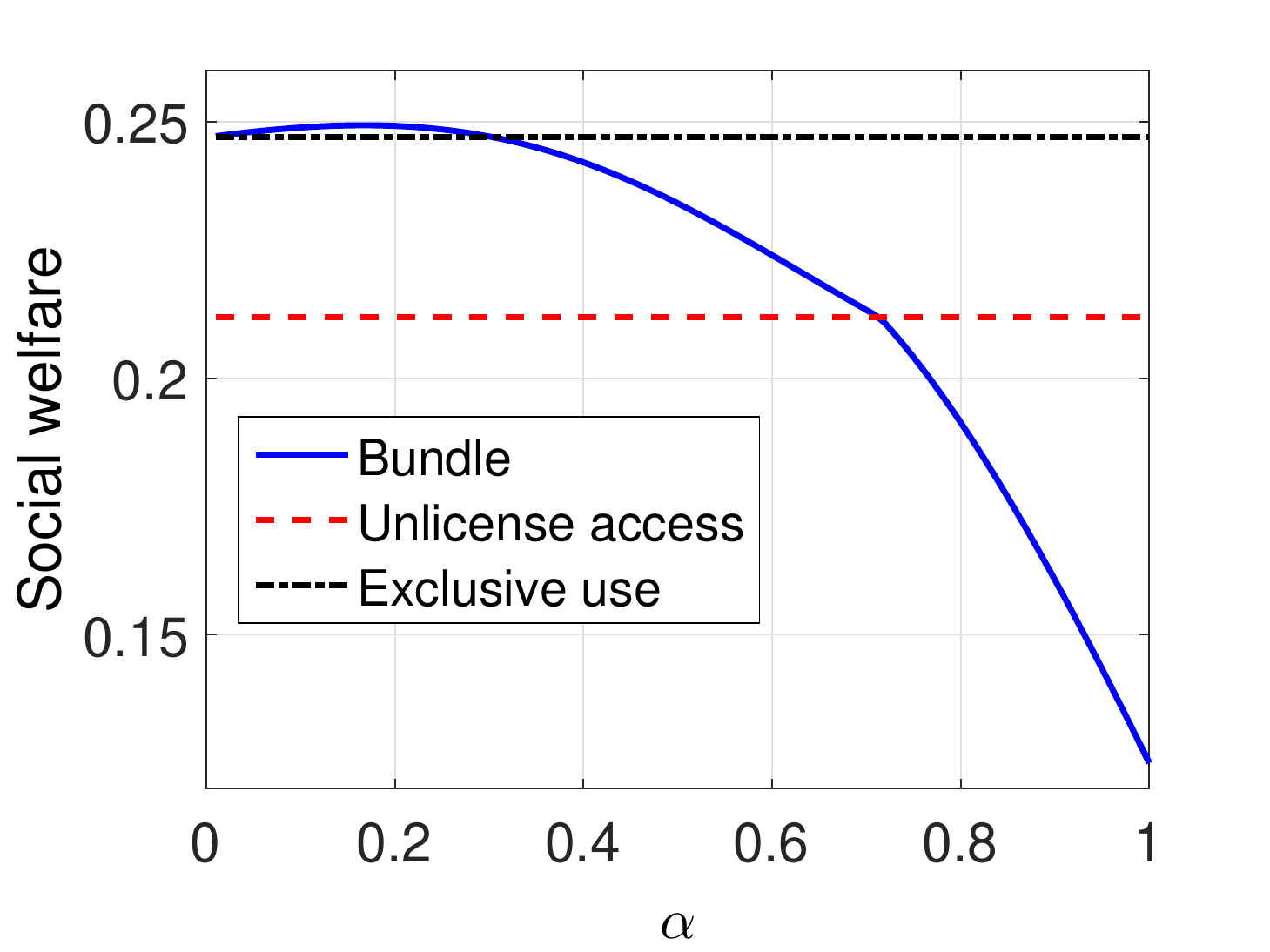}}
		\subfigure[$B=3,W=1$]{
			\label{fig:exclusive_sw:b} 
			\includegraphics[width=1.68in]{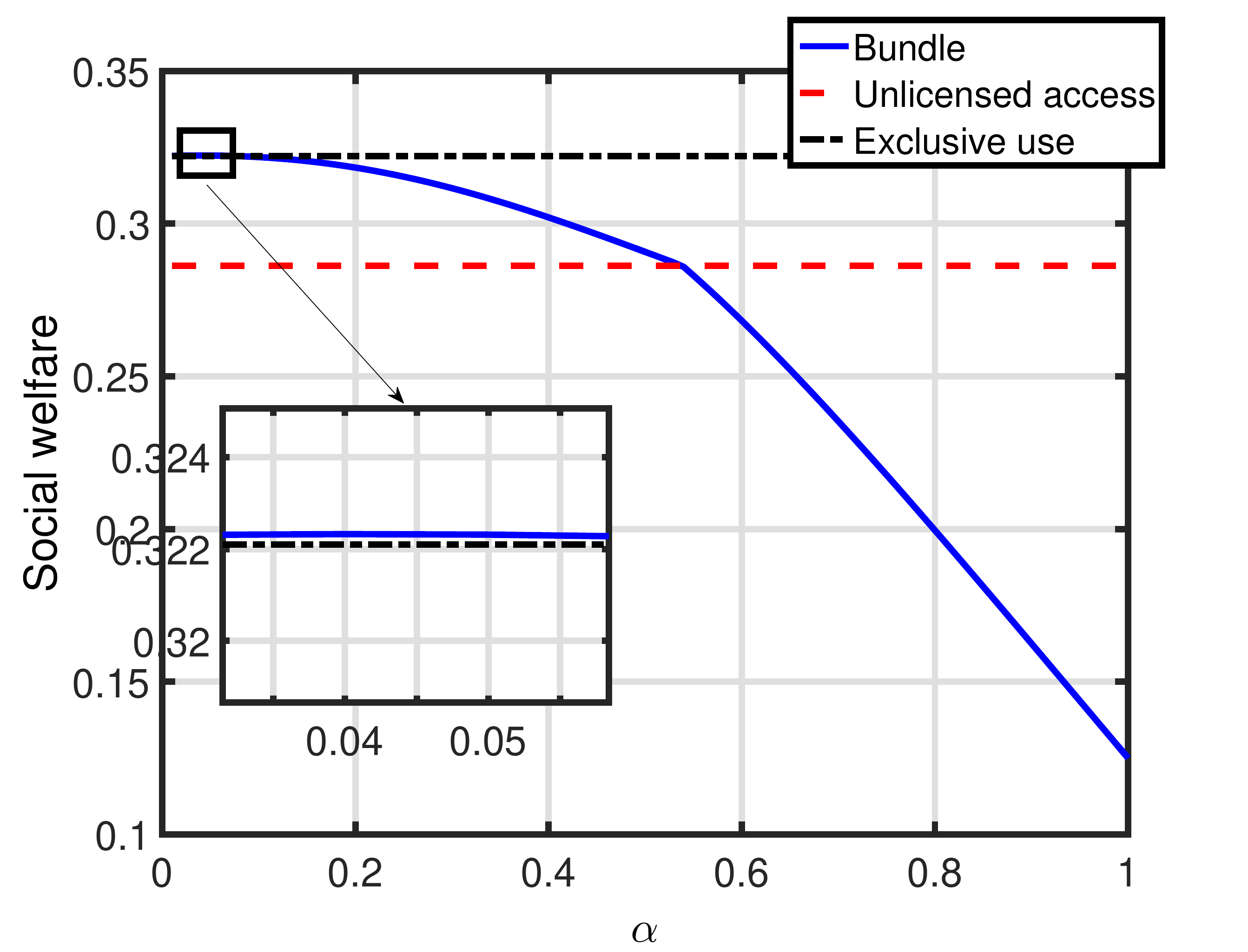}}
		\caption{Social welfare comparison with unbundled and exclusive use cases when $W$ is small.}
		\label{fig:exclusive_sw} 
	\end{figure}
	
	
	Note that in Fig. \ref{fig:exclusive_sw}, we show the social welfare with a relatively small bandwidth of unlicensed spectrum. In this case, assigning the entrant SP exclusive access to the unlicensed spectrum yield more social welfare than unbundled access. However, when $W$ is large, it is possible for the unbundled case to yield more social welfare than the exclusive use case. Such cases are shown in Fig \ref{fig:exclusive_sw4}. We can see that when $B=1$ and $W=10$, social welfare in the unbundled case is higher than that in the exclusive use case and bundling is able to achieve higher social welfare in some range of $\alpha$. Even if $\alpha \to\infty$, where the unbundled case achieves the maximum possible social welfare, there exists an $\alpha$ such that bundling can achieve the same social welfare. That is because there always exists some $\alpha$ such that the bundling case is equivalent to the unbundled access case  as is shown in Theorem \ref{thm:incumbent_entrant1}.
	\begin{figure}[htbp]
		\centering
		\subfigure[$B=1,W=10$]{
			\label{fig:exclusive_sw4:a} 
			\includegraphics[width=1.68in]{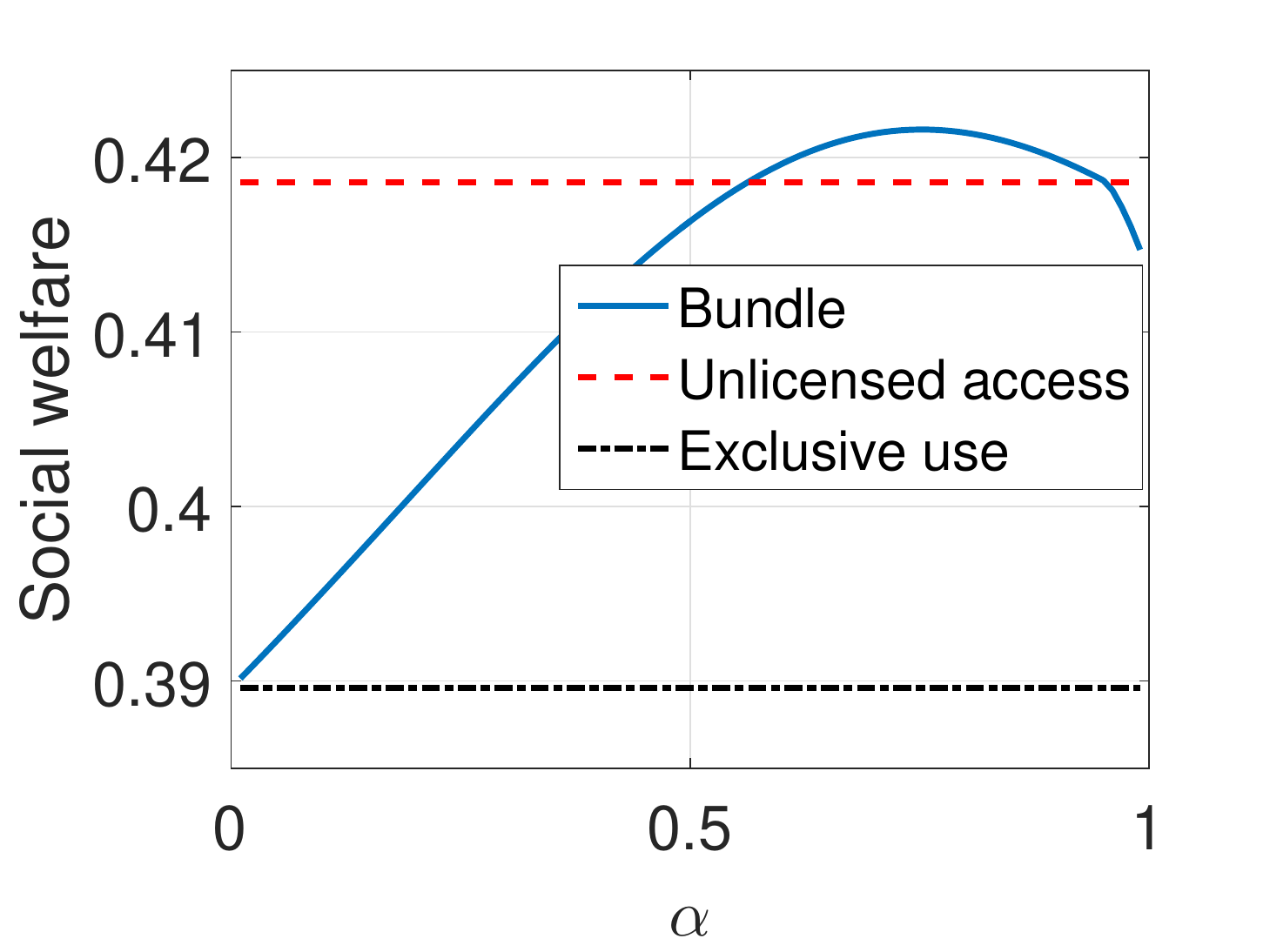}}
		\subfigure[$B=1,W\to\infty$]{
			\label{fig:exclusive_sw4:b} 
			\includegraphics[width=1.68in]{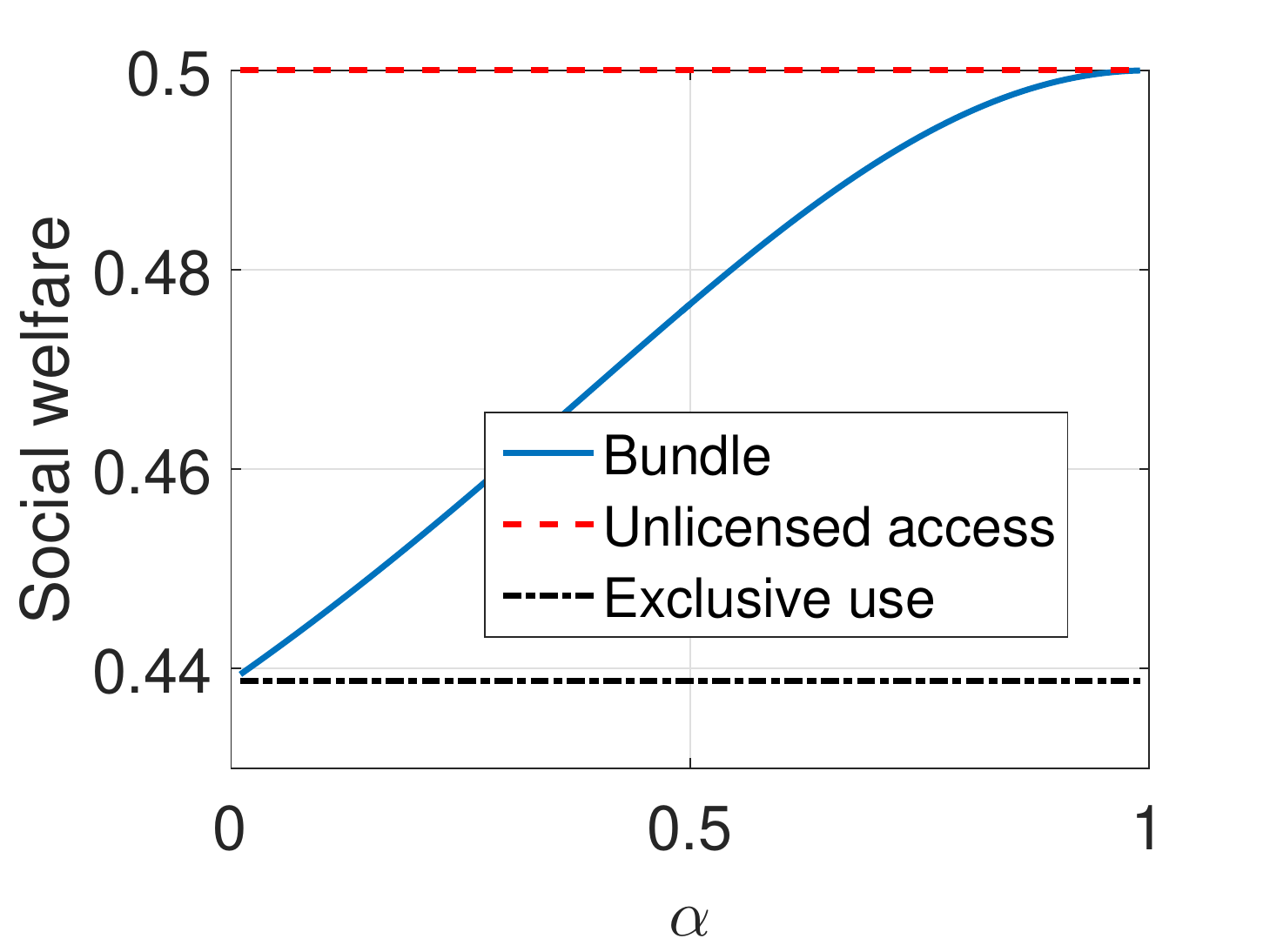}}
		\caption{Social welfare comparison with unbundled and exclusive use cases when $W$ is large.}
		\label{fig:exclusive_sw4} 
	\end{figure}


	\subsection{One incumbent SP \& multiple entrant SPs}
	In this case, the competition among entrants on the unlicensed spectrum will drive the price to $0$, which results in an equilibrium similar to  the unbundled case. We summarize the result in the following theorem.  
	\begin{theorem}
		\label{thm:incumbent_multipleentrant1}
		Where there is one incumbent SP and multiple entrant SPs, for any $\alpha \in (0,1)$, bundling is equivalent to the unbundled case in the sense of profit and welfare.
	\end{theorem}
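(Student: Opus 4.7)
The plan is to establish the claimed equivalence in two steps, paralleling the structure of Lemma \ref{lemma:price} followed by a change of variables that recovers the unbundled profit formula.

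First, I would show that any Nash equilibrium with $N \ge 2$ entrants must have every entrant serving positive customer mass announcing $p_k = 0$. This is the same Bertrand-style argument used for Lemma \ref{lemma:price}: each entrant offers the identical congestion $g\left(\frac{\alpha x_1 + \sum_k x_k}{W}\right)$, so at any equilibrium the Wardrop conditions force their announced prices to coincide at some common $p^u$. If $p^u > 0$, one entrant could undercut by a small $\epsilon$ and capture strictly more mass at essentially the same price, contradicting the equilibrium assumption. The only modification from Lemma \ref{lemma:price}'s reasoning is that the residual demand now also depends on the incumbent's bundle price, but since the incumbent's strategy is held fixed under an entrant's unilateral deviation, the undercutting logic is unaffected.

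Second, once every entrant price is pinned at $0$, I would introduce the change of variables $y := (1-\alpha) x_1$ (the effective mass of the incumbent's customers on the licensed band) and $u := \alpha x_1 + \sum_k x_k$ (the total mass on the unlicensed band), noting that $Q := x_1 + \sum_k x_k = y + u$. The incumbent's Wardrop equality combined with the entrants' condition $g(u/W) = P(Q)$ then yields $p_1 = (1-\alpha)\bigl[g(u/W) - g(y/B)\bigr]$, so the incumbent's profit evaluates to $p_1 x_1 = y\bigl[g(u/W) - g(y/B)\bigr]$. This is exactly the incumbent profit expression in the unbundled case under the identification $x^l_1 = y$ and $x^u = u$. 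Moreover, the incumbent's choice of $p_1$ bijects with the unbundled choice of $p_1^l$ via $p_1 = (1-\alpha) p_1^l$, so the two profit-maximization problems have the same maximizer $(y^\ast, u^\ast)$.

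Third, the welfare comparison follows immediately. In both settings the total served mass is $Q^\ast = y^\ast + u^\ast$ and the delivered price is $P(Q^\ast)$, so the customer surplus defined in \eqref{eqn:csdefinition} coincides. Entrants earn zero profit in both settings, and the incumbent's profit agrees by step two, so the social welfare defined in \eqref{eqn:sedefinition} also agrees. I expect the main obstacle to be the rigorous execution of step one: although the Bertrand intuition is clear, formally exhibiting a profitable deviation for an entrant with $p_k > 0$ requires controlling how the Wardrop system responds to a unilateral price change when the shared congestion term couples all SPs. By contrast, step two reduces to the algebraic identity just sketched, and step three is direct from the definitions.
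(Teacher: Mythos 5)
Your proposal is correct and follows essentially the same route as the paper: the paper likewise invokes Lemma~\ref{lemma:price} to pin the entrants' prices at zero and then applies exactly your change of variables $x^l = (1-\alpha)x_1$, $x^u = \alpha x_1 + \sum_{j\in\mathbb{E}} x_j$ to transform the incumbent's problem into the unbundled problem~(\ref{eqn:priceoptimization}), with the objective rewritten as $\frac{p_1}{1-\alpha}x^l$, which is your identity $p_1 x_1 = y\bigl[g(u/W)-g(y/B)\bigr]$. The only cosmetic difference is that the paper cites the lemma directly rather than re-running the undercutting argument, so the obstacle you flag in step one is already dispatched by that citation.
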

	\begin{proof}
		Because there are multiple entrant SPs in the market, based on Lemma \ref{lemma:price}, the announced price of the entrant SPs should be $0$. That reduces the game to the following optimization problem: 
		\begin{eqnarray}
		\label{eqn:multi_entrant}
		\max_{p_1}&&p_1x_1\\
		{\rm s.t.} && d_1(p_1,\mathbf{x}) = P(x_1+\sum\limits_{j\in\mathbb{E}}x_j), \nonumber\\
		&& d_i(0,\mathbf{x}) = P(x_1+ \sum\limits_{j\in\mathbb{E}}x_j), \; i\in\mathbb{E} \nonumber\\
		&&  p_1\ge 0, \nonumber
		\end{eqnarray}
		where $d_1(p_1,\mathbf{x})$ is the delivered price of the incumbent SP and $d_i(0,\mathbf{x}) $  is the delivered price of the entrant SPs. 
		

		Let $x^l = (1-\alpha)x_1$ and $x^u = \alpha x_1 + \sum\limits_{j\in\mathbb{E}}x_j$, the optimization problem can be transformed to the following form:
		\begin{eqnarray}
		\label{eqn:multientrant}
		\max_{p_1}&&\frac{p_1}{1-\alpha}x^l\\
		{\rm s.t.} && p_1+g\left(\frac{x^l}{B}\right)= P(x^l+x^u),\nonumber\\
		&& g\left(\frac{ x^u}{W}\right) = P(x^l+x^u),\nonumber\\
		&&  p_1\ge0. \nonumber
		\end{eqnarray}
		The problem in (\ref{eqn:multientrant}) is exactly the same as the problem of unbundled access in (\ref{eqn:priceoptimization}). This implies identical profit for the incumbent and identical welfare for the market.
	\end{proof}
	
	Since with multiple entrants, the price for unlicensed service goes to zero, this results suggests that if the investment decisions of the entrants are also accounted for as in \cite{zhou2012investment} then more than one entrant would never enter the market, i.e., cases with either one or zero entrants are more relevant. This result can be simply extended to the cases with multiple incumbent SPs and multiple entrant SPs. When there are multiple entrant SPs, the competition on the unlicensed band will again drive the price to $0$. 
	
	
	\subsection{Multiple incumbents \& no entrants}
	Next we turn to the case with multiple incumbents and no entrants.\footnote{As discussed in the previous section, given that incumbents can always make a higher profits, it is reasonable that with a high enough entry cost that no entrants will enter the market.}  In this case, the incumbent SPs will compete on their price for bundled service. The problem can be formulated as the following for each incumbent SP $i$,
	\begin{eqnarray}
	\label{eqn:Nincumbent_problem}
	\max_{p_i}&&p_ix_i \nonumber\\
	{\rm s.t.} && d_i(p_i,\mathbf{x})= P\left(\sum\limits_j x_j\right),\\
	&& p_i\ge0, i=1,2,..., M, \nonumber
	\end{eqnarray}
	where $d_i(p_i,\mathbf{x})$ is the delivered price of SP $i$. First, we characterize the Nash equilibrium of this game.
	
	\begin{theorem}
		\label{thm:equilibrium_existence}
		There exists a Nash equilibrium for the game with multiple incumbents and no entrants  and bundled service.  If $B_i > B_j$, the equilibrium profit of SP $i$ is higher than that of SP $j$.
	\end{theorem}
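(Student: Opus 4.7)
The plan splits into two parts, corresponding to existence and to the bandwidth-profit ordering. For existence I would reformulate the game so that each incumbent's strategic variable is its quantity $x_i \in [0,\bar x]$ on a compact interval (with $\bar x$ chosen large enough that $P(\bar x) = 0$, so best responses stay interior); by Wardrop equality the price collapses to $p_i(\mathbf{x}) = P(Q) - (1-\alpha)\,g\!\left(\frac{(1-\alpha)x_i}{B_i}\right) - \alpha\,g\!\left(\frac{\alpha Q}{W}\right)$ with $Q = \sum_k x_k$, so the payoff is $\pi_i(x_i;\mathbf{x}_{-i}) = x_i\,p_i(\mathbf{x})$. I would then check that $\pi_i$ is concave in $x_i$: the revenue term $x_i P(Q)$ is concave because $P$ is concave decreasing; the licensed-congestion term equals $B_i\,y g(y)$ with $y = (1-\alpha)x_i/B_i$, and $y g(y)$ is convex because $(y g)'' = 2g' + y g'' \ge 0$ whenever $g$ is convex increasing, so its negative contribution to $\pi_i$ is concave; and $x_i\cdot \alpha g(\alpha Q/W)$ is convex in $x_i$ as a product of nonnegative, nondecreasing, convex factors. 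Continuity of $\pi_i$ in all arguments is immediate on the compact convex strategy set, so a pure-strategy Nash equilibrium exists by Debreu--Fan--Glicksberg.

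For the monotonicity claim, assume $B_i > B_j$ and that both SPs serve positive mass at an equilibrium $\mathbf{x}^*$ (if $x_j^* = 0$ the claim is trivial, and SP $i$ cannot be inactive while SP $j$ is active by a short contradiction with the Wardrop inequality $d_i(0,\mathbf{x}^*)\ge P(Q^*)$). Writing $y_k = (1-\alpha)x_k^*/B_k$, $A = -P'(Q^*) + (\alpha^2/W) g'(\alpha Q^*/W) > 0$, and $C = P(Q^*) - \alpha g(\alpha Q^*/W)$, the interior first-order condition $\partial \pi_k/\partial x_k = 0$ combined with the Wardrop equality $p_k^* = C - (1-\alpha) g(y_k)$ rearranges, for each active $k$, to
\begin{equation*}
C \;=\; (1-\alpha)\bigl[g(y_k) + y_k g'(y_k)\bigr] \;+\; \frac{A\,B_k\,y_k}{1-\alpha}.
\end{equation*}
Subtracting the $j$-identity from the $i$-identity and using $B_k y_k = (1-\alpha) x_k^*$ yields the key identity
\begin{equation*}
A\,(x_i^* - x_j^*) \;=\; (1-\alpha)\bigl[\phi(y_j) - \phi(y_i)\bigr],\qquad \phi(y):=g(y) + y g'(y),
\end{equation*}
with $\phi$ strictly increasing (since $\phi' = 2g' + y g'' > 0$). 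If $x_i^* \le x_j^*$, then $B_i > B_j$ forces $y_i = (1-\alpha)x_i^*/B_i < (1-\alpha)x_j^*/B_j = y_j$ directly, so $\phi(y_j) > \phi(y_i)$, making the right side strictly positive while the left side is nonpositive---contradiction. Hence $x_i^* > x_j^*$, whereupon the identity forces $\phi(y_j) > \phi(y_i)$ and thus $y_i < y_j$. Then from Wardrop, $p_i^* - p_j^* = (1-\alpha)[g(y_j) - g(y_i)] > 0$, and combined with $x_i^* > x_j^*$ this gives $\pi_i^* = x_i^* p_i^* > x_j^* p_j^* = \pi_j^*$.

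The step I expect to require the most care is the concavity check for existence, since the unlicensed congestion term couples $x_i$ with the aggregate $Q$ and one must also justify restricting to a compact strategy set so the existence theorem applies. The comparative-statics half, by contrast, becomes essentially routine once the combined FOC+Wardrop identity above is written down, because strict monotonicity of $\phi$ reduces it to a sign argument on $x_i^* - x_j^*$ versus $y_i - y_j$.
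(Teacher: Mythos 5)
There is a genuine gap, and it affects both halves of your argument: you have silently replaced the price-setting game with a quantity-setting game, and the two are not equivalent here. In the paper's game an SP's strategy is its announced price $p_i$, and when SP $i$ deviates in $p_i$ the \emph{entire} Wardrop equilibrium re-equilibrates --- the other SPs' served masses $x_j$ move, because all SPs are coupled through the shared unlicensed band and the common inverse demand $P(Q)$. Your reformulation instead has SP $i$ choose $x_i$ with $\mathbf{x}_{-i}$ held fixed and the price adjusting residually. Concretely, the relevant marginal term in the price game is $\frac{dp_i}{dx_i} = -\bigl(a_i + \tfrac{b}{1+b\sum_{j\ne i}1/a_j}\bigr)$ (this is what the paper computes by implicitly differentiating the full system of Wardrop constraints), whereas in your quantity game it is $-(a_i+b)$, where $a_j=\tfrac{(1-\alpha)^2}{B_j}g'(\cdot)$ and $b=\tfrac{\alpha^2}{W}g'(\cdot)-P'(\cdot)$. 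So your concavity check establishes existence for a Cournot-style game whose best responses (and hence equilibria) differ from those of the game in the theorem, and your ``key identity'' for the monotonicity claim is derived from the wrong first-order condition. (Your convexity computations for the quantity-game payoff are themselves fine; the problem is that they answer a different question. There is also a secondary issue that your interior FOC argument ignores the constraint $p_k\ge 0$ and the possibility of corner equilibria.)

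For comparison, the paper stays in price space throughout: the strategy set is $[0,P(0)]$; continuity of $p_i x_i(\mathbf{p})$ follows from the potential-function characterization of the Wardrop equilibrium (uniqueness plus the maximum theorem); concavity in own price comes from the implicit-differentiation formula above; and the profit ordering is obtained by a short imitation argument --- if SP $i$ with $B_i>B_j$ announces SP $j$'s equilibrium price, the equalized delivered prices force $\frac{x_i}{B_i}=\frac{x_j}{B_j}$ on the licensed bands, so $i$ serves strictly more customers at the same price, and its best response can only do better. That imitation argument is worth noting because it avoids interiority, differentiability, and FOC manipulations entirely; if you want to keep an FOC-based comparative-statics proof of the ordering, you would need to redo it with the correct price-game marginal term, where the $i$-dependence of $\tfrac{b}{1+b\sum_{j\ne i}1/a_j}$ makes the sign argument considerably less clean.
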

	We use Debreu's theorem \cite{debreu1952social} to prove the existence of Nash equilibrium. In the proof, we need to show that the strategy space is compact and convex, the profit of one SP should be continuous in all other SPs's announced prices and concave in its own price. Compactness of strategy space and concavity of the pay-offs are easy to show. We then use the results in \cite{menache2011network} to show the continuity part. Details are omitted here due to space constrains. 
	
	
	\begin{corollary}
		\label{col:symmetric_equilibrium} 
		If each SP has the same amount of licensed spectrum, there exists a symmetric Nash equilibrium. 
	\end{corollary}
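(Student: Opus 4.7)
The plan is to leverage the symmetry of the game when $B_1=\cdots=B_M=B$, together with the structural properties established in the proof of Theorem \ref{thm:equilibrium_existence}. The key observation is that under this symmetry assumption each SP's payoff function is invariant under any permutation of the SP labels, so the game is a symmetric concave game.

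First, I would restrict attention to symmetric strategy profiles. For any $p\ge 0$, define SP 1's best response against a common opponent price
\[
BR(p) \;=\; \arg\max_{q\ge 0}\; q\, x_1(q, p, \dots, p),
\]
where $x_1(\cdot)$ is SP 1's Wardrop demand when SP 1 announces $q$ and the remaining $M-1$ SPs each announce $p$. From the proof of Theorem \ref{thm:equilibrium_existence}, the strategy space can be restricted to a compact interval $[0,\bar p]$, the profit function is jointly continuous in the price vector, and it is concave in SP 1's own price. Hence $BR(\cdot)$ is a non-empty, convex-valued, upper semi-continuous correspondence mapping $[0,\bar p]$ into itself.

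Next, I would apply Kakutani's fixed point theorem to obtain a price $p^*$ with $p^*\in BR(p^*)$. By symmetry of the payoff functions in the SP labels (which is exactly where the hypothesis $B_i=B$ for all $i$ is used), if $p^*$ is SP 1's best response when the other $M-1$ SPs each play $p^*$, then the same $p^*$ is simultaneously SP $j$'s best response against the profile in which every other SP plays $p^*$, for each $j=1,\dots,M$. Therefore $(p^*,\dots,p^*)$ is a symmetric Nash equilibrium of the original game.

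The main obstacle I anticipate is verifying upper semi-continuity of $BR$ at boundary prices, in particular prices where the Wardrop allocation transitions between regimes in which some SPs move between zero and positive demand. However, these continuity issues were already handled in the proof of Theorem \ref{thm:equilibrium_existence} via the results of \cite{menache2011network}, and the restriction to symmetric opponent profiles simplifies the analysis further, since the allocation structure is then determined by the scalar pair $(q,p)$ alone.
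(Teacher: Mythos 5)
Your proposal is correct and follows essentially the same route as the paper: the paper simply cites Theorem 3 of \cite{cheng2004notes} (existence of a symmetric equilibrium in a symmetric game with compact convex strategy sets and continuous, own-quasiconcave payoffs), and your argument---restricting to the diagonal, forming the best-response correspondence against a common opponent price, and applying Kakutani with the compactness, continuity, and own-price concavity already established in the proof of Theorem \ref{thm:equilibrium_existence}---is exactly the standard proof of that cited result. The only difference is that you make the fixed-point argument explicit rather than delegating it to the reference.
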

	The Corollary follows directly from Theorem \ref{thm:equilibrium_existence} based on Theorem 3 in \cite{cheng2004notes}.

	By making stronger assumptions on the inverse demand and congestion function we get stronger results on the equilibrium. 
	\begin{theorem}
		\label{thm:supermodular}
		Assume the inverse demand function $P(x)$ and congestion cost $g(x)$ are both linear, the game with multiple incumbent SPs and no entrant SP is supermodular.
	\end{theorem}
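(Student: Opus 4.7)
The plan is to verify the standard sufficient conditions for a supermodular game: each SP has a one-dimensional compact strategy space (hence trivially a complete lattice), the profit $\pi_i(p_i,p_{-i})=p_i x_i$ is continuous in all prices, and the cross partial derivatives $\partial^2 \pi_i/\partial p_i\partial p_j$ are non-negative for every $j\neq i$. With scalar strategies, non-negative cross partials imply increasing differences, which together with the lattice structure yields supermodularity. The strategy space can be taken as $[0,P(0)]$ for each SP because any price above the demand intercept yields zero customers, so it is without loss of generality compact.

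Next I would obtain a closed-form solution for the served masses $\{x_i\}$ as affine functions of the price profile. Substituting $P(x)=a-bx$ and $g(x)=cx$ into the Wardrop condition
\begin{equation*}
p_i+(1-\alpha)g\!\left(\tfrac{(1-\alpha)x_i}{B_i}\right)+\alpha g\!\left(\tfrac{\alpha\sum_j x_j}{W}\right)=P\!\left(\sum_j x_j\right)
\end{equation*}
and restricting to the set of active SPs, I get, with $\mu_i:=c(1-\alpha)^2/B_i$, $\lambda:=b+c\alpha^2/W$, and $S:=\sum_j x_j$,
\begin{equation*}
\mu_i x_i+\lambda S=a-p_i.
\end{equation*}
Summing over active SPs and solving yields $S=(aK-\sum_j p_j/\mu_j)/(1+\lambda K)$ with $K=\sum_j 1/\mu_j$, and then $x_i=(a-p_i-\lambda S)/\mu_i$. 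This expression is affine in the price vector.

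From the affine form I read off
\begin{equation*}
\frac{\partial x_i}{\partial p_j}=\frac{\lambda}{\mu_i\mu_j(1+\lambda K)}\ge 0\qquad (j\neq i),
\end{equation*}
and $\partial x_i/\partial p_i\le 0$. Since $x_i$ is affine in prices, $\partial^2 x_i/\partial p_i\partial p_j=0$, and therefore
\begin{equation*}
\frac{\partial^2 \pi_i}{\partial p_i\partial p_j}=\frac{\partial x_i}{\partial p_j}+p_i\frac{\partial^2 x_i}{\partial p_i\partial p_j}=\frac{\lambda}{\mu_i\mu_j(1+\lambda K)}\ge 0,
\end{equation*}
which establishes increasing differences and hence supermodularity.

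The main obstacle is handling the boundary between active and inactive SPs, where the Wardrop conditions switch from equality to inequality and the closed-form above changes as the active set changes. My plan for this is to argue that $x_i(p_i,p_{-i})$ is continuous and piecewise affine across these boundaries, with $x_i\equiv 0$ on the inactive region, and then check increasing differences directly from Topkis's inequality
\begin{equation*}
\pi_i(p_i',p_j',p_{-ij})-\pi_i(p_i,p_j',p_{-ij})\ge \pi_i(p_i',p_j,p_{-ij})-\pi_i(p_i,p_j,p_{-ij})
\end{equation*}
for $p_i'>p_i$ and $p_j'>p_j$: raising $p_j$ weakly enlarges SP $i$'s active region and weakly increases $x_i$ on each piece, so the marginal profit of $p_i$ is monotone in $p_j$ across pieces. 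Combined with continuity in $p_{-i}$ (which follows from the closed form above on each active region and from the arguments used for Theorem \ref{thm:equilibrium_existence}), this completes the verification of Debreu/Topkis's conditions for a supermodular game.
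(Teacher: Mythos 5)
Your proof is correct and reaches the same key fact as the paper --- that each SP's served mass $x_i$ is affine in the price vector with $\partial x_i/\partial p_j\ge 0$ for $j\ne i$, whence $\partial^2(p_ix_i)/\partial p_i\partial p_j=\partial x_i/\partial p_j\ge 0$ --- but by a genuinely different and cleaner route. The paper writes the Wardrop conditions as ${\bf p}+{\bf Mx}=A\mathbbm{1}$ with ${\bf M}$ a diagonal-plus-rank-one matrix and then establishes the sign pattern of ${\bf M}^{-1}$ (positive diagonal, negative off-diagonal) by computing $\det({\bf M})$ via the matrix determinant lemma and analyzing the adjugate entry by entry through a case count of elementary row operations. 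You instead solve the same linear system explicitly by aggregating: summing $\mu_i x_i+\lambda S=a-p_i$ over active SPs gives $S$ in closed form, and the cross-derivative $\partial x_i/\partial p_j=\lambda/\bigl(\mu_i\mu_j(1+\lambda K)\bigr)$ falls out immediately. This is essentially Sherman--Morrison done by hand and avoids the paper's somewhat laborious adjugate sign analysis. You also buy something the paper silently skips: an explicit (if sketched) treatment of the boundary where the active set changes and the Wardrop conditions switch to inequalities, where you argue $x_i$ is continuous and piecewise affine and check Topkis's inequality across pieces. That step is stated rather than fully proved --- to make it airtight you would want to verify that the marginal gain $\pi_i(p_i',\cdot)-\pi_i(p_i,\cdot)$ is monotone in $p_j$ even when the pair $(p_i,p_i')$ straddles a kink --- but it is no less rigorous than the paper, which works only with the interior solution where all SPs serve positive mass.
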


	Supermodularity guarantees the existence of a Nash equilibrium, which is consistent with the result in Theorem \ref{thm:equilibrium_existence}. Also, supermodularity can give us more insights on the equilibrium.
	
	%

	\begin{theorem}
		\label{thm:linear_alpha}
		In the case with multiple incumbent SPs and no entrant SP, assume the inverse demand function $P(x)$ and congestion cost $g(x)$ are both linear. There exists some $\alpha_0\in(0,1)$, such that when $\alpha<\alpha_0$, the profit of each incumbent SP is higher than in the unbundled  case.
	\end{theorem}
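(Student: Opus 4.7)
The plan is to establish the strict inequality at $\alpha = 0$ and then extend it to a neighbourhood by continuity. First I would observe that when $\alpha = 0$ the bundled game degenerates: the coefficient $\alpha$ in front of the shared-congestion term vanishes and $(1-\alpha)$ reduces to $1$, so each incumbent's delivered price collapses to $p_i + g(x_i/B_i)$ and the unlicensed band plays no role whatsoever in the profits or the Wardrop conditions. The $\alpha = 0$ bundled game is therefore identical to pure licensed-only price competition among the $M$ incumbents, and under $P(x) = 1-x$, $g(x) = x$ its first-order conditions form a linear system with a unique Nash equilibrium that can be written down in closed form.

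Next I would compute the unbundled equilibrium under the same linear forms and compare the two profit vectors. By Lemma~\ref{lemma:price} we have $p_i^u = 0$ in every unbundled equilibrium, so the unlicensed Wardrop condition pins $X^u/W = 1 - Q$. Eliminating $X^u$ via this relation yields $1 - Q = (1 - X^l)/(1 + W)$, which shows that the effective margin available to each incumbent is scaled down by the factor $1/(1+W) < 1$ while the second-derivative term appearing in its FOC is inflated by an additive $2W/B_i$. Both effects strictly decrease the incumbent's equilibrium profit relative to the bundled $\alpha = 0$ value, yielding the strict inequality $p_i x_i \big|_{\mathrm{bundled},\,\alpha=0} > p_i^l x_i^l \big|_{\mathrm{unbundled}}$ for every $i \in \mathbb{I}$.

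Finally I would close the argument by continuity. The delivered prices, the Wardrop conditions and the profit functions are all polynomial in $\alpha$ and in the prices, so best responses vary continuously with $\alpha$; combined with the supermodular structure from Theorem~\ref{thm:supermodular}, which ensures existence and monotone continuous dependence of the extremal equilibria on the parameter, the equilibrium profit vector is continuous at $\alpha = 0$. Since the strict inequality above holds for each of the finitely many incumbents, I can take $\alpha_0 \in (0,1)$ small enough that the inequality persists for every $\alpha \in [0,\alpha_0)$. That $\alpha_0$ must be strictly less than $1$ is consistent with the observation that at $\alpha = 1$ the incumbents all effectively compete on the shared band and Lemma~\ref{lemma:price} would drive their profits toward zero, so the inequality must fail for $\alpha$ close to $1$.

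The main obstacle I expect is justifying continuity of equilibrium profits at $\alpha = 0$ when the $B_i$'s are asymmetric. In the linear-quadratic setting the FOCs define a linear system in the prices whose coefficient matrix depends smoothly on $\alpha$; I would argue strict diagonal dominance of this matrix (inherited from each SP's own quadratic term in profit) to obtain invertibility, and hence uniqueness and continuity of the equilibrium, in a neighbourhood of $\alpha = 0$. Should uniqueness fail to come out so cleanly, the supermodularity of the game would still let me track the smallest and largest equilibria via Topkis-type comparative statics and recover the finitely many strict inequalities needed.
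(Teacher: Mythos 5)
Your argument is correct in outline but takes a genuinely different route from the paper's. The paper does not anchor at $\alpha=0$: it introduces an auxiliary game with the $M$ incumbents plus one entrant, defines the explicit threshold $\alpha_0 = w_t/\bigl(\sum_{i\in\mathbb{I}} x_i + w_t\bigr)$ as the unlicensed-traffic share in the unbundled equilibrium, reruns the contradiction argument of Theorem~\ref{thm:incumbent_entrant2} to show bundling dominates for every $\alpha<\alpha_0$ in that auxiliary game, and then removes the entrant (modelled as the entrant raising its price until it serves no customers), invoking the supermodularity of Theorem~\ref{thm:supermodular} and Topkis-style monotone comparative statics to conclude that every incumbent's price, and hence profit, can only move up. Your route is more elementary --- no auxiliary entrant and no contradiction argument --- but it buys less: continuity at $\alpha=0$ yields only an unquantified small neighbourhood, whereas the paper's $\alpha_0$ is the actual unlicensed-traffic fraction and is typically far from zero (both satisfy the existence claim as stated). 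The one step you must actually supply is the strict inequality at $\alpha=0$ for asymmetric $B_i$: your claim that adding a free price-zero unlicensed band strictly lowers every incumbent's equilibrium profit is asserted via a heuristic reading of the first-order conditions. It is true --- the unbundled game is the licensed-only oligopoly with inverse demand replaced by $(1-X^l)/(1+W)$, and the resulting comparative statics can be verified in the linear case --- but it is precisely the content you would otherwise be importing from the prior work, and it is not immediate when the $B_i$ differ. Your continuity argument itself is sound in the linear setting (the equilibrium solves a linear system that is strictly diagonally dominant, hence nonsingular, at $\alpha=0$ and depends polynomially on $\alpha$); note that the paper uses supermodularity for a different purpose (entrant removal), not for continuity or equilibrium selection.
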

	
	The proof of the theorem uses the supermodularity from Theorem \ref{thm:supermodular}. We first consider an auxiliary case with $M$ SPs and one entrant SP and show that there exists some $\alpha$ such that the profit of incumbent SPs in the bundling case is the same as that in the unbundled case. Then we show that when the single entrant SP is removed from the market, the incumbent SPs will have incentive to increase their announced prices to increase profit. The supermodular property then guarantees that each SP's announced price and profit will change towards the same direction.
	
	Theorem \ref{thm:linear_alpha} shows that, when $\alpha$ is relatively small, applying the bundling scheme may benefit each of the incumbent SPs. This means when the percentage of customers using unlicensed service is low, the SPs are more willing to bundle the service to gain more profit. 
	
	Next we consider the limit of a large amount of unlicensed spectrum and a more general class of congestions functions. 
	\begin{theorem}
		\label{thm:expanding_band}
		In the case with multiple incumbent SPs and no entrant SP, when $W\to \infty$, with congestion function $g(x) = kx^p$, $k>0,p\ge1$, using bundling is equivalent to expanding the licensed band of each SP $i$  from $B_i$ to $\frac{B_i}{(1-\alpha)^{p+1}}$ and having no unlicensed spectrum $W$.	
	\end{theorem}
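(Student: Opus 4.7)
My approach would be to show that in the $W \to \infty$ limit, the bundling game collapses, term by term, to the no-unlicensed game with the rescaled bandwidths. The starting point is the delivered-price expression for incumbent $i$ when there are no entrants,
\begin{equation*}
d_i(p_i,\mathbf{x}) = p_i + (1-\alpha)\,g\!\left(\tfrac{(1-\alpha)x_i}{B_i}\right) + \alpha\,g\!\left(\tfrac{\alpha\sum_{j\in\mathbb{I}} x_j}{W}\right).
\end{equation*}
First I would substitute $g(x)=kx^p$ and observe that the unlicensed-band term is $\alpha k\bigl(\alpha \sum_j x_j / W\bigr)^p$. Since the inverse demand $P$ is decreasing and nonnegative, the aggregate demand in any candidate equilibrium is bounded above by some $Q_{\max}<\infty$ independent of $W$, so this term tends to $0$ as $W\to\infty$ uniformly in the strategy profile.

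Next, I would show that after dropping the vanishing term, the Wardrop condition $d_i = P\bigl(\sum_j x_j\bigr)$ reads
\begin{equation*}
p_i + k\,(1-\alpha)^{p+1}\,\tfrac{x_i^p}{B_i^p} = P\!\left(\sum_{j\in\mathbb{I}} x_j\right),
\end{equation*}
which is precisely the Wardrop condition one would obtain in a game with no unlicensed spectrum and a licensed band $\tilde B_i$ chosen so that $g(x_i/\tilde B_i) = k(1-\alpha)^{p+1}\, x_i^p/B_i^p$; solving for $\tilde B_i$ yields the rescaling claimed by the theorem. Because the profit objective $p_i x_i$ is identical across the two formulations, each SP's best-response problem in the limiting bundling game coincides, feasible-set and objective-wise, with that in the rescaled no-unlicensed game. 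The Nash equilibria, equilibrium customer masses, profits, and the aggregate quantities entering consumer surplus and social welfare therefore all coincide.

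The step I expect to require the most care is justifying the passage to the limit itself. At each finite $W$ a Nash equilibrium exists by Theorem \ref{thm:equilibrium_existence}, and I would argue that equilibrium prices and demands are uniformly bounded (using $\sum_j x_j\le P^{-1}(0)$ and the fact that no SP will price above $P(0)$) and that the best-response correspondences depend continuously on $W$ through the vanishing unlicensed-congestion term. A standard upper-hemicontinuity / Berge-type argument then delivers convergence of the finite-$W$ equilibria to equilibria of the limiting game, closing the loop. This limit argument is routine given the \emph{a priori} bounds, but it is the only non-algebraic piece of the proof; the term-by-term identification above is immediate.
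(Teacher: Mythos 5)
Your proposal is correct and follows essentially the same route as the paper: let the unlicensed-band congestion term vanish as $W\to\infty$ and identify the residual term $k(1-\alpha)^{p+1}(x_i/B_i)^p$ with the congestion of a suitably rescaled licensed band, so that the Wardrop conditions and profit objectives of the two games coincide. The only difference is that you additionally justify the passage to the limit via uniform bounds and an upper-hemicontinuity argument, a step the paper's one-line proof simply asserts.
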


	Theorem \ref{thm:expanding_band} shows that when the bundling scheme is applied, adding a large amount of unlicensed spectrum to the market actually benefits the incumbent by expanding the licensed band with a factor $\frac{1}{(1-\alpha)^{p+1}}$. On the contrary, if the SPs are competing without bundling, adding a large amount of unlicensed spectrum drives the delivered price to $0$, which implies $0$ profits for the incumbents.  In addition, we can see that when $\alpha$ is large, the factor $\frac{1}{(1-\alpha)^{p+1}}$ will be large as well. However, when $\alpha$ is small, even if the regulator adds an infinite amount of unlicensed spectrum to the market, the impact on the market is relatively small.
	
	\begin{theorem}
		\label{thm:n_incumbent_sw}
		Consider the linear symmetric case with $M$ incumbent SPs. The SP's profit, customer surplus and social welfare all increases with the bandwidth of unlicensed spectrum $W$ if all SPs use bundling.
	\end{theorem}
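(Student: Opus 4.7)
The plan is to exploit the symmetric Nash equilibrium guaranteed by Corollary~\ref{col:symmetric_equilibrium} in the linear symmetric setting. With $g(z)=z$, $P(q)=1-q$, common licensed bandwidth $B$, and no entrants, the Wardrop condition for each active SP $i$ reads $p_i+\beta x_i+\gamma Q = 1-Q$, where $\beta := (1-\alpha)^2/B$, $\gamma := \alpha^2/W$, and $Q=\sum_j x_j$. First I would linearize this system in $p_i$ (holding $\{p_j\}_{j\ne i}$ fixed) to obtain the perceived demand slope; solving the resulting linear equations for $\partial Q/\partial p_i$ and $\partial x_i/\partial p_i$ gives
\[
\frac{\partial x_i}{\partial p_i} = -\frac{\beta + (M-1)(1+\gamma)}{\beta\bigl[\beta + M(1+\gamma)\bigr]}.
\]
The first-order condition $x_i+p_i(\partial x_i/\partial p_i)=0$, combined with the symmetric Wardrop equation $p+x[\beta+M(1+\gamma)]=1$, then yields the closed-form symmetric equilibrium
\[
p^* = \frac{\beta}{2\beta+(M-1)(1+\gamma)}, \qquad x^* = \frac{\beta+(M-1)(1+\gamma)}{[\beta+M(1+\gamma)]\,[2\beta+(M-1)(1+\gamma)]}.
\]

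Next I would reparameterize by $t:=1+\gamma$, which is strictly decreasing in $W$, so that it suffices to prove per-SP profit $p^*x^*$, total demand $Q^*=Mx^*$, and customer surplus $CS=(Q^*)^2/2$ all strictly decrease in $t$. Setting $a:=\beta$ and $b:=M-1$, logarithmic differentiation gives
\[
(a+bt)(2a+bt)\,\frac{d}{dt}\ln(p^*x^*) = -b^2 t - \frac{(b+1)(a+bt)(2a+bt)}{a+(b+1)t},
\]
which is manifestly negative (when $b=0$ the first term vanishes but the second remains strictly negative). A parallel computation for the total demand yields
\[
(a+bt)(2a+bt)\,\frac{d}{dt}\ln Q^* = ab - \frac{(b+1)(a+bt)(2a+bt)}{a+(b+1)t},
\]
and one verifies that $(b+1)(a+bt)(2a+bt)-ab[a+(b+1)t]$ expands to $a^2(b+2)+2ab(b+1)t+(b+1)b^2 t^2>0$, so this derivative is also negative. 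Hence $p^*x^*$ and $Q^*$ both strictly increase in $W$, which immediately forces $CS=(Q^*)^2/2$ to increase as well.

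Finally, since $SW=CS + M\cdot p^*x^*$ is the sum of two quantities each shown to increase in $W$, social welfare inherits the monotonicity and the theorem follows. The main obstacle is purely algebraic: the equilibrium price, demand, and profit are rational functions of $t$ whose monotonicity is not apparent from inspection, and the clean negative expressions above emerge only after deliberate cancellation of the ``own-band'' and ``shared-band'' contributions. Two sanity checks I would use during the computation are the $M=1$ specialization, in which $p^*x^*$ collapses to $1/[4(1+\beta+\gamma)]$ and agrees with the bundled monopoly formula behind Theorem~\ref{thm:mono}, and the $W\to\infty$ limit, which should be consistent with the licensed-band expansion picture of Theorem~\ref{thm:expanding_band}.
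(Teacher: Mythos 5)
Your proposal is correct and follows essentially the same route as the paper: derive the closed-form symmetric equilibrium price and customer mass from the first-order and Wardrop conditions (your $p^*$, $x^*$ match the paper's expressions after substituting $\beta=M(1-\alpha)^2/B_t$, $\gamma=\alpha^2/W$), and then verify that per-SP profit and total demand are monotone in $W$, from which customer surplus and social welfare inherit the monotonicity. The only difference is cosmetic — you establish monotonicity by logarithmic differentiation in $t=1+\gamma$, while the paper does it by inspection after dividing numerator and denominator by $W$.
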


	Note that when the SPs are competing without bundling, although the customer welfare and social welfare may increase with the bandwidth of unlicensed spectrum\footnote{In fact the social welfare may decrease with $W$ when $W$ is small\cite{nguyen2015free}.}, the profit of the SPs is always decreasing in $W$. However, according to Theorem \ref{thm:n_incumbent_sw}, the profit of the SPs is also able to increase with $W$ as well as the customer surplus and social welfare with bundling.

	\section{Controllable $\alpha$}
	\label{sec:varying_alpha}
	In the previous section, we assumed that $\alpha$ is fixed. In this section, we consider $\alpha$ as a parameter that can be determined by the incumbent SPs to maximize either the profit or the social welfare. This can be achieved for example by the SP's network dynamically controlling which form of access a customer is served by. We consider two possible scenarios, selfish incumbents and altruistic incumbents. If the incumbent SPs are selfish, they care more about their profits when choosing $\alpha$. On the contrary, the altruistic SPs may try to optimize the social welfare by choosing appropriate $\alpha$. In this section, we focus on the analytical results when $W\to \infty$ to simplify the calculation. In Section \ref{sec:numerical_results}, we provide numerical results with finite $W$, which show that the results we get in the asymptotic cases can be translated to cases with finite $W$.
	
	\subsection{One incumbent SP and one entrant SP}
	In this subsection, we focus on the case with one incumbent SP and one entrant SP. We formulate it as a two-stage game. In the first stage, the incumbent SP decides its own $\alpha$ and in the second stage, the SPs compete with each other on price for the customers as in the previous sections. First we consider the case that the incumbent SP chooses $\alpha$ to optimize its profit.
	
	\begin{theorem}
		\label{thm:alpha_one_in_one_en}
		Consider a linear model with one incumbent and one entrant SP with $W\to \infty$. When $B\le \frac{4}{3}$, the optimal $\alpha^* = 1-\frac{\sqrt{3B}}{2}$ and the resulting profit is $\frac{1}{48}$. When $B > \frac{4}{3}$, the optimal $\alpha^* = 0$ and the optimal profit of incumbent is $\frac{B}{(4+3B)^2}$.
	\end{theorem}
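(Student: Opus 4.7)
The plan is to solve this as a two-stage game by backward induction: fix $\alpha \in [0,1)$, compute the Nash equilibrium of the second-stage pricing game, and then optimize the resulting incumbent profit over $\alpha \in [0,1]$.

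First I would simplify the delivered prices in the limit $W \to \infty$. With linear $g$, the unlicensed-congestion contributions $\alpha(\alpha x_1 + x_2)/W$ and $(\alpha x_1 + x_2)/W$ both vanish, leaving the incumbent's delivered price as $p_1 + (1-\alpha)^2 x_1 / B$ and the entrant's as just $p_2$. Setting $\beta := (1-\alpha)^2$ and assuming both SPs are active, the Wardrop conditions $p_1 + \beta x_1/B = p_2 = 1 - x_1 - x_2$ give the closed-form allocation
$$x_1 = \frac{B(p_2 - p_1)}{\beta}, \qquad x_2 = 1 - p_2 - \frac{B(p_2 - p_1)}{\beta}.$$

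Second, I would solve the simultaneous Nash pricing game. Each SP's profit is strictly concave in its own price. The entrant's FOC on $p_2 x_2$ yields the best response $p_2 = (\beta + B p_1)/[2(\beta+B)]$, while the incumbent, whose profit is the bilinear $p_1 \cdot B(p_2 - p_1)/\beta$, has the simpler FOC $p_1 = p_2/2$ (the single most error-prone step: one must not substitute the entrant's best response first, as that would silently convert the game into Stackelberg and give a different answer). Solving the two FOCs gives the unique interior equilibrium $p_1^* = \beta/(4\beta+3B)$, $p_2^* = 2\beta/(4\beta+3B)$, $x_1^* = B/(4\beta+3B)$, and $x_2^* = 2(\beta+B)/(4\beta+3B)$. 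In particular $x_1^*, x_2^* > 0$ for every $\beta \in (0,1]$, so the interior-equilibrium assumption is self-consistent across the whole range and no corner regime needs to be patched in. The incumbent's equilibrium profit is therefore
$$\pi_1(\beta) = \frac{B \beta}{(4\beta + 3B)^2}.$$

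Third, I would maximize $\pi_1$ over $\beta \in [0,1]$, which corresponds to $\alpha \in [0,1]$. Differentiating yields $\pi_1'(\beta) = B(3B - 4\beta)/(4\beta+3B)^3$, so $\pi_1$ is increasing on $[0,3B/4]$ and decreasing on $[3B/4, \infty)$ with a unique interior maximizer at $\beta^\dagger = 3B/4$. If $B \le 4/3$ this lies in $[0,1]$ and is feasible, giving $\alpha^* = 1 - \sqrt{3B}/2$ and $\pi_1(3B/4) = (3B^2/4)/(6B)^2 = 1/48$. If $B > 4/3$ then $\beta^\dagger > 1$, the profit is still increasing at $\beta=1$, and the constrained maximum is attained at the boundary $\beta = 1$, i.e., $\alpha^* = 0$, with profit $B/(4+3B)^2$. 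The principal obstacle is purely bookkeeping: setting up the second-stage Nash correctly (so that $p_1 = p_2/2$ falls out cleanly) and then remembering to compare the interior stationary point $\beta^\dagger$ against the feasibility bound $\beta \le 1$ induced by $\alpha \ge 0$.
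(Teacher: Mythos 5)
Your proposal is correct and follows essentially the same route as the paper: it arrives at the same limiting equilibrium $p_1 = \frac{(1-\alpha)^2}{4(1-\alpha)^2+3B}$, $x_1 = \frac{B}{4(1-\alpha)^2+3B}$ and then maximizes the profit $\frac{B(1-\alpha)^2}{[4(1-\alpha)^2+3B]^2}$ over the feasible range of $\alpha$. You actually supply more detail than the paper's proof, which merely states these equilibrium values and asserts concavity, whereas your explicit best-response derivation and the sign analysis of $\pi_1'(\beta)$ with $\beta=(1-\alpha)^2$ make the unimodality and the boundary case $B>\frac{4}{3}$ fully rigorous.
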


	The result shows that when there are unlimited unlicensed resources and the bandwidth $B$ of licensed spectrum is relatively small, the incumbent is willing to bundle the licensed and unlicensed service to gain more profit. However, when $B$ is large, even if there is plenty of unlicensed spectrum, the incumbent may not want to bundle the service. Intuitively, when the incumbent bundles its service, it has to announce a lower price, which may result in a drop in its profit.

	\subsection{Multiple incumbent SPs and no entrant SPs}
	Here we consider the case with $M$ incumbent SPs and no entrants. 
	
	
	To simplify the model, we consider linear case with $M$ symmetric SPs. Suppose the total amount of licensed spectrum is $B_t$, then we have $B_i =  \frac{B_t}{M}$, for $\forall i$. In this case, each SP is equivalent. Hence by Corollary \ref{col:symmetric_equilibrium}, a symmetric equilibrium exists, which we focus on in the following. Further, to simplify our analysis in this part we assume e a common choice of $\alpha$ is agreed upon by the SPs to either maximize their welfare or to maximize social welfare.
	
	We first fix  $M$ and let $W\to\infty$. 
	\begin{theorem}
		\label{thm:optimize_profit}
		In the linear symmetric case, if $W\to\infty$, there exists some $B_{th}$, such that when $B_t\le B_{th}$, the $\alpha$ that maximizes the profit of incumbents is $1-\sqrt{\frac{B_t}{B_{th}}}$ and social welfare decreases with $M$. When $B_t> B_{th}$, the optimal $\alpha$ is $0$ and social welfare increases with $M$.
	\end{theorem}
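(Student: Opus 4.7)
The plan is to first invoke Theorem \ref{thm:expanding_band} with $p=1$: since $W\to\infty$ and $g$ is linear, the bundled game with $M$ symmetric incumbents (each owning $B_t/M$) and bundling parameter $\alpha$ is equivalent, in the sense of prices, demands, profits, and therefore both $CS$ and $SW$, to a game with $M$ symmetric incumbents each possessing effective licensed bandwidth $\tilde B(\alpha)=\frac{B_t/M}{(1-\alpha)^{2}}$ and no unlicensed spectrum. As $\alpha$ varies over $[0,1)$, $\tilde B$ ranges over $[B_t/M,\infty)$. Hence optimizing the incumbents' profit over $\alpha$ is equivalent to optimizing the symmetric-equilibrium profit of the reduced game over $\tilde B$ in this interval.

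Second, I would compute the symmetric Nash equilibrium of the reduced $M$-player game, whose existence is guaranteed by Corollary \ref{col:symmetric_equilibrium} and whose uniqueness (and comparative-statics tractability) is supported by the supermodularity in Theorem \ref{thm:supermodular}. Writing SP $i$'s best-response FOC from the Wardrop condition $p_i+x_i/\tilde B=1-\sum_j x_j$ and imposing symmetry $p_j=p$, $x_j=x$ for $j\neq i$ yields the closed forms $p^{*}=\frac{1}{2+(M-1)\tilde B}$, $x^{*}=\frac{\tilde B(1-p^{*})}{1+M\tilde B}$, and $\pi^{*}(\tilde B,M)=p^{*}x^{*}=\frac{\tilde B[1+(M-1)\tilde B]}{[2+(M-1)\tilde B]^{2}(1+M\tilde B)}$. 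Differentiating $\log\pi^{*}$ in $\tilde B$ and checking signs as $\tilde B\to 0^{+}$ and $\tilde B\to\infty$ shows that for every $M\ge 2$ the profit is single-peaked in $\tilde B$, with a unique interior maximizer $b^{*}(M)\in(0,\infty)$ depending only on $M$.

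Third, with $B_{th}:=M\,b^{*}(M)$, the two cases of the theorem fall out from whether $b^{*}(M)$ lies in the feasible interval $[B_t/M,\infty)$. If $B_t\le B_{th}$ then $B_t/M\le b^{*}(M)$, so I can achieve $\tilde B=b^{*}(M)$ by choosing $\alpha$ with $(1-\alpha)^{2}=\frac{B_t/M}{b^{*}(M)}=\frac{B_t}{B_{th}}$, giving $\alpha^{*}=1-\sqrt{B_t/B_{th}}$. If $B_t>B_{th}$ then the feasible range $[B_t/M,\infty)$ lies entirely to the right of the peak, where $\pi^{*}$ is strictly decreasing, so the constrained maximum is at the boundary $\tilde B=B_t/M$, i.e. $\alpha^{*}=0$.

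Finally, for the welfare claims I would substitute the symmetric equilibrium into $SW=\tfrac{1}{2}(Mx^{*})^{2}+Mp^{*}x^{*}$. In the regime $B_t\le B_{th}$, the bandwidth at the optimum is $\tilde B=b^{*}(M)$, so $SW$ depends only on $M$; differentiating in $M$ and using the envelope identity (the FOC defining $b^{*}$ kills the indirect derivative through $b^{*}(M)$) reduces the sign of $\frac{dSW}{dM}$ to a sign check on a polynomial in $b^{*}(M)$ and $M$, which I would verify using the FOC itself to eliminate $b^{*}(M)$. In the regime $B_t>B_{th}$, $\tilde B=B_t/M$ and $SW$ is an explicit rational function of $M$ (with $B_t$ fixed) whose derivative in $M$ I would compute directly and show is positive. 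I expect the main obstacle to be the first welfare claim: $b^{*}(M)$ is the root of a cubic in $\tilde B$ with $M$-dependent coefficients and has no clean closed form, so proving monotonicity of $SW(M,b^{*}(M))$ requires the envelope reduction together with careful bookkeeping to translate the FOC into a usable inequality, rather than any direct substitution.
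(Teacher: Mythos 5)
Your overall strategy is the same as the paper's, just entered through a different door: the paper works directly with $\beta=(1-\alpha)^2$ and shows that $\lim_{W\to\infty}\frac{d(x_ip_i)}{d\beta}$ has a numerator proportional to a strictly decreasing cubic $f(\beta/B_t)$ with a unique positive root $k^*$ depending only on $M$, sets $B_{th}=1/k^*$, and splits into the feasible/infeasible cases exactly as you do; your reduction via Theorem \ref{thm:expanding_band} to a no-unlicensed game with effective bandwidth $\tilde B=\frac{B_t/M}{(1-\alpha)^2}$ is a legitimate repackaging of the same computation (your closed forms $p^*=\frac{1}{2+(M-1)\tilde B}$ and $\pi^*$ agree with the paper's (\ref{eqn:equilibrium_price_n_incumbent}) in the $W\to\infty$ limit, and $B_{th}=Mb^*(M)=1/k^*$). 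The characterization of $\alpha^*$ and the second welfare claim (where $\tilde B=B_t/M$ is explicit and the derivative in $M$ is a direct computation) are fine.

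The genuine gap is your ``envelope identity'' for the first welfare claim. The first-order condition defining $b^*(M)$ is $\frac{\partial \pi^*}{\partial \tilde B}=0$, i.e.\ $b^*$ is a critical point of the \emph{profit}, not of the \emph{welfare}; since $SW=\tfrac12(Mx^*)^2+Mp^*x^*$ contains the consumer-surplus term, $\frac{\partial SW}{\partial \tilde B}\big|_{b^*}\neq 0$ in general, so the indirect term $\frac{\partial SW}{\partial \tilde B}\cdot\frac{db^*}{dM}$ does \emph{not} vanish and cannot be dropped. The paper handles exactly this point by implicit differentiation of the FOC $-2k^{*3}-3t_Mk^{*2}+t_M^2=0$ (with $t_M=\frac{M-1}{M}$) to obtain $\frac{dk^*}{dt_M}$ in (\ref{eqn:dk_dt}), substituting that into the full total derivative of $SW$, and only then using the FOC to cancel the one group of terms in (\ref{eqn:sw_t}) that happens to contain $-2k^{*3}-3t_Mk^{*2}+t_M^2$ as a factor; the remaining terms must still be shown negative by hand. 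So your plan as stated would compute the wrong quantity ($\partial SW/\partial M$ at fixed $b^*$ rather than $dSW/dM$); replace the envelope step with the implicit-function-theorem computation of $db^*/dM$ and carry both terms of the chain rule, and the argument goes through as in the paper.
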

	
	
	
	Similar to the case with only one incumbent and one entrant SP, Theorem \ref{thm:optimize_profit} shows that when there are multiple incumbent SPs, they may only want to bundle the licensed and unlicensed service when they have a limited amount of licensed spectrum. Additionally, when $B_t\le B_{th}$ the optimal announced price is not related to the total licensed bandwidth $B_t$, because the optimal choice of $\alpha$ counteracts the effect of $B_t$ when $W\to \infty$. When the bandwidth of licensed spectrum is above the threshold, even if the unlicensed resources are unlimited, the incumbent SPs may not be willing to use it. In this case, the announced price  is decreasing in the total licensed bandwidth $B_t$. In fact the threshold $B_{th}$ is  decreasing  in the total number of incumbent SPs in the market. Intuitively, when more incumbent SPs are in the market, competition on the unlicensed band would be more intense. The incumbent SPs are more willing to provide service on licensed spectrum only. 
	
	For the social welfare part of Theorem \ref{thm:optimize_profit}, when the licensed spectrum is quite limited, adding more incumbent SPs into the market harms the social welfare, because more competitors will cause $\alpha$ to decrease. That means less unlicensed resources are utilized and as a result the social welfare decreases. However, when there is plenty of licensed spectrum, all the SPs have already chosen not to use the unlicensed spectrum. Adding more competitors then benefits the social welfare.

	Next we consider the case where the incumbent SPs are  altruistic, which means they would like to achieve a Nash equilibrium that maximizes the social welfare.\footnote{Alternatively, the choice of $\alpha$ could be chosen by  a regulator who seeks to maximize welfare.}   We then have the following result.
	\begin{lemma}
		\label{thm:optimize_sw}
		If $W\to\infty$,  $\alpha = 1$ is socially optimal.
	\end{lemma}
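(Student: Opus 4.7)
The plan is to upper bound the social welfare by a quantity that depends only on the total served mass, and then exhibit that this bound is attained when $\alpha=1$ in the limit $W\to\infty$.

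First, I would rewrite $SW$ using the Wardrop conditions. For every active SP $i$, $p_i = P(Q) - c_i$, where $c_i$ denotes the congestion experienced by SP $i$'s customers and $Q=\sum_j x_j$. Plugging this into (\ref{eqn:sedefinition}) and combining with (\ref{eqn:csdefinition}), the $Q\cdot P(Q)$ terms cancel, giving
\[ SW \;=\; \int_0^Q P(q)\,dq \;-\; \sum_i x_i\, c_i. \]
Since in the linear case $P(q)=1-q$ is nonnegative exactly on $[0,1]$, and each congestion contribution $x_i c_i$ is nonnegative, I would conclude that, uniformly in $\alpha$, $W$, and the chosen equilibrium,
\[ SW \;\le\; \int_0^{1} P(q)\,dq \;=\; \tfrac{1}{2}. \]

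Next, I would verify that this bound is attained under $\alpha=1$ as $W\to\infty$. When $\alpha=1$, the licensed-band term $(1-\alpha)\,g\bigl((1-\alpha)x_i/B_i\bigr)$ vanishes identically, and the unlicensed-band term $g\bigl(\sum_j x_j/W\bigr)$ tends to $0$ as $W\to\infty$. The delivered price therefore collapses to the announced price, and the pricing subgame reduces to Bertrand competition among $M\ge 2$ symmetric, effectively uncapacitated sellers; at the symmetric equilibrium guaranteed by Corollary \ref{col:symmetric_equilibrium}, prices are driven to $0$. Then $P(Q)=0$ forces $Q=1$, and the formula above yields $SW = 1/2$, matching the upper bound.

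The main obstacle I anticipate is handling the $W\to\infty$ limit cleanly. For any finite $W$, the unlicensed congestion under $\alpha=1$ is strictly positive, so the Bertrand equilibrium price is of order $g(Q/W)>0$ and $Q$ remains strictly below $1$. I would close this by treating equilibrium prices and quantities as continuous functions of the vanishing congestion parameter $1/W$: the symmetric equilibrium price and $1-Q$ both shrink to $0$ with $1/W$, so $SW(\alpha=1,W)\to 1/2$. Combined with the uniform upper bound, this shows $\alpha=1$ is socially optimal in the limit.
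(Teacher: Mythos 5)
Your proof is correct and follows essentially the same route as the paper, which justifies this lemma only with the one-line remark that at $\alpha=1$ all prices go to $0$ and all customers are served, which is socially optimal. Your decomposition $SW=\int_0^Q P(q)\,dq-\sum_i x_i c_i$ and the resulting uniform bound $SW\le \tfrac12$ supply exactly the rigor that remark leaves implicit, and your limiting claims agree with the paper's closed-form symmetric equilibrium in (\ref{eqn:equilibrium_price_n_incumbent})--(\ref{eqn:equilibrium_customer_n_incumbent2}), which give $p=0$ and $Q=W/(1+W)\to 1$ at $\alpha=1$ for $M\ge 2$.
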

	
	When $\alpha = 1$, all the prices go to $0$ and all customers get served, which is socially optimal. Essentially, the  SPs are pushing all the customers to the unlicensed band. This yields zero profits but achieves the optimal welfare.

	\subsection{Social welfare gap}
	Next, we consider the gap in social welfare between the profit maximizing choice of $\alpha$ and the social optimal $\alpha$.
	
	We consider a similar scenario as in the previous subsection. 
	\begin{theorem}
		\label{thm:sw_gap}
		In the case of infinite unlicensed spectrum, if we also assume an infinite number of incumbent SPs, i.e., $M\to \infty$, the social welfare gap between the profit optimal and welfare optimal settings is given by 
		\begin{equation}
		\label{eqn:sw_gap}
		Gap = \frac{1}{2+\max\{2,B_t\}}.
		\end{equation}
	\end{theorem}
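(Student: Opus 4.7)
The plan is to compute the social welfare at the welfare-optimal $\alpha$ and at the profit-optimal $\alpha$ in the $W\to\infty$, $M\to\infty$ limit explicitly, and subtract.

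For the welfare-optimal side, Lemma~\ref{thm:optimize_sw} already identifies $\alpha=1$ as socially optimal when $W\to\infty$. At this value every customer is served on the (asymptotically congestion-free) unlicensed band, the delivered price reduces to the announced price, Bertrand-style competition among the incumbents drives every $p_i$ to zero, and the Wardrop condition $P(Q)=0$ fixes $Q=1$. Since the congestion cost vanishes, $SW^{*}=\int_0^1(1-q)\,dq=\frac{1}{2}$.

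For the profit-optimal side I apply Theorem~\ref{thm:expanding_band} with the linear congestion $g(x)=x$ (so $p=1$): for any fixed $\alpha$, the bundled game is equivalent to the licensed-only game in which the total effective bandwidth is $B'_t := B_t/(1-\alpha)^2$. Combining the Wardrop equations with the first-order conditions of the price game yields a symmetric equilibrium price $p=1/(2+B'_t-B'_t/M)$ and aggregate demand $Q=B'_t(1-p)/(1+B'_t)$; letting $M\to\infty$ gives total profit $pQ\to B'_t/(2+B'_t)^2$. Parametrizing $u=(1-\alpha)^2\in(0,1]$, this equals $B_t u/(2u+B_t)^2$, whose unique critical point on $(0,\infty)$ is $u^{*}=B_t/2$. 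Hence when $B_t\le 2$ the interior optimum is attained, giving $\alpha^{*}=1-\sqrt{B_t/2}$ and $B'_t=2$; when $B_t>2$ the corner $\alpha^{*}=0$ is optimal and $B'_t=B_t$. In both regimes $B'_t=\max\{2,B_t\}$, so the threshold $B_{th}$ of Theorem~\ref{thm:optimize_profit} tends to $2$ as $M\to\infty$.

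To finish, the social welfare at the profit-optimal equilibrium is $\int_0^Q(1-q)\,dq$ minus the total congestion cost, which in the symmetric setting with per-SP demand $Q/M$ and per-SP bandwidth $B'_t/M$ sums to $M\cdot(Q/M)^2/(B'_t/M)=Q^2/B'_t$. Substituting $Q=B'_t/(2+B'_t)$ and simplifying gives $SW_{\mathrm{prof}}=B'_t/\bigl(2(2+B'_t)\bigr)=\frac{1}{2}-\frac{1}{2+B'_t}$. Subtracting from $SW^{*}=\frac{1}{2}$ yields $\mathrm{Gap}=1/(2+\max\{2,B_t\})$, which is exactly~(\ref{eqn:sw_gap}). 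The main obstacle is solving the finite-$M$ Nash equilibrium of the reduced price game cleanly and justifying that the $M\to\infty$ limit commutes with the profit optimization over $\alpha$, so that the interior and boundary regimes merge neatly into the single $\max\{2,B_t\}$ expression.
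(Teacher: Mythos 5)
Your proposal is correct and reaches the paper's formula by essentially the same overall strategy (compute the welfare-optimal value $\tfrac12$, compute the welfare at the profit-maximizing $\alpha$, and subtract), but the execution of the profit-optimal side differs in a way worth noting. The paper inherits the optimal $\alpha$ from its proof of Theorem~\ref{thm:optimize_profit}: it optimizes at finite $M$, which produces the cubic $-2k^{*3}-3t_Mk^{*2}+t_M^2=0$ for the threshold $B_{th}=1/k^*$, and only then sends $M\to\infty$ so that $k^*\to\tfrac12$ and $B_{th}\to 2$. You instead invoke Theorem~\ref{thm:expanding_band} to collapse the bundled game into a licensed-only game with effective bandwidth $B'_t=B_t/(1-\alpha)^2$, pass to the $M\to\infty$ limit first, and then maximize $B_tu/(2u+B_t)^2$ over $u=(1-\alpha)^2$; this replaces the cubic with a one-line first-order condition and unifies the two regimes as $B'_t=\max\{2,B_t\}$, which is cleaner. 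The price you pay is exactly the interchange-of-limits issue you flag at the end; it is benign here and can be discharged by citing the paper's finite-$M$ computation (the profit-optimal $\beta^*=\min\{B_t/B_{th},1\}$ and the resulting price and welfare are continuous in $t_M=\tfrac{M-1}{M}$, so the finite-$M$ optima converge to the limiting optimum), but as written your argument does not close that gap itself. Your equilibrium formulas $p=1/\bigl(2+\tfrac{M-1}{M}B'_t\bigr)$ and $Q=B'_t(1-p)/(1+B'_t)$ agree with the paper's (\ref{eqn:equilibrium_price_n_incumbent}) and (\ref{eqn:small_B_solution}) after the substitution $k^*=1/B'_t$, and your welfare expression $SW_{\mathrm{prof}}=\tfrac12-\tfrac1{2+B'_t}$ reproduces the paper's values $\tfrac14$ for $B_t\le 2$ and $\tfrac{B_t^2+2B_t}{2(2+B_t)^2}$ for $B_t>2$.
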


	Theorem \ref{thm:sw_gap} shows that when the total licensed bandwidth $B_t$ is small and $M\to\infty$, the social welfare gap is independent of $B_t$. That is because the SPs are able to choose appropriate $\alpha$ to bundle the licensed and unlicensed service, which actually compensates the effect of the amount of licensed bandwidth. However, when the licensed band $B_t$ is large, the incumbent SPs only use the licensed band and as a result the social welfare gap decreases with $B_t$. Also notice that  when the number of incumbent SPs goes to infinity, the social welfare gap between the optimizing profit and optimizing social welfare is upper bounded by $\frac{1}{4}$, which means the price of anarchy is lower bounded by $\frac{1}{2}$ .

	
	\section{Numerical Results}
	\label{sec:numerical_results}
	In this section we give some numerical examples illustrating our results. We first consider a linear model where $P(x) = 1-x$, $g(x) = x$ with one incumbent SP and one entrant SP. 
	
	First, we examine how the optimal $\alpha$ varies with the additional unlicensed spectrum when the bandwidth $B$ of licensed spectrum is fixed. The results are shown in Fig. \ref{fig:opt_alpha}.  We consider two different cases with $B=1$ and $B=3$, respectively. When bandwidth of licensed spectrum is relatively small, i.e. $B = 1$, we can see that the $\alpha$ that optimizes the profit of incumbent will first increase and then decrease  with $W$. That is because when $B$ and $W$ are both small, the incumbent SP is willing to offload some of the traffic to the unlicensed band without harming its profit on the licensed band. However, when $W$ becomes larger,  the incumbent will have to face more competition with the entrant SP, which may lower the delivered price in the market and consequently harm its profit on the licensed band. As a result the optimal $\alpha$ will be deceasing in $W$. Note that the optimal $\alpha$ never drops to $0$ when $B=1$ but does reach $0$ when $B=3$. That illustrates the results in Theorem \ref{thm:alpha_one_in_one_en}. As we know when $\alpha = 0$, the bundling case reduces to the exclusive use case. This means that when $B$ is small, we are able to find some $\alpha\in(0,1)$ such that the profit of incumbent in the bundling case is higher than that in the exclusive use case. But when $B$ is large and $W$ is in some range, the exclusive use case may provide the best profit for the incumbent. This property coincides with the results in Theorem \ref{thm:exclusiveprofit}. In the case that the incumbent SP aims to optimize the social welfare, the optimal $\alpha$ is increasing with the unlicensed bandwidth $W$. It is not surprising because sending more traffic on the wider band may help lower the congestion cost in the market and lead to better social welfare. Note that when $B=1$, there is an intersection of the solid and dashed curves around $W=1$, which indicates both the incumbent's profit  and social welfare are maximized at the same $\alpha$. Furthermore, when $B$ increases to $3$, the intersection point appears when $W$ is smaller, because when the incumbent SP has enough licensed spectrum, it may not be willing to enter the unlicensed market which will lead to a loss of social welfare.
	\begin{figure}[htbp]
		\centering
		\includegraphics[scale=0.5]{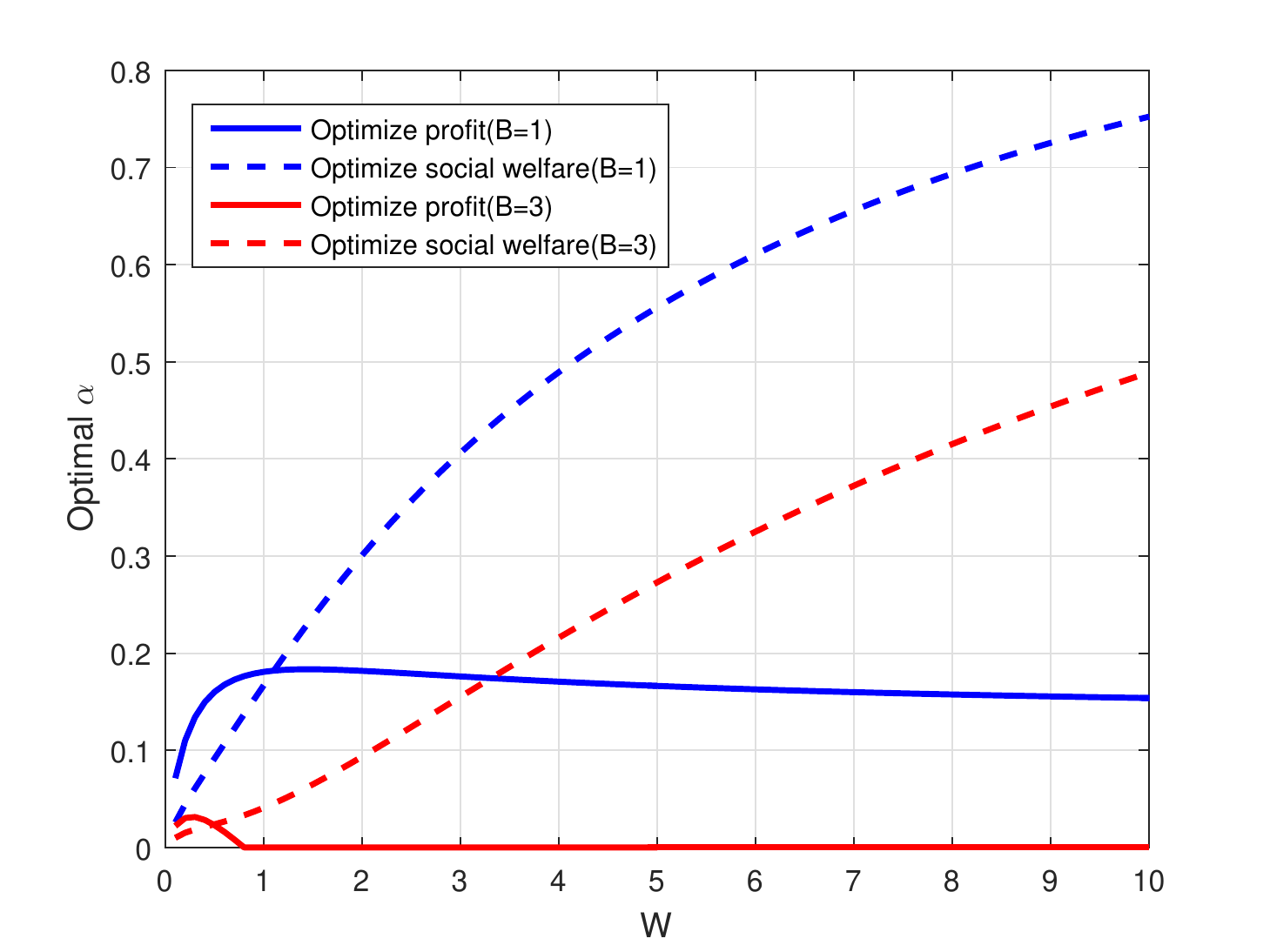}
		\caption{The optimal $\alpha$ chosen by the incumbent SP.}
		\label{fig:opt_alpha}
	\end{figure}
	
	
	Next, we focus on the case with $B=1$ to see how profit and social welfare compares with the unbundled and exclusive use cases when $\alpha$ is chosen to optimize either profit or social welfare. We first look at the profit optimal case.  The results are shown in Fig. \ref{fig:opt_profit}. We can see that when the incumbent's profit is optimized, bundling can obviously achieve the best profit. At the same time, for a wide range of unlicensed bandwidth, the social welfare of bundling is better than both the unbundled  and exclusive access cases. 
	\begin{figure}[htbp]
		\centering
		\subfigure[Profit of incumbent ]{
			\label{fig:opt_profit:a} 
			\includegraphics[width=1.68in]{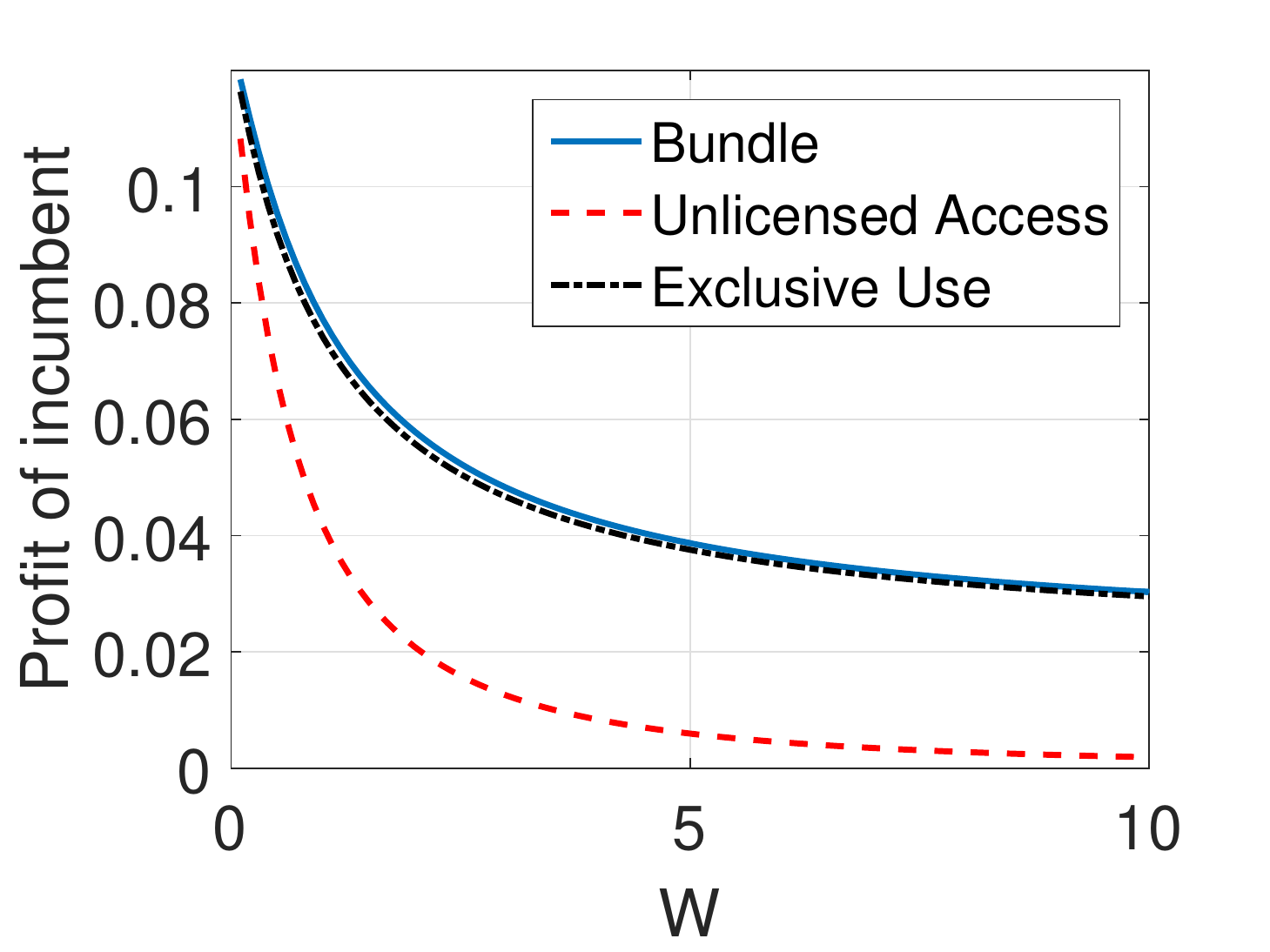}}
		\subfigure[Social welfare]{
			\label{fig:opt_profit:b} 
			\includegraphics[width=1.68in]{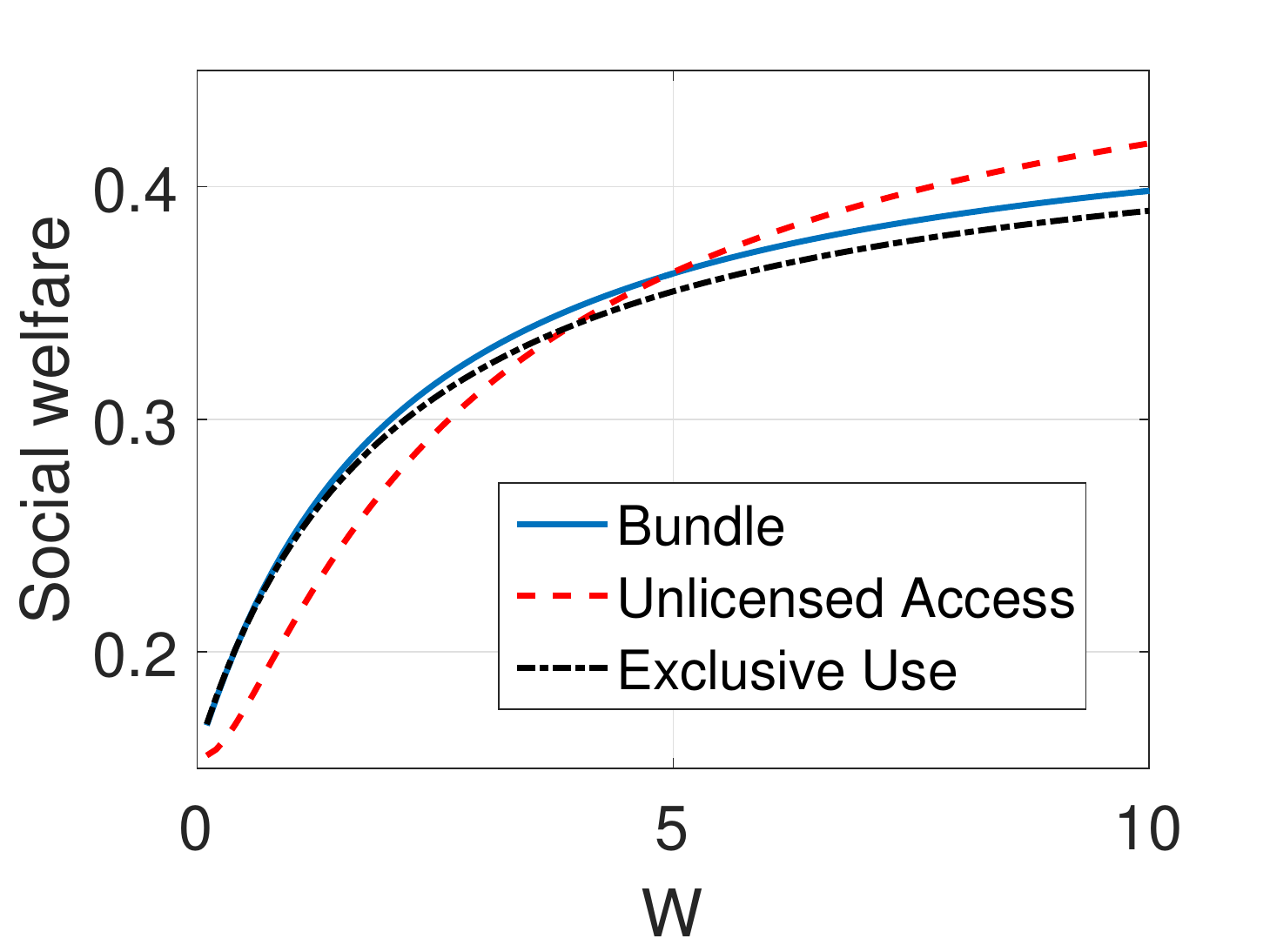}}
		\caption{Impact of choosing  $\alpha$ to maximize the incumbent's profit  when $B=1$.  }
		\label{fig:opt_profit} 
	\end{figure}
	
	\begin{figure}[htbp]
		\centering
		\subfigure[Profit of incumbent ]{
			\label{fig:opt_sw:a} 
			\includegraphics[width=1.68in]{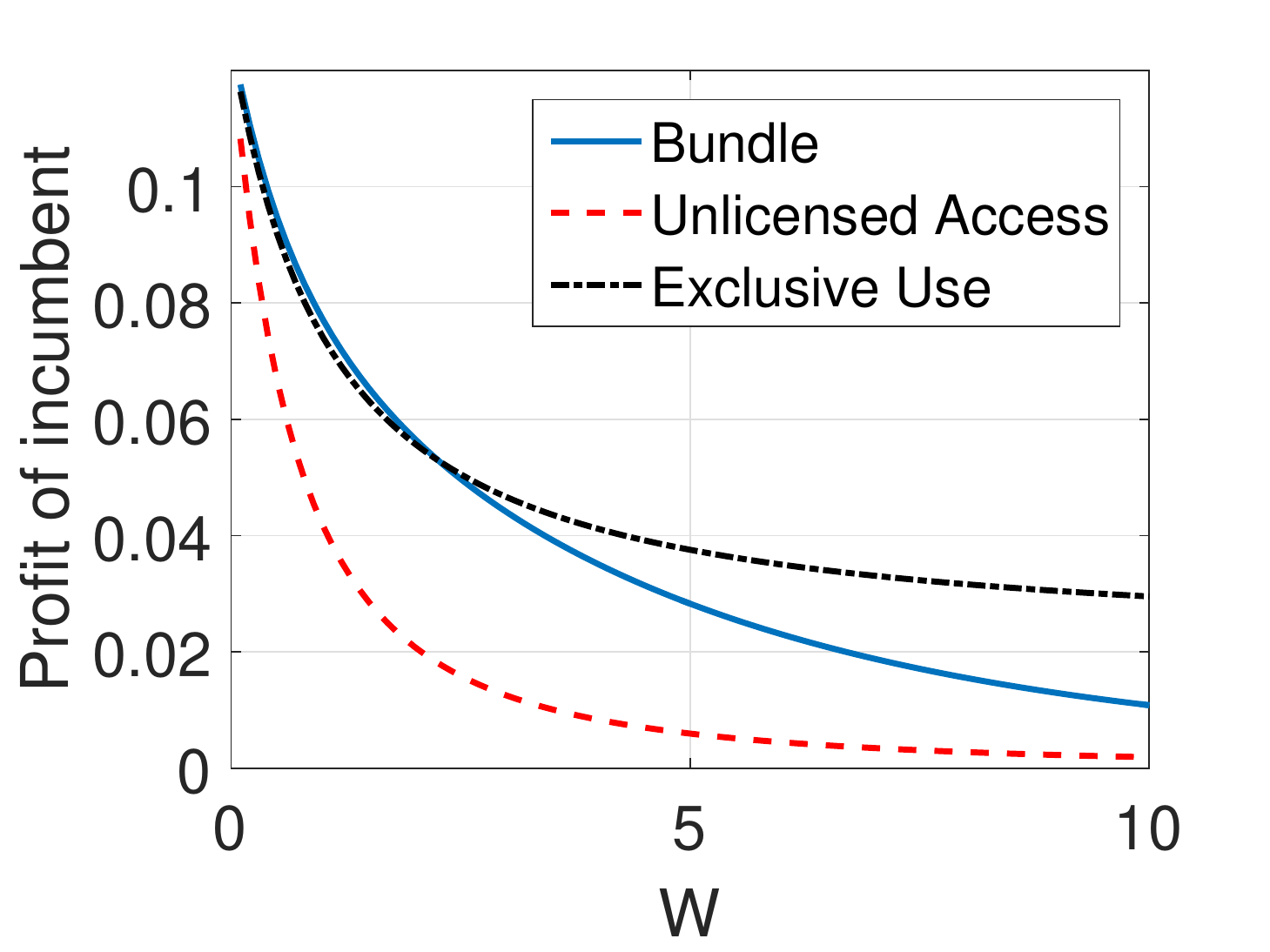}}
		\subfigure[Social welfare]{
			\label{fig:opt_sw:b} 
			\includegraphics[width=1.68in]{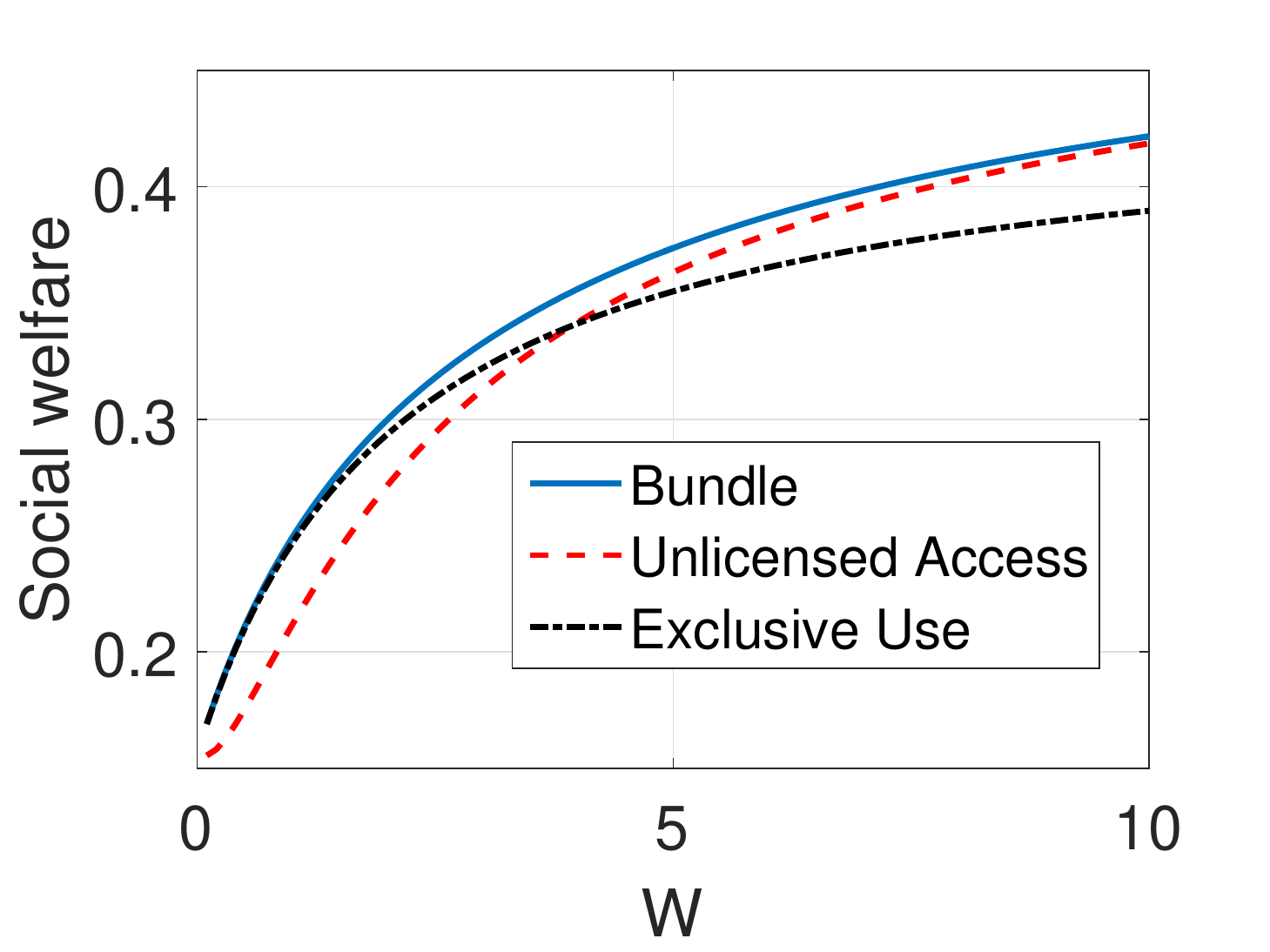}}
		\caption{Impact of choosing  $\alpha$ to maximize the social welfare when $B=1$.  }
		\label{fig:opt_sw} 
	\end{figure}
	
	
	Next we consider the welfare optimal case.  The results are shown in Fig. \ref{fig:opt_sw}. When social welfare is maximized, there is still some range of $W$ such that the incumbent's profit with bundling  is the highest among the three cases. But when $W$ increases, the profit with  bundling is converging to that without bundling. As discussed previously, the bundling case is a kind of generalization of the unbundled  and exclusive use cases.  Further, it can be seen from the figures that it is possible to use the bundling strategy to improve both the profit of incumbent and the social welfare at the same time.

	
	Next we consider the case with $M$ symmetric incumbent SPs and no entrant SP again assuming linear congestion and demand.  We first fix the total bandwidth of licensed spectrum $B_t$ to see how the social welfare gap between the social welfare optimizing and the profit maximizing cases changes with the unlicensed bandwidth $W$.  The results are shown in Fig. \ref{fig:sw_gapB}. In general, the social welfare gap increases with $W$. When $W$ increases, the social welfare maximizing case tends to serve more customers to improve  welfare while the profit maximizing case tends to keep the price at a certain level so that the incumbent's profits are not affected. As a result, the gap is getting larger. This implies the social welfare gap we obtain in Theorem \ref{thm:sw_gap} is actually an upper bound. In Fig. \ref{fig:sw_gapB}(a) where $B_t=1$, the social welfare gap increases with  the number of incumbent SPs, because  when the number of incumbents increases, there is more competition in the market which is good for the social welfare. But for the profit maximization case, the SPs limit the customer mass served on the unlicensed band to make their profits higher. This makes the social welfare worse. However, when the total licensed bandwidth increases to $B_t=3$, Fig. \ref{fig:sw_gapB}(b) shows that increasing $M$ actually makes the social welfare gap smaller.  That is because when there is plenty of licensed spectrum, the SPs are not using the unlicensed spectrum in the profit maximizing case. As a result, adding more unlicensed spectrum to the market benefits the social welfare. 
	\begin{figure}[htbp]
		\centering
		\subfigure[$B_t=1$ ]{
			\label{fig:sw_gapB:a} 
			\includegraphics[width=1.68in]{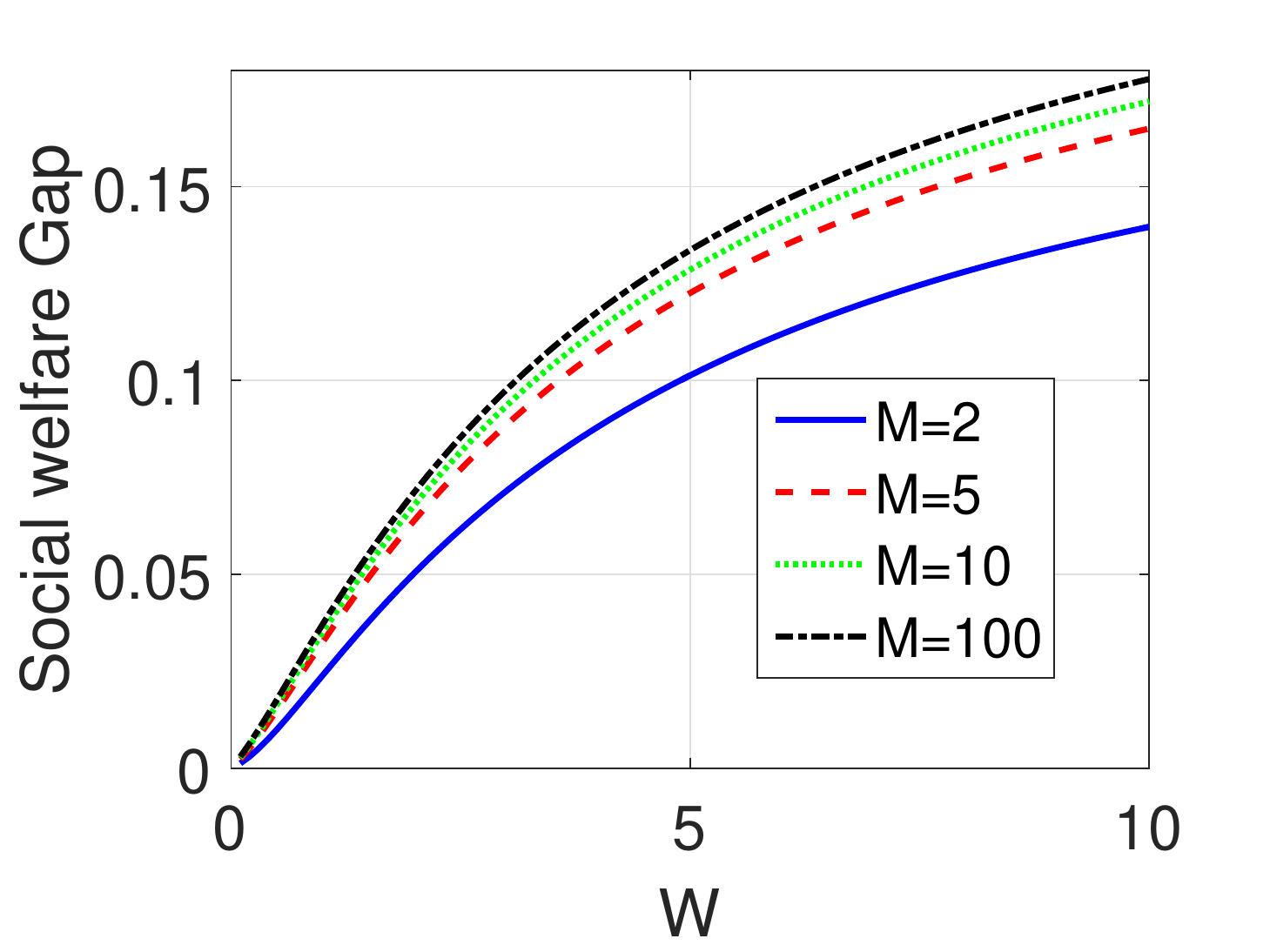}}
		\subfigure[ $B_t=3$]{
			\label{fig:sw_gapB:b} 
			\includegraphics[width=1.68in]{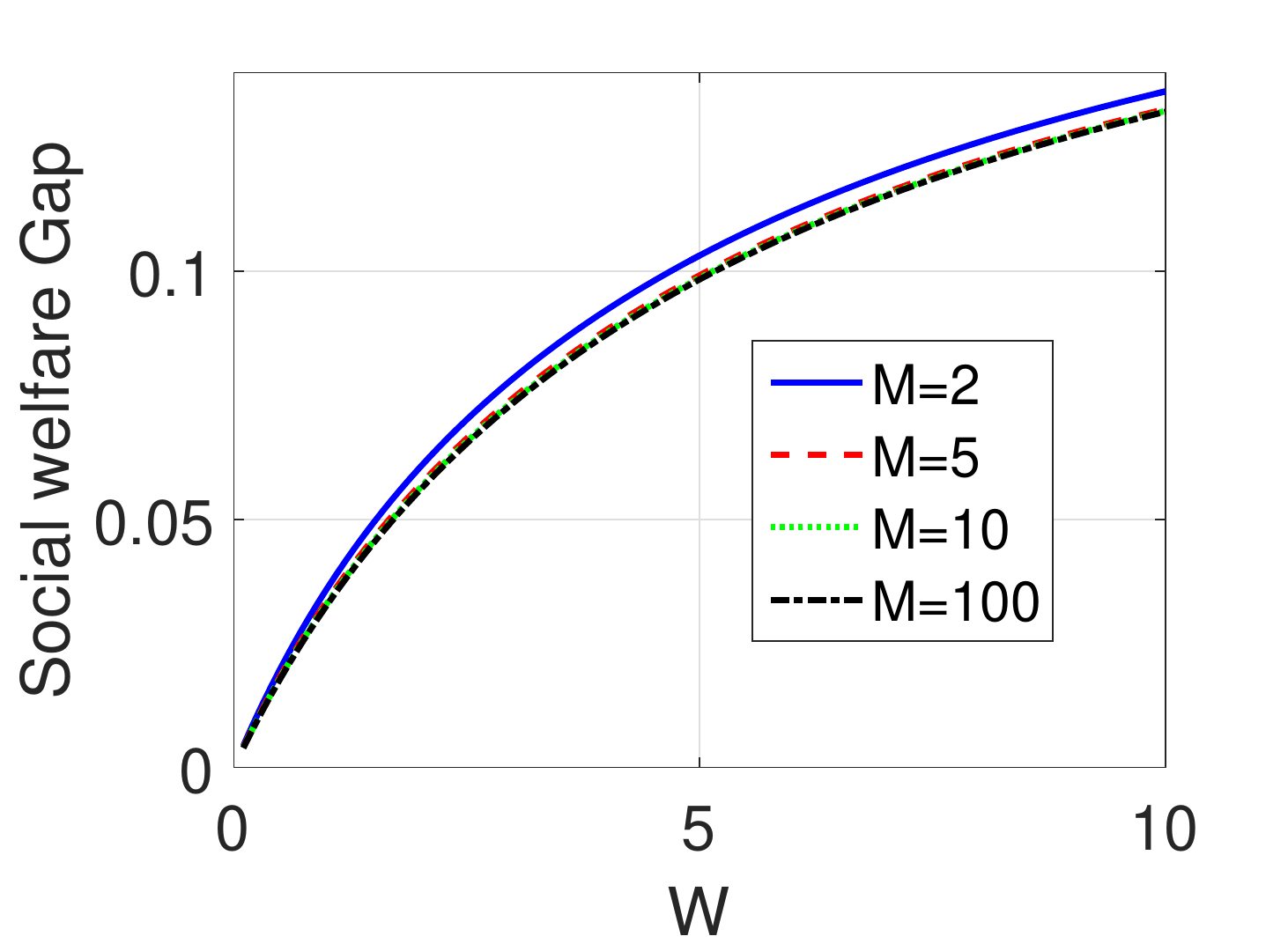}}
		\caption{Social welfare gap when $B_t$ is fixed. }
		\label{fig:sw_gapB} 
	\end{figure}
	
	
	Next we fix the bandwidth of unlicensed spectrum to see how the social welfare gap changes with the total licensed bandwidth $B_t$. Results are shown in Fig. \ref{fig:sw_gapW}. In each of these two figures when either $W=1$ or $W\to\infty$, there is a turning point in each curve, which corresponds to the threshold $B_{th}$ in Theorem \ref{thm:optimize_profit}. When we fix $W = 1$, the social welfare gap first increases then decreases with licensed bandwidth $B_t$ and number of SPs $M$. When the unlicensed bandwidth is increased to infinity, we can see that the social welfare gap remains constant when $B$ is small and decreases afterwards. The constant region appears because the profit maximizing $\alpha$ is chosen to compensate for the change of $B_t$ and the social welfare remains constant in the social welfare maximizing case. This is consistent with the results in Theorem \ref{thm:optimize_profit} and Theorem \ref{thm:sw_gap}.
	\begin{figure}[htbp]
		\centering
		\subfigure[ $W=1$ ]{
			\label{fig:sw_gapW:a} 
			\includegraphics[width=1.68in]{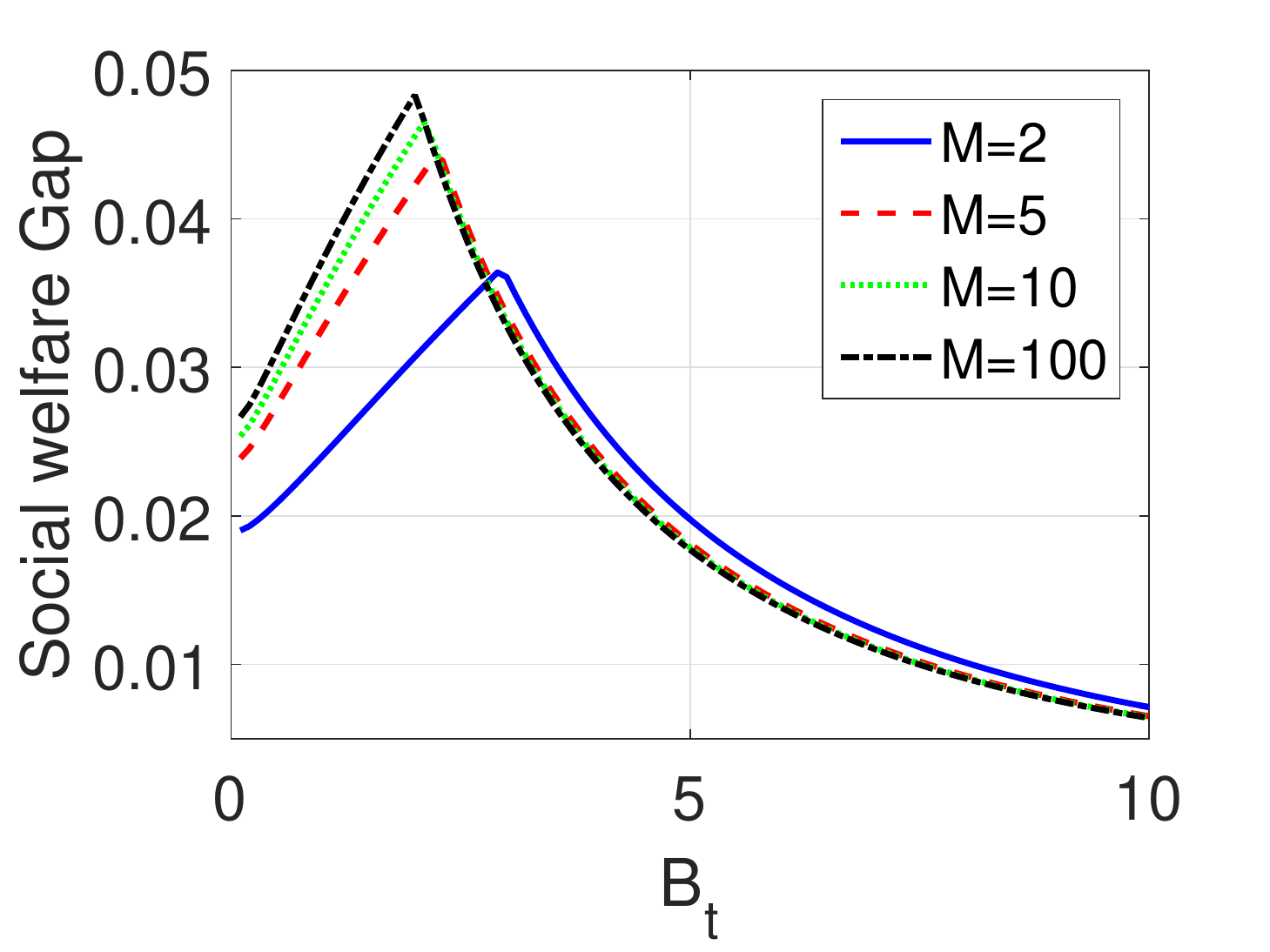}}
		\subfigure[  $W\to\infty$]{
			\label{fig:sw_gapW:b} 
			\includegraphics[width=1.68in]{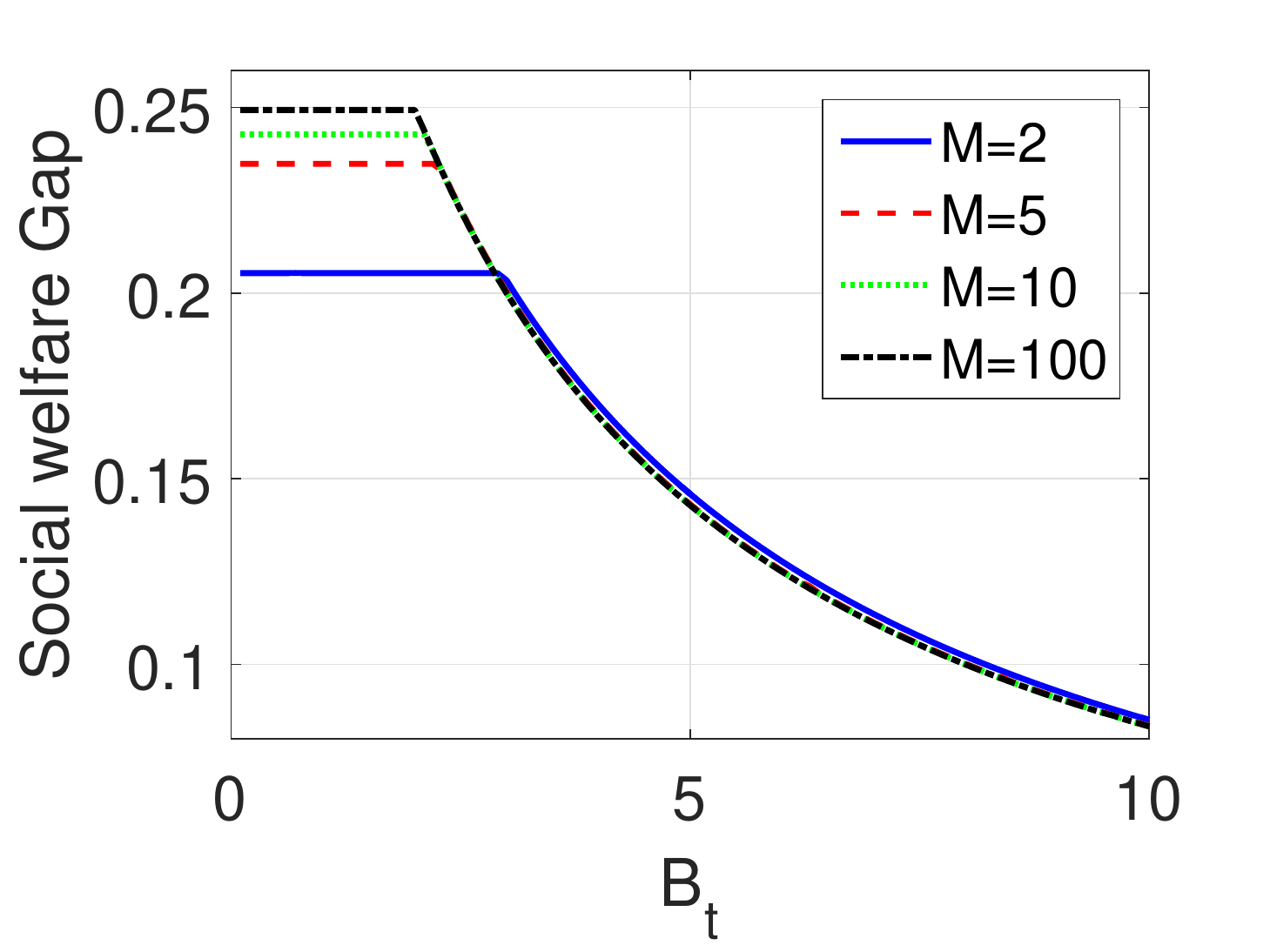}}
		\caption{Social welfare gap when $W$ is fixed.  }
		\label{fig:sw_gapW} 
	\end{figure}

	
	\section{Conclusion}
	\label{sec:conclusion}
	In this paper, we considered the use of bundling in the competition among incumbent and entrant SPs on licensed and unlicensed spectrum. We first analyzed the case where the users' average percentage of time on the unlicensed band $\alpha$ is fixed. We compared the bundling case with the unlicensed access case used in \cite{nguyen2011impact,nguyen2015free} and the exclusive use case with one incumbent and one entrant. We showed that compared to these models, bundling  can improve the profits of the incumbent SP as well as those of the entrant.   We also analyzed the case with multiple incumbents and no entrant. We showed that for a linear class of congestion costs and demands, the game is supermodular and thus the profit of all incumbent SPs can be improved over the unbundled case. We then viewed $\alpha$ as a parameter controlled by the SPs.  We characterized  the social welfare gap between the profit maximizing and the social welfare optimizing cases.
	
	In our work, only homogeneous customers were considered.  We only use one parameter $\alpha$ to characterize the behavior of all users. Exploring the case with heterogeneous customers is one  possible extension. Furthermore, when finding the social welfare gap, we only considered  the symmetric case, where each SP possesses the same amount of licensed spectrum, so that the choice of $\alpha$ reduces to an optimization problem.  It would be also interesting to investigate the asymmetric case where each SP can play an extensive form game on both $\alpha$ and the price.

	%
	%
	
	\bibliographystyle{IEEEbib}
	\bibliography{mybib}
	
   \newpage

	\appendices
	
	\section{Proof of Theorem \ref{thm:mono}}
	\begin{proof}
		For the optimization problem in (\ref{eqn:monoprice}), we can show that $p^l_1 = p^u_1$. It follows that  customers suffer the same amount of congestion on both bands, i.e. $\frac{x^l_1}{B} = \frac{x^u_1}{W}$. Using this,  the optimization problem in (\ref{eqn:monoprice}) can be rewritten as 
		\begin{eqnarray}
		\label{eqn:monoprice1}
		\max_{x_t}&&p_tx_t\\
		{\rm s.t.} && p_t+g\left(\frac{x_t}{B+W}\right)= P(x_t),\nonumber\\
		&&  x_t\ge0. \nonumber
		\end{eqnarray}
		We focus on the congestion part. The congestion for the bundling problem is $c_1(x) =(1-\alpha)g\left(\frac{(1-\alpha)x}{B}\right)+\alpha g\left(\frac{\alpha x}{W}\right) $ and the congestion for the unbundled case is $c_2(x )=g\left(\frac{x_t}{B+W}\right)$. The second order derivative of $c_1(x)$ with respect to $\alpha$ is
		\begin{small}
			\begin{eqnarray}
			\frac{\partial^2 c_1(x)}{\partial^2\alpha} &=&\frac{x}{B}g'\left(\frac{(1-\alpha)x}{B}\right)+\frac{x}{W}g'\left(\frac{\alpha x}{W}\right)\nonumber\\
			&&+\frac{x}{B}\left[g'\left(\frac{(1-\alpha)x}{B}\right)+(1-\alpha) \frac{x}{B}g''\left(\frac{(1-\alpha)x}{B}\right)\right]\nonumber \\
			&&+\frac{x}{W}\left[g'\left(\frac{\alpha x}{W}\right)+\alpha \frac{x}{W}g''\left(\frac{\alpha x}{W}\right)\right]. \nonumber
			\end{eqnarray}
		\end{small}
		Since $g(x)$ is convex increasing, it follows that  $\frac{\partial^2 c_1(x)}{\partial^2\alpha} \ge 0$. Thus for any given $x$, $c_1(x)$ is a convex function of $\alpha$. Note that $\alpha = \frac{W}{B+W}$ is a solution to the equality $\frac{\partial c_1(x)}{\partial\alpha} =0$ and when $\alpha = \frac{W}{B+W}$, $c_1(x) = c_2(x)$ That means for any given $x$, $c_1(x) \ge c_2(x)$.  
		
		As illustrated in Fig. \ref{fig:mono1}, the two optimization problems in (\ref{eqn:monoprice}) and (\ref{eqn:monoprice1}) are each seeking a rectangle with maximum area in the constrained region between the inverse demand and congestion curve. Since $c_1(x) \ge c_2(x)$, the profit of bundling cannot exceed that of unlicensed access.
		\begin{figure}[htbp]
			\centering
			\includegraphics[scale=0.25]{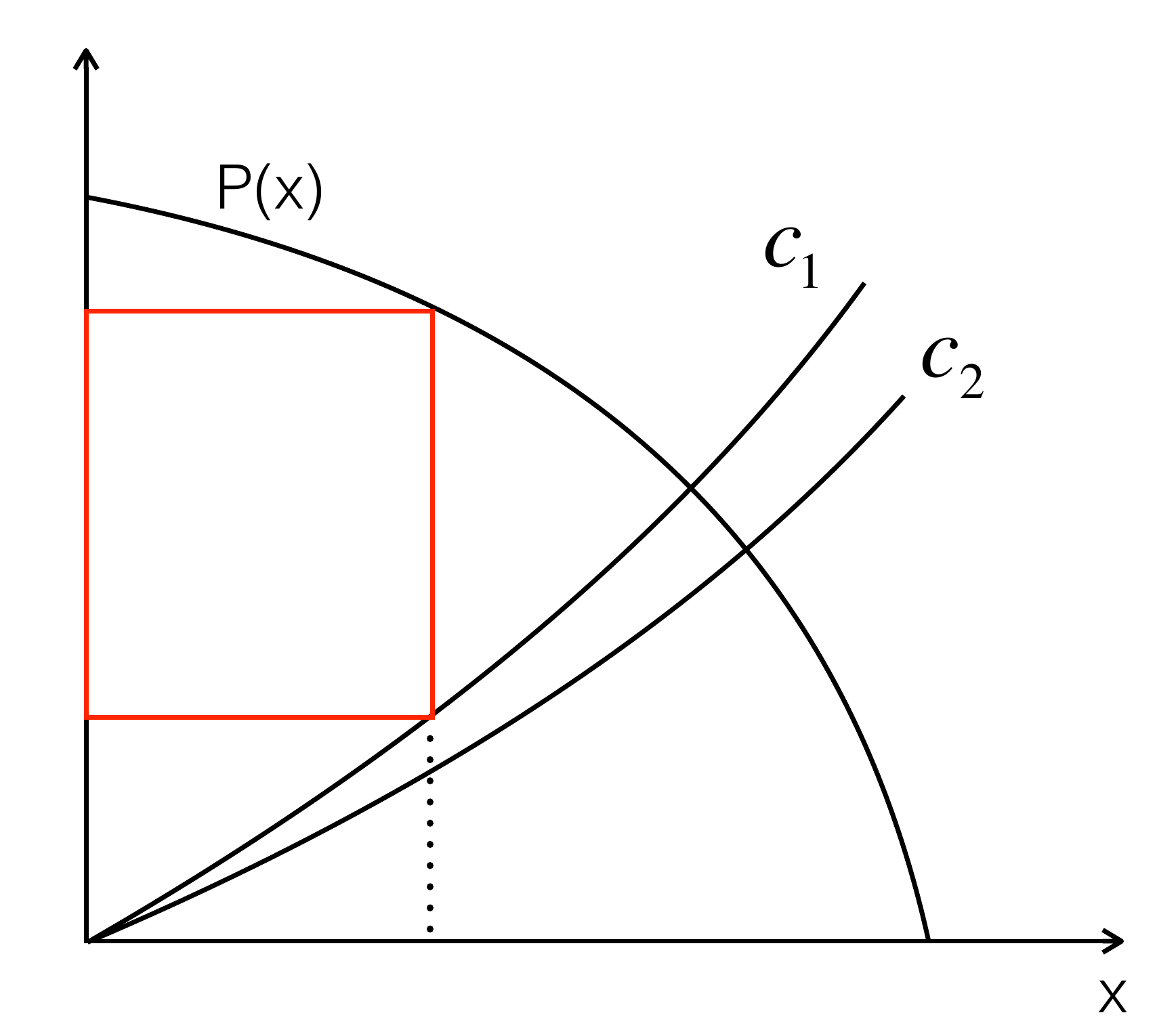}
			\caption{Illustration of the proof of Theorem \ref{thm:mono}}
			\label{fig:mono1}
		\end{figure}
	\end{proof}
	
	\section{Proof of Theorem \ref{thm:incumbent_entrant1}}
	\begin{proof}
		Consider the unbundled case, we have already know that the price on the unlicensed band is $0$ from Lemma \ref{lemma:price}. Let $p^{*}_1,x^{*}_1$ be the equilibrium price and customer mass on the licensed band, respectively for the incumbent SP in the unbundled case. And let $w^*_t$ be equilibrium total customer mass on the unlicensed band in the unbundled case. 
		
		Let $\alpha_0 = \frac{w^*_t}{w^*_t+x^*_1}$. In the unbundled case, for any given $B$ and $W$, we have $x^*_1>0$ and $w^*_t >0$, thus $\alpha\in(0,1)$. It can be verified that $\hat{p}_1 = (1-\alpha_0) p^*_1$ is the solution to the following optimization problem given $\hat{p}_2 = 0$, i.e. the best response of the incumbent SP given entrant announce price $0$.
		\begin{eqnarray}
		\label{eqn:incumentrant1}
		\max_{p_1}&&p_1x_1\\
		{\rm s.t.} && p_1+(1-\alpha_0)g\left(\frac{(1-\alpha_0)x_1}{B}\right)+\alpha_0 g\left(\frac{\alpha_0x_1 + x_2}{W}\right) \nonumber\\
		\label{constraint1}
		&&= P(x_1+x_2),\\
		\label{constraint2}
		&& p_2+ g\left(\frac{\alpha_0x_1 + x_2}{W}\right) = P(x_1+x_2),\\
		&&  p_1\ge0, p_2\ge0. \nonumber
		\end{eqnarray}
		Next we show that $\hat{p}_2 = 0$ is also the best response of the entrant SP given $p_1 = (1-\alpha_0) p^*_1$, which implies $\hat{p}_1 = (1-\alpha_0) p^*_1$ and $\hat{p}_2 = 0$ is a Nash equilibrium of the game. 
		
		The resulting customer mass served by the SPs given the announced price  $\hat{p}_1 = (1-\alpha_0) p^*_1$ and $\hat{p}_2 = 0$ can be written as  $\hat{x}_1 = x^*_1+w^*_t$ and  $\hat{x}_2=0$ . They must satisfy the constraints in (\ref{constraint1}) and (\ref{constraint2}).  Given $p_1 = (1-\alpha_0) p^*_1$, suppose the best response of the entrant SP is some $\tilde{p}_2>0$, and let $\tilde{x}_1$, $\tilde{x}_2$ be the corresponding customer mass. Then we must have $\tilde{x}_1\le \hat{x}_1$, because otherwise constraint (\ref{constraint2}) will be violated. We can also conclude that $\alpha_0\tilde{x}_1 +\tilde{x}_2\ge \alpha_0\hat{x}_1+\hat{x}_2$, because otherwise constrain (\ref{constraint1}) will be violated. Note that the left hand side of constraints (\ref{constraint1}) and (\ref{constraint2}) must also be equal. That gives us the following equality constraint
		\begin{small}
			\begin{eqnarray}
			\label{eqn:constraint}
			p_1 +(1-\alpha_0)g\left(\frac{(1-\alpha_0)x_1}{B}\right)=p_2+ (1-\alpha_0)g\left(\frac{\alpha_0x_1 + x_2}{W}\right). 
			\end{eqnarray}
		\end{small}
		It is obvious when $\tilde{p}_1 = \hat{p}_1$, $\tilde{p}_2>\hat{p}_2$, $\tilde{x}_1\le \hat{x}_1$ and $\alpha_0\tilde{x}_1 +\tilde{x}_2\ge \alpha_0\hat{x}_1+\hat{x}_2$, constraint (\ref{eqn:constraint}) cannot hold at $\tilde{p}_2>0$. So $p_2 = 0$ is the best response to $p_1 = (1-\alpha_0) p^*_1$  which makes it the equilibrium of the game.  It then can be verified that the profit of the SPs and the welfare are identical to the unbundled case.
	\end{proof}
	
	\section{Proof of Theorem \ref{thm:incumbent_entrant2}}
	\begin{proof}
		First consider the case when $\alpha<\alpha_0$.  Let $p^{*}_1,x^{*}_1$ be the equilibrium price and customer mass served by the incumbent SP respectively  in the bundling case. And let $x^*_2$ be equilibrium customer mass served by the entrant SP. Based on the definition of Nash equilibrium, $p^*_1$ is the best response of the incumbent SP given $p_2 = p^*_2$. 
		
		Let $\tilde{p}_1$ be the best response of the incumbent SP given $p_2 = 0$ and $\tilde{x}_1$ be the corresponding customer mass served by the incumbent. Then $\tilde{p}_1$ is the solution to the optimization problem in (\ref{eqn:incumentrant1}) given  $p_2 = 0$. It can be shown that $\tilde{p}_1 \tilde{x}_1$ is the same as the profit of the incumbent SP in the unbundled case. And given the condition $\alpha<\alpha_0$, we have $\tilde{x}_2 > 0$.
		
		What we need to show is that $\tilde{p}_1 \tilde{x}_1\le p^{*}_1 x^{*}_1$.	We prove this statement by contradiction. Suppose we have $\tilde{p}_1 \tilde{x}_1> p^{*}_1 x^{*}_1$. We consider two cases when the entrant SP change his strategy from $p_2 = 0$ to $p_2 = p^*_2$. First case is $p^*_2 +g\left(\frac{\alpha \tilde{x}_1}{W}\right)\ge P(\tilde{x}_1)$, which means the entrant SP can attract no customers by setting his price to $p^*_2$. That means $w_2$ changes from $\tilde{w}_2$ to $0$. Under this condition, the incumbent SP can  still increase its announced price to some $\tilde{p}_1'$ to satisfy the constraints in (\ref{constraint1}) without changing  the customer mass served $x_1$. Consequently we have $\tilde{p}_1'\tilde{x}>\tilde{p}_1\tilde{x}>p^{*}_1 x^{*}_1$, which violates that $p^*_1$ is the best response given $p_2 = p^*_2$.
		
		Another case is when $p^*_2 +g\left(\frac{\alpha \tilde{x}_1}{W}\right)< P(\tilde{x}_1)$, which means it is possible for the entrant SP to serve positive amount of customers. Let $\tilde{x}_2'$ be the customer mass that satisfies  $p^*_2 +g\left(\frac{\alpha \tilde{x}_1+\tilde{x}_2'}{W}\right)= P(\tilde{x}_1+\tilde{x}_2')$. Since $p^*_2>0$, we must have $\tilde{x}_2'<\tilde{x}_2$. That implies the incumbent SP can still increase its announced price to $\tilde{p}_1'>\tilde{p}_1$ without changing $x_1$. Therefore we have $\tilde{p}_1'\tilde{x}>\tilde{p}_1\tilde{x}>p^{*}_1 x^{*}_1$, which also contradicts the optimality of $p^*_1$ given $p_2 = p^*_2$. In conclusion we always have $\tilde{p}_1 \tilde{x}_1\le p^{*}_1 x^{*}_1$. That means when $\alpha<\alpha_0$, the incumbent SP can get more profit by bundling than competing without bundling. Similar argument can be used to show that the entrant SP can gain positive profit when $\alpha<\alpha_0$.
		
		Then we consider the case when $\alpha>\alpha_0$. We have already shown that when $\alpha = \alpha_0$, the entrant SP announces $0$ price and serves no customers in equilibrium in the proof of Theorem \ref{thm:incumbent_entrant1}. To show that the entrant get $0$ profit when $\alpha>\alpha_0$, we use $p_1^{(0)}$ and $x_1^{(0)}$ to denote the equilibrium price and customer mass of the incumbent SP when $\alpha = \alpha_0$. Assume when $\alpha>\alpha_0$, we end up with equilibrium price and customer mass of entrant SP $\tilde{p}_2>0, \tilde{x}_2>0$.  If $\tilde{x}_1+\tilde{x}_2 \ge x_1^{(0)} $,  the delivered price get lowered, thus we have 
		\begin{equation}
		\label{eqn:inequality1}
		\alpha\tilde{x}_1+\tilde{x}_2 < \alpha_0 x_1^{(0)}.
		\end{equation}
		 Then we must have $\tilde{x}_1<x_1^{(0)}$, which implies 
		 \begin{equation}
		 \label{eqn:inequality2}
		(1-\alpha)\tilde{x}_1<(1-\alpha_0)x_1^{(0)}.
		 \end{equation} 
		Adding (\ref{eqn:inequality1}) and (\ref{eqn:inequality2}), we get contradiction to our assumption that $\tilde{x}_1+\tilde{x}_2 \ge x_1^{(0)} $. 	In another case, if $\tilde{x}_1+\tilde{x}_2 < x_1^{(0)} $, we can use similar argument to find contradiction. So we can conclude that when $\alpha>\alpha_0$, the entrant SP can not serve customers with positive price, which gives the entrant SP $0$ profit. That means the game can be reduced to the following optimization problem. 
		\begin{eqnarray}
		\label{eqn:equivalentopt}
		\max_{p_1}&&{p_1}x^l\\
		{\rm s.t.} && p_1+g\left(\frac{x^l}{B}\right)= P(x^l+x^u),\nonumber\\
		&& g\left(\frac{ x^u}{W}\right) = P(x^l+x^u),\nonumber\\
		\label{additionalconstrain}
		&& \frac{x^l}{x^u} = \frac{1-\alpha}{\alpha},\\
		&&  p_1\ge0. \nonumber
		\end{eqnarray}
		Comparing the optimization problem in (\ref{eqn:equivalentopt}) with that in (\ref{eqn:priceoptimization}), the only difference is the additional constraint (\ref{additionalconstrain}). That means the optimization in the bundling case will end up with the profit no greater than that in the unlicensed case. 
	\end{proof}

	\section{Proof of Theorem \ref{thm:exclusiveprofit}}
	\begin{proof}
		The exclusive use case can be formulated as following in the linear case.
		\begin{eqnarray}
		\label{eqn:exclusive}
		\max_{p_i}&&p_i x_i\\
		{\rm s.t.} && p_1+\frac{x_1}{B}= 1-(x_1+x_2),\nonumber\\
		&&  p_2+ \frac{ x_2}{W} = 1-(x_1+x_2),\nonumber\\
		&&  p_i\ge0, i=1,2. \nonumber
		\end{eqnarray}
		In the linear case, we can find the Nash equilibrium explicitly by finding each other's best response function. The equilibrium price and customer mass can be written as
		\begin{eqnarray}
		&&p^*_1 = \frac{2+2B+W}{4+4B+4W+3BW},\;x^*_1 = \frac{B(1+W)}{1+B+W}p^*_1,\nonumber\\
		&&p^*_2 = \frac{2+B+2W}{4+4B+4W+3BW},\;x^*_2 = \frac{W(1+B)}{1+B+W}p^*_2.\nonumber
		\end{eqnarray}
		It is easy to see that when $\alpha = 0$, bundling is equivalent to the exclusive use case. As a result to prove the theorem, we only need to show that $\lim\limits_{\alpha\to0}\frac{\partial p_1x_1}{\partial \alpha}>0$ for some $B$ and $W$. 
		
		Taking derivative directly we have
		\begin{footnotesize}
			\begin{equation}
			\lim\limits_{\alpha\to0}\frac{\partial p_1}{\partial \alpha}=\frac{-\left[4(1+W)+B^2(8+9W) +B(12+17W+6W^2)\right]}{(4+4B+4W+3BW)^2},\nonumber
			\end{equation}
		\end{footnotesize}
		\begin{footnotesize}
			\begin{equation}
			\lim\limits_{\alpha\to0}\frac{\partial x_1}{\partial \alpha}= \frac{B(1+W)\left[B^2(8+3W)+B(20+19W)+4(3+5W+2W^2)\right]}{(1+B+W)(4+4B+4W+3BW)^2} .\nonumber
			\end{equation}
		\end{footnotesize}
		And we can then find the derivative of profit $	\lim\limits_{\alpha\to0}\frac{\partial p_1x_1}{\partial \alpha}= \frac{\partial p_1}{\partial \alpha} x^*_1+\frac{\partial x_1}{\partial \alpha}p^*_1$.
		\begin{footnotesize}
			\begin{equation}
			\label{eqn:profit_dif}
			\lim\limits_{\alpha\to0}\frac{\partial p_1x_1}{\partial \alpha}=\frac{B(1+W)(2+B+W)\left[2(1+B+W)(4+4W-3BW)\right]}{(1+B+W)(4+4B+4W+3BW)^3}.
			\end{equation}
		\end{footnotesize}
		It can be seen that in (\ref{eqn:profit_dif}), everything is positive if the condition $4+4W-3BW>0$ holds. That gives us the exact condition $B<\frac{4(1+W)}{3W}$.
	\end{proof}
	
	\section{Proof of Theorem \ref{thm:exclusiveSW}}
	\begin{proof}
		First consider the customer welfare. Based on the definition, customer welfare is an non decreasing function in the total customer mass served. Since when $\alpha = 0$,  bundling is equivalent to the exclusive use case, we need to show $\lim\limits_{\alpha\to0}\frac{\partial( x_1+x_2)}{\partial \alpha}>0$. We have 
		\begin{footnotesize}
			\begin{eqnarray}
			\label{eqn:cs_dif}
			\lim\limits_{\alpha\to0}\frac{\partial( x_1+x_2)}{\partial \alpha}= \frac{2B(2+B+2W)(2+3W)}{(4+4B+4W+3BW)^2}>0\nonumber
			\end{eqnarray}
		\end{footnotesize}
		
		Similar to the proof of customer surplus, we only need to show that $\lim\limits_{\alpha\to0}\frac{\partial SW}{\partial \alpha}>0$ for the social welfare part. We have 
		\begin{footnotesize}
			\begin{eqnarray}
			\label{eqn:sw_dif}
			\lim\limits_{\alpha\to0}\frac{\partial SW}{\partial \alpha}&=& \frac{\partial p_1x_1}{\partial \alpha}+ \frac{\partial p_2x_2}{\partial \alpha}+(x^*_1+x^*_2)\left(\frac{\partial x_1}{\partial \alpha}+ \frac{\partial x_2}{\partial \alpha}\right) \nonumber\\
			&=& \frac{B[-3W(1+W)B^2+4W^2+7W+4+\frac{4}{3}(1+W)]}{(1+B+W)(4+4B+4W+3BW)^3}\nonumber\\
			&\ge& \frac{B[4(1+W)-3W][(1+W)B+\frac{1}{3}]}{(1+B+W)(4+4B+4W+3BW)^3}>0.\nonumber
			\end{eqnarray}
		\end{footnotesize}
		As a result, there always exist $\alpha>0$ such that both of the customer welfare and social welfare of bundling is better.
	\end{proof}
	
	\section{Proof of  Theorem \ref{thm:equilibrium_existence}}
	\begin{proof}
		We use Debreu's theorem \cite{debreu1952social} to prove the existence of Nash equilibrium. First, for each of the SPs, the strategy space is $p_i\in[0,P(0)]$, where $P(x)$ is the inverse demand function. It is compact and convex. 
		
		Next we show that $u_i(p_i,p_{-i})$ is continuous in $(p_i,p_{-i})$. Given the strategy vector $\bf{p}$ of the SPs, the customer mass vector $\bf{x}$ forms a Wardrop equilibrium if and only if $\bf{x}$ is the solution to the following optimization problem \cite{menache2011network}.
		\begin{small}
			\begin{align*}
			\label{eqn:wardrop_optimization}
			&\min_{\bf{x}} \int_{0}^{\sum\limits_{i}x_i} \alpha g\left(\frac{\alpha z}{W}\right) dz +\sum\limits_{j}   \int_{0}^{x_j} (1-\alpha)g\left(\frac{(1-\alpha) z}{B_j}\right) +p_j dz \nonumber\\
			&{\rm s.t.}  \sum\limits_j x_j = Q,\;{ \bf{x}}\ge0. \nonumber
			\end{align*}
		\end{small}
		Because the objective function is strictly increasing, given any strategy $\bf{p}$, the Wardrop equilibrium customer mass $\bf{x}$ is unique\cite{menache2011network}. By maximum theorem, the resulting customer mass $\bf{x}$ is continuous in $\bf{p}$. That implies that the profit of each SP is continuous in $\bf{p}$.
		
		At last we show that the profit  $u_i(p_i,p_{-i})$ is a concave function in $p_i$. We know that $\frac{\partial^2 u_i(p_i,p_{-i})}{\partial^2 p_i} =  \frac{\partial^2 u_i(p_i,p_{-i})}{\partial^2 x_i}  \frac{\partial^2x_i}{\partial^2 p_i}$. When we fix  $p_{-i}$, it is easy to show that $\frac{\partial^2 u_i(p_i,p_{-i})}{\partial^2 x_i} \le 0$ by substituting the constraint  (\ref{eqn:constraint_n_incumbent}) into the objective function. The only thing we need to show is that $x_i$ is a convex function in $p_i$. We can directly find that $\frac{\partial^2p_i}{\partial^2 x_i}\le 0$ from the constraints in (\ref{eqn:constraint_n_incumbent}). Next we show that $p_i$ is strictly decreasing on $x_i$, which can guarantee that $x_i$ is convex in $p_i$
		
		Fix $p_{-i}$ and find the Taylor expansion of the constraints around the Wardrop equilibrium point $\bf x^*$ for each incumbent SP 
		\begin{small}
			\begin{align*}
			&\Delta p_i +\frac{(1-\alpha)^2}{B_i}g'\left(\frac{(1-\alpha)x^*_i}{B_i}\right)\Delta x_i\nonumber\\
			&+\frac{\alpha^2}{W} g'\left(\frac{\alpha \sum\limits_kx^*_k}{W}\right)\sum\limits_k\Delta x_k =P'\left(\sum\limits_kx^*_k\right)\sum\limits_k\Delta x_k, \nonumber\\
			&\frac{(1-\alpha)^2}{B_j}g'\left(\frac{(1-\alpha)x^*_j}{B_j}\right)\Delta x_j\nonumber\\
			&+\frac{\alpha^2}{W} g'\left(\frac{\alpha \sum\limits_kx^*_k}{W}\right)\sum\limits_k\Delta x_k =P'\left(\sum\limits_kx^*_k\right)\sum\limits_k\Delta x_k, \forall j\ne i. \nonumber
			\end{align*}
		\end{small}
		Given these linear equations we can find
		\begin{equation}
		\label{eqn:delta_p}
		\Delta p_i = -\frac{\prod_{j=1}^{M} a_j +b\left[\sum\limits_{j\ne i} \frac{\prod_{k=1}^{M} a_k}{a_j} \right]+b\prod_{j\ne i} a_j}{\prod_{j\ne i} a_j +b\left[\sum\limits_{j\ne i} \frac{\prod_{k\ne i} a_k}{a_j} \right] } \Delta x_i,
		\end{equation}
		where $a_j = \frac{(1-\alpha)^2}{B_j}g'\left(\frac{(1-\alpha)x^*_j}{B_j}\right)$ and $b =\frac{\alpha^2}{W} g'\left(\frac{\alpha \sum\limits_kx^*_k}{W}\right) -P'\left(\sum\limits_kx^*_k\right)$. It can be seen from (\ref{eqn:delta_p}) that $p_i$ is strictly decreasing on $x_i$, which implies  $x_i$ is convex in $p_i$. And as a result $u_i(p_i,p_{-i})$ is a concave function in $p_i$. Then the existence of Nash equilibrium follows Debreu's results.
		
		Next we show when $B_i > B_j$, the profit of $i$ is higher than that of $j$.  Consider the strategy 'announce the same price as SP $j$' for SP $i$. Since $B_i>B_j$ and they are sharing the unlicensed band $W$, SP $i$ can then serve more customers than $j$ and as a result gain more profit than $j$. Because this strategy may not be the best response of SP $i$, that means his profit can be even higher. So we come to the conclusion that SP $i$ gets more profit than $j$.
	\end{proof}
	
	\section{Proof of Theorem \ref{thm:supermodular}}
	\begin{proof}
		Without loss of generality, we assume the inverse demand function and congestion function are in the form of $P(x) = A-k_1x$ and $g(x) = k_2 x$ respectively. The $M$ constraints can be written as a matrix form 
		\begin{equation}
		\label{eqn:constraint_matrix}
		{\bf p}	+ {\bf Mx} = A {\mathbbm{1}}  ,
		\end{equation}
		where $\bf p$ is the announced price vector,  $\bf M$ is the matrix with ${k_1+\frac{k_2\alpha^2}{W}+\frac{k_2(1-\alpha)^2}{B_1}}$ on the diagonal and ${{k_1+\frac{k_2\alpha^2}{W}}} $ elsewhere, $\bf x$ is the vector of customer mass and $\mathbbm{1}$ is an all-ones vector. If we find the customer mass vector $\bf x$ according to the price vector ${\bf p}$, the solution is ${\bf x} = {\bf M}^{-1} (A{\mathbbm{1}} -{\bf p})$. In an equilibrium, the announced price should be each other's best response. And to show it is a supermodular game, we need to show that the utility function of each SP has increasing differences. Under the linear case, it is sufficient to show ${\bf M}^{-1}$ has positive entries on the diagonal and negative entries elsewhere. The inverse matrix ${\bf M}^{-1} = \frac{{\rm adj}({\bf M})}{\det({\bf M})}$, where ${\rm adj}({\bf M})$ is the adjugate of matrix ${\bf M}$ and $\det ({\bf M})$ is the determinant of  ${\bf M}$.
		
		First we show that $\det ({\bf M})>0$. We can rewrite matrix ${\bf M}$ as
		\begin{equation}
		\label{eqn:matrix_M}
		{\bf M} = {\rm diag}\left(\frac{k_2(1-\alpha)^2}{B_i}\right) + {\mathbbm{1}}{\mathbbm{1}}^T\left(k_1+\frac{k_2\alpha^2}{W}\right).
		\end{equation}
		The determinant is as a result 
		\begin{equation}
		\label{eqn:determinant}
		\det({\bf M})=\left[1+\left(k_1+\frac{k_2\alpha^2}{W}\right)\frac{\sum\limits_{i=1}^{M}B_i}{k_2(1-\alpha)^2} \right]\prod_{i=1}^{M} \frac{k_2(1-\alpha)^2}{B_i} .
		\end{equation}
		It can be seen that each term in (\ref{eqn:determinant}) is positive, which implies a positive determinant. Then we need to show that  ${\rm adj}({\bf M})$ has positive entries on the diagonal and negative entries elsewhere. On the diagonal, we have  ${\rm adj}({\bf M})_{i,i} = \det({\bf M}_{-i,i})$, where ${\bf M}_{-i,i}$ is the matrix ${\bf M}$ eliminating the $i$th row and column ($i,i$ minor of ${\bf M}$).  It can be seen that ${\bf M}_{-(i,i)}$ is of the similar form of ${\bf M}$ and has a positive determinant. 
		
		For other entries of the adjugate matrix, we have ${\rm adj}({\bf M})_{i,j} = (-1)^{i+j}\det({\bf M}_{-(j,i)}),i\ne j$.  We want to use elementary row operations to transform the matrix ${\bf M}_{-(j,i)}$ into the following form 
		\begin{equation}
		\label{eqn:matrix_determinant}
		{\bf M}'_{-(j,i)}=\left[ {\begin{array}{*{20}{c}}
			G & {G} & {...} &{...}& {G}  \\
			{G} & {H_1} & {...} & & {G}  \\
			{\vdots} & {} & {H_2} & & {\vdots}  \\
			{G} & {G} & {} & {\ddots} & {G}  \\
			{G} & {G} & {...}& {...} & {H_{M-1}}  \\
			\end{array}} \right],
		\end{equation}
		where $G = \frac{k_2 \alpha^2}{W}+k_1$, and $H_i$ is of the form $G+R_i$ with $R_i>0$. It can be verified that the determinant of a matrix of form (\ref{eqn:matrix_determinant}) is strictly positive. Then we need to determine how many elementary row operations we need to transform the matrix ${\bf M}_{-(j,i)}$ into ${\bf M}'_{-(j,i)}$. It can be verified that 
		\begin{enumerate}
			\item  $i+j$ is even and $i\ne 1$,  $j\ne 1$, 3 elementary row operations are needed;
			\item  $i+j$ is even and $i=1$ or  $j=1$, one elementary row operation is needed;
			\item  $i+j$ is odd and $i\ne 1$,  $j\ne 1$, 2 elementary row operations are needed;
			\item  $i+j$ is odd and $i=1$ or  $j=1$, no elementary row operations are needed;
		\end{enumerate}
		In any of the four cases, we end up with ${\rm adj}({\bf M})_{i,j}<0, i\ne j$. 
		As a result we can conclude that the game is supermodular.
		
	\end{proof}

	\section{Proof of Theorem \ref{thm:linear_alpha}}
	\begin{proof}
		First consider an auxiliary case with $M$ incumbent SPs and one entrant SP. In this case, if bundling is not adopted, the price on unlicensed band goes to $0$.  Let $p_i^*$ be the equilibrium price and $x_i ^ *$ be the resulting customer mass on licensed band of incumbent $i$  in the unbundled case.  Let $w_t$ be the total customer mass on the unlicensed band. Let $\alpha_0 = \frac{w_t}{\sum\limits_{i\in\mathbb{I}} x_i +w_t} $, then with similar argument as proving Theorem \ref{thm:incumbent_entrant2}, it can be shown that when  $\alpha < \alpha_0$, the profit of incumbent SPs is higher in the bundling case than the unbundled case. Details are omitted here. 
		
		Then we need to show that when the entrant SP leaves the market, the profit of  incumbent SPs is non decreasing for the bundling case. When the entrant SP leaves the market,  it can be viewed as that the entrant SP is announcing a very high price so that is serving no customers. Given the supermodular property of the game, the best response of any one of the incumbent SPs $i$ is to announce a higher price. Given the supermodular property, other incumbents should also increase their announced prices. Since the increasing price of other incumbents only makes the profit of SP $i$ non-decreasing. As a result each incumbent SP's profit is non-decreasing.
	\end{proof}
	
	\section{Proof of Theorem \ref{thm:expanding band}}
	\begin{proof}
		When $W\to \infty$, we can the congestion of subscribing to SP $i$ for bundling case can be written as $(1-\alpha)g\left(\frac{(1-\alpha)x_i}{B_i}\right)$. When the congestion function is is of the form $g(x) = kx^p$, it becomes $k(1-\alpha)\left(\frac{(1-\alpha)x_i}{B_i}\right)^p$ which is equivalent to expanding the licensed band by a factor $\frac{1}{(1-\alpha)^{p+1}}$ when unlicensed band is not available. 
	\end{proof}
	
	\section{Proof of Theorem \ref{thm:n_incumbent_sw}}
	\begin{proof}
		Let the total licensed bandwidth be $B_t$. In the symmetric case, each SP has as licensed bandwidth $\frac{B_t}{W}$. In the symmetric case with no entrant SP, each incumbent is able to announce a positive price to serve the customers. We can find the equilibrium price for each incumbent SP,
		\begin{equation}
		\label{eqn:equilibrium_price_n_incumbent}
		p = \frac{(1-\alpha)^2W}{2(1-\alpha)^2W+\frac{M-1}{M}(\alpha^2B_t+B_tW)}.
		\end{equation}
		As a result, the total customer mass served by the SPs is
		\begin{small}
			\begin{eqnarray}		
			Q& =& \frac{BW}{\left[2(1-\alpha)^2W+\frac{M-1}{M}(\alpha^2B_t+B_tW)\right]} \nonumber \\
			\label{eqn:equilibrium_customer_n_incumbent2}
			&& \times 
			\left[\frac{M-1}{M}+\frac{(1-\alpha)^2W}{M((1-\alpha)^2W+\alpha^2B_t+B_tW)}\right].
			\end{eqnarray}
		\end{small}
		It is not hard to show that the announced price in (\ref{eqn:equilibrium_price_n_incumbent}) is increasing in $W$ by dividing both the numerator and denominator by $W$. And for the customer mass in (\ref{eqn:equilibrium_customer_n_incumbent2}), note that it is the product of two parts which are both increasing with $W$. That implies that the customer mass served also increases with $W$. Consequently, the profit gained by the SPs should be increasing with $W$. In addition, because the customer surplus in increasing with the total customer mass served, it should also be increasing with $W$. The social welfare is the sum of customer surplus and profit of SPs, as  a result it is also increasing with $W$. 
	\end{proof}
	
	\section{Proof of Theorem \ref{thm:alpha_one_in_one_en}}
	\begin{proof}
		In the linear case with $W\to \infty$, we can find the equilibrium announced price and the resulting customer mass served by the incumbent SP,
		\begin{small}
			\begin{equation}
			\label{eqn:control_alpha_equilibrium_rice}
			\lim\limits_{W\to\infty}p_1 = \frac{(1-\alpha)^2}{4(1-\alpha)^2+3B},\lim\limits_{W\to\infty}x_1 = \frac{B}{4(1-\alpha)^2+3B}.
			\end{equation} 
		\end{small}
		So the profit of incumbent SP in this case is 
		\begin{equation}
		\label{eqn:control_alpha_equilibrium_profit}
		\lim\limits_{W\to\infty}p_1 x_1 =\frac{B(1-\alpha)^2}{\left[4(1-\alpha)^2+3B\right]^2} = \frac{B}{\left[4(1-\alpha)+\frac{3B}{1-\alpha}\right]^2},
		\end{equation}
		which is obviously a concave function in $\alpha$. The optimal solution is then $\alpha^* = 1-\frac{\sqrt{3B}}{2}$ if $B\le \frac{4}{3}$ and $\alpha^* = 0$ if $B > \frac{4}{3}$.
		
	\end{proof}
	
	\section{Proof of Theorem \ref{thm:optimize_profit}}
	\begin{proof}
		To simplify the calculation,let $ \beta = (1-\alpha)^2$. Then we have 
		\begin{equation}
		\label{eqn:profit_beta}
		\lim\limits_{W\to\infty} \frac{d x_ip_i}{d \beta} = \frac{B_t\left[-2\beta^2-3\beta^2\frac{(M-1)B_t}{M}+B_t^3(\frac{M-1}{M})^2\right]}{\left[2\beta+B_t\frac{M-1}{M}\right]^3 (\beta+B_t)}.
		\end{equation}
		
		Just focus on the numerator of (\ref{eqn:profit_beta}), we can find a function of $\frac{\beta}{B_t}$, 
		\begin{equation}
		\label{eqn:f_beta}
		f\left(\frac{\beta}{B_t}\right) = -2\left(\frac{\beta}{B_t}\right)^3-3\frac{M-1}{M}\left(\frac{\beta}{B_t}\right)^2 +\left(\frac{M-1}{M}\right)^2.
		\end{equation}
		Note that $f\left(\frac{M-1}{2M}\right) \ge 0$ and $f\left(\frac{1}{2}\right) \le 0$, since $f\left(\frac{\beta}{B_t}\right) $ is a continuous function, there must exist some $k^* \in \left[\frac{M-1}{2M},\frac{1}{2}\right]$ such that $f(k^*) = 0$. Let $B_{th} = \frac{1}{k^*}$ 
		
		If $B_t\le B_{th}$, we can find that $\beta^* = \frac{B_t}{B_{th}} \in [0,1]$, which implies a feasible solution for $\alpha=1-\sqrt{\frac{B_t}{B_{th}}}$ that maximizes the profit of incumbents.  The resulting announced price and customer mass served are then 
		\begin{equation}
		\label{eqn:small_B_solution}
		p_i = \frac{k^*}{2k^*+\frac{M-1}{M}} ,\; x_i = \frac{1-p_i}{Mk^*+M},\; \forall i.
		\end{equation}
		
		In linear symmetric case, we have $SW = \frac{1}{2}(Mx_i)^2+Mp_ix_i$. Applying the solutions in (\ref{eqn:small_B_solution}), we have
		\begin{equation}
		\label{eqn:sw_solution}
		SW = \frac{(k^*+\frac{M-1}{M})\left[k^*+\frac{M-1}{M}+2k^{*2}+2k^*\right]}{2(2k^*+\frac{M-1}{M})^2(k^*+1)^2}
		\end{equation}
		
		To simplify the notation, let $\frac{M-1}{M} = t_M$ and we have $t_M\in[\frac{1}{2},1)$. Then equation (\ref{eqn:f_beta}) can be rewritten as 
		\begin{equation}
		\label{eqn:equivalent_sol}
		-2k^{*3}-3t_Mk^{*2}+t_M^2 = 0.
		\end{equation}
		By taking total derivative of (\ref{eqn:equivalent_sol}), we have
		\begin{equation}
		\label{eqn:dk_dt}
		\frac{d k^*}{dt_M} = \frac{2t_M-3k^{*2}}{6k^{*2}+6k^*t_M}
		\end{equation}
		Using (\ref{eqn:sw_solution}) and (\ref{eqn:dk_dt}), we can find the derivative of social welfare with respect to $t_M$
		\begin{small}
			\begin{eqnarray}
			\label{eqn:sw_t}
			\frac{d SW}{dt_M}=&&\frac{-(6k^{*4}+18k^{*2}+2k^*)(-2k^{*3}-3t_Mk^{*2}+t_M^2)}{6k^*(1+k^*)^3(k^*+t_M)(2k^*+t_M)^3} \\
			&&-\frac{6k^{*6}(3t_M-1)+k^{*4}(31t_M-1)+5k^{*3}t_M+12k^{*7}}{6k^*(1+k^*)^3(k^*+t_M)(2k^*+t_M)^3}\nonumber\\
			&&- \frac{12k^{*7}+3k^{*5}(6-t_M)+8k^*t_M^3+t_M^3 (2t_M-3k^{*2})}{6k^*(1+k^*)^3(k^*+t_M)(2k^*+t_M)^3}\nonumber
			\end{eqnarray}
		\end{small}
		By equation (\ref{eqn:equivalent_sol}), the first part in (\ref{eqn:sw_t}) is zero and it can be easily verified the rest part of (\ref{eqn:sw_t}) is negative. So the social welfare decreases with the $t_M$. Since $t_M$ is an increasing function in $M$, social welfare decreases with $M$.
		
		On the other hand, if $B_t>B_{th}$, we must have $\beta^* = 1$ or $\alpha^* = 0$, because the derivative is always increasing with $\beta$. The resulting announced price and customer mass are 
		\begin{equation}
		\label{eqn:large_B_solution}
		p_i = \frac{1}{2+B_t\frac{M-1}{M}},\; x_i = \frac{1-p_i}{\frac{M}{B_t}+M},\;\forall i.
		\end{equation}
		Similarly, we can let $t_M= \frac{M-1}{M}$ and find the derivative
		\begin{equation}
		\label{eqn:sw_large_B}
		\frac{d SW}{dt_M}=\frac{B_t^3(1-t_M)}{(1+B_t)^2(2+B_tt_M)^3}>0
		\end{equation}
		As a result social welfare increases with $M$.
	\end{proof}
	
	\section{Proof of Theorem \ref{thm:sw_gap}}
	\begin{proof}
		When $M\to \infty$, we have $\lim\limits_{W\to\infty} \frac{M-1}{M} = 1$ and the solution to (\ref{eqn:f_beta}) becomes $2$.
		
		When $B_t\le2$, the optimal $\alpha$ that maximize the profit is $\alpha = 1-\sqrt{B_t/2}$, the resulting announced price by each incumbent goes to $\frac{1}{4}$ and consequently the social welfare is $SW = \frac{1}{4}$. Since it is easy to find that the social welfare in the social optimal case is $\frac{1}{2}$, the social welfare gap is $Gap = \frac{1}{4}$.
		
		When $B_t>2$, the optimal $\alpha$ becomes $0$. In this case, the resulting social welfare is $SW = \frac{B_t^2+2B_t}{2(2+B_t)^2}$. The social welfare gap is then $Gap = \frac{1}{2+B_t}$. These two case give us a simplified form in equation (\ref{eqn:sw_gap}).
	\end{proof}
	
\end{document}